\pdfoutput=1

\newif\ifdraft\draftfalse   %
\newif\ifanon\anontrue      %
\newif\ifcamera\camerafalse %
\newif\iflongrefs\longrefsfalse %
\newif\ifsooner\soonerfalse %
\newif\iflater\laterfalse   %
\newif\iffull\fullfalse   %
\newif\ifneedspace\needspacefalse %
\newif\ifhighlightnewtext\highlightnewtextfalse %
\makeatletter \@input{texdirectives.tex} \makeatother

\documentclass[runningheads,orivec]{llncs} %
\usepackage[T1]{fontenc}
\usepackage[numbers,sort&compress]{natbib} %
\usepackage{amsmath,amssymb} %
\usepackage[dvipsnames]{xcolor}          %
\usepackage[hidelinks]{hyperref} %
\usepackage{orcidlink} %
\renewcommand{\orcidID}[1]{\,\orcidlink{#1}}
\ifcamera\else %
\hypersetup{bookmarksdepth=3,bookmarksnumbered=true,bookmarksopen=true,bookmarksopenlevel=1}
\fi
\usepackage{stmaryrd}
\usepackage{mathpartir}
\usepackage{mathtools}
\usepackage[capitalise]{cleveref}

\crefname{section}{\S}{\S\S}
\Crefname{section}{\S}{\S\S}
\crefformat{section}{\S#2#1#3}
\Crefformat{section}{\S#2#1#3}

\AtBeginDocument{%
  }

\usepackage{booktabs}   %

\usepackage[normalem]{ulem}

\definecolor{dkblue}{rgb}{0,0.1,0.5}
\definecolor{dkgreen}{rgb}{0,0.4,0}
\definecolor{dkred}{rgb}{0.6,0,0}
\definecolor{dkpurple}{rgb}{0.7,0,1.0}
\definecolor{purple}{rgb}{0.9,0,1.0}
\definecolor{olive}{rgb}{0.4, 0.4, 0.0}
\definecolor{teal}{rgb}{0.0,0.4,0.4}
\definecolor{azure}{rgb}{0.0, 0.5, 1.0}
\definecolor{gray}{rgb}{0.5, 0.5, 0.5}
\definecolor{dkgray}{rgb}{0.3, 0.3, 0.3}
\definecolor{orange}{rgb}{1.0, 0.5, 0.0}
\definecolor{lhorange}{rgb}{1.0, 0.8, 0.6}

\usepackage{xspace}

\newcommand*{\EG}{e.g.,\xspace}
\newcommand*{\IE}{i.e.,\xspace}

\newcommand{\ret}{\ensuremath{\mathsf{ret}}\xspace}
\newcommand{\bind}{\ensuremath{\mathsf{bind}}\xspace}
\newcommand{\result}[1]{\ensuremath{\mathsf{val}(#1)}}

\newcommand\fstar{F$^{\star}$\xspace}

\newcommand{\ArrowType}[3]{\mathsf{#1}(\mathsf{#2},\mathsf{#3})}
\newcommand{\MonadType}[2]{\mathsf{#1}(\mathsf{#2})}
\newcommand{\StTArrType}[4]{\StArrT(\mathsf{#1})(\mathsf{#2},\mathsf{#3},\mathsf{#4})}
\newcommand{\StTType}[3]{\StT(\mathsf{#1})(\mathsf{#2},\mathsf{#3})}
\newcommand{\Arr}{\mathsf{Arr}}
\newcommand{\SPArr}{\mathsf{SPArr}}
\newcommand{\arr}[2][]{\mathsf{arr}\IfBlankF{#1}{_{\mathsf{#1}}}\IfBlankF{#2}{\,#2}}
\newcommand{\first}[2][]{\mathsf{first}\IfBlankF{#1}{_{\mathsf{#1}}}\IfBlankF{#2}{\,#2}}
\newcommand{\arrcomp}[3][]{#2 \ggg\IfBlankF{#1}{_{\mathsf{#1}}} #3}
\newcommand{\app}{\mathsf{app}}
\newcommand{\op}{\mathrm{op}}
\newcommand{\WPure}{\mathsf{W^{Pure}}}
\newcommand{\WP}{\mathsf{WP}}
\newcommand{\SP}{\mathsf{SP}}
\newcommand{\WSt}{\mathsf{\WP^{St}}}
\newcommand{\WStOp}{\WSt^\op}%
\newcommand{\WStExc}{\mathsf{W^{StExc}}}
\newcommand{\vDWP}{\mathsf{DWP}}
\newcommand{\vDWLP}{\mathsf{DWLP}}
\newcommand{\vAWP}{\mathsf{AWP}}
\newcommand{\vAWLP}{\mathsf{AWLP}}
\newcommand{\vASP}{\mathsf{ASP}}
\newcommand{\vDSLP}{\mathsf{DSLP}}
\newcommand{\vDSP}{\mathsf{DSP}}
\newcommand{\vASLP}{\mathsf{ASLP}}
\newcommand{\DWP}{\mathsf{\WP_{\vDWP}}}
\newcommand{\DWLP}{\mathsf{\WP_{\vDWLP}}}
\newcommand{\AWP}{\mathsf{\WP_{\vAWP}}}
\newcommand{\AWLP}{\mathsf{\WP_{\vAWLP}}}
\newcommand{\DWPSt}{\mathsf{\WP^{St}_{\vDWP}}}
\newcommand{\DWLPSt}{\mathsf{\WP^{St}_{\vDWLP}}}
\newcommand{\AWPSt}{\mathsf{\WP^{St}_{\vAWP}}}
\newcommand{\AWLPSt}{\mathsf{\WP^{St}_{\vAWLP}}}
\newcommand{\SPSt}{\mathsf{\SP^{St}}}
\newcommand{\ASP}{\mathsf{\SP_{\vASP}}}
\newcommand{\DSLP}{\mathsf{\SP_{\vDSLP}}}
\newcommand{\DSP}{\mathsf{\SP_{\vDSP}}}
\newcommand{\ASLP}{\mathsf{\SP_{\vASLP}}}
\newcommand{\ASPSt}{\mathsf{\SP^{St}_{\vASP}}}
\newcommand{\DSLPSt}{\mathsf{\SP^{St}_{\vDSLP}}}
\newcommand{\DSPSt}{\mathsf{\SP^{St}_{\vDSP}}}
\newcommand{\ASLPSt}{\mathsf{\SP^{St}_{\vASLP}}}
\newcommand{\SPost}{\mathsf{SPost}}
\newcommand{\SPostAng}{\mathsf{SPost_{ang}}}
\newcommand{\SPostDem}{\mathsf{SPost_{dem}}}
\newcommand{\Pred}{\mathsf{Pred}}
\newcommand{\PredSt}{\mathsf{Pred^{St}}}
\newcommand{\PredStOp}{\PredSt^\op}
\newcommand{\PrePost}{\mathsf{PrePost}}
\newcommand{\ASPArr}{\Arr_{\vASP}}
\newcommand{\DSLPArr}{\Arr_{\vDSLP}}
\newcommand{\DSPArr}{\Arr_{\vDSP}}
\newcommand{\ASLPArr}{\Arr_{\vASLP}}

\newcommand{\StT}{\mathsf{StT}}
\newcommand{\StArrT}{\mathsf{StArrT}}

\newcommand{\Ladj}{\mathsf{L}}
\newcommand{\Radj}{\mathsf{R}}

\usepackage{listings}
\usepackage[framemethod=tikz]{mdframed}
\usetikzlibrary{arrows.meta}
\newcommand\maybecolor[1]{\color{#1}}
\newcommand\nomaybecolor[1]{}
\colorlet{codecol}{dkgreen}

\ifneedspace
\fi
\definecolor{light-gray}{gray}{0.95}
\usepackage{combelow} %
\usepackage{tikz}
\usepackage{wrapfig}
\usepackage[inline]{enumitem}
\newlist{inlist}{enumerate*}{1}
\setlist[inlist]{label=(\arabic*)}

\ExplSyntaxOn
\msg_redirect_name:nnn { kernel } { variant-same-as-base } { warning }
\ExplSyntaxOff
\usepackage{quiver}

\usetikzlibrary{shapes,arrows}
\tikzset{%
  block/.style    = {draw, thick, rectangle, minimum height = 1.5em,
    minimum width = 3em},
}

\def\Snospace~{\S{}}

\makeatletter
\g@addto@macro\normalsize{%
  \abovedisplayskip2\p@ \belowdisplayskip2\p@
  \abovedisplayshortskip\z@ \belowdisplayshortskip2\p@}
\g@addto@macro\small{%
  \abovedisplayskip2\p@ \belowdisplayskip2\p@
  \abovedisplayshortskip\z@ \belowdisplayshortskip2\p@}
\makeatother

\begin{document}

\title{A Taxonomy of Predicate Transformer Monads for Specifying Monadic Programs}
\titlerunning{A Taxonomy of Predicate Transformer Monads}

\ifanon
\author{}
\institute{}
\else
\author{Abigail Pribisova\inst{1}\orcidID{0000-0002-7462-9384} \and
  Yao Li\inst{2}\orcidID{0000-0001-8720-883X} \and
  Cezar-Constantin Andrici\inst{1}\orcidID{0009-0002-7525-2440} \and \\
  C\u{a}t\u{a}lin Hri\cb{t}cu\inst{1}\orcidID{0000-0001-8919-8081} \and
  Kenji Maillard\inst{3}\orcidID{0000-0001-5554-3203} \and
  Exequiel Rivas\inst{4}\orcidID{0000-0002-2114-624X}}
\institute{MPI-SP, Bochum, Germany\\
  \email{\{abigail.pribisova,cezar.andrici,catalin.hritcu\}@mpi-sp.org} \and
  Portland State University\ifcamera, Portland, OR\fi, USA \quad
  \email{liyao@pdx.edu} \and
  Inria, Nantes, France \quad
  \email{kenji.maillard@inria.fr} \and
  Tallinn University of Technology\ifcamera, Tallinn\fi, Estonia \quad
  \email{exequiel.rivas@ttu.ee}}
\fi

\ifanon\else
\authorrunning{Pribisova et al.}
\fi

\maketitle

\begin{abstract}
Monads of weakest precondition predicate transformers
are the foundation of several verification frameworks for monadic programs.
For specifying monadic computations with nondeterminism and nontermination,
in addition to weakest preconditions, we introduce the first strongest
postcondition monads covering all combinations of angelic or demonic
nondeterminism and partial or total interpretations of nontermination.
For extending these base strongest postcondition monads with side effects
such as mutable state, monad transformers do not yield the expected monads,
so we instead generalize the base monads to arrows and use arrow transformers.
We give a taxonomy of weakest precondition and strongest postcondition monads,
for the base monads and for their versions with state, and show
that the known relationships
between under- and over-approximate triples based on predicate
transformers already hold at the level of the predicate transformers themselves.
We use the predicate transformer monads to define under- and
over-approximate triples and, where possible, Dijkstra monads.
Finally, we relate these two specification styles.

\ifanon\else
\keywords{strongest postcondition \and
weakest precondition \and
monads \and
arrows \and
program logics \and
under-approximation \and
over-approximation}
\fi
\end{abstract}

\section{Introduction}
Program logics provide a formal way to \emph{specify} what programs can or
should do, and to \emph{verify} that programs meet their specifications. For
programs with nondeterminism, a logic usually either takes a demonic
interpretation, requiring all executions to satisfy the specification, or an
angelic one requiring the existence of such an execution. For divergence, a
total interpretation requires termination, while a partial/liberal one allows
divergence. These logics are often presented as deductive systems over {\em
  triples} relating a precondition $p$, a program $c$, and a postcondition $q$.
The semantics of triples can often be phrased via Dijkstra's {\em weakest
  precondition (wp)} or {\em strongest postcondition (sp)} predicate
transformers~\cite{Dijkstra75,Dijkstra97,DijkstraS90}.
For example, the triple $\{p\}c\{q\}$ is valid in partial correctness Hoare
logic~\cite{Hoare69,Apt84} exactly when
$p \Rightarrow \mathit{dwlp} \llbracket c \rrbracket (q)$, where $\mathit{dwlp}$
is the demonic, liberal $\mathit{wp}$ variant~\cite{Dijkstra97}. In fact, this
logic admits a second such characterization~\cite{VerschtK25}: the triple is
also valid exactly when $\mathit{asp} \llbracket c \rrbracket(p) \Rightarrow q$,
for $\mathit{asp}$, the angelic, total {\em sp} variant.

These predicate transformer characterizations of triples are not limited to
reasoning about correctness. O'Hearn's incorrectness logic~\cite{OHearn20},
which specifies which states a program \emph{can} reach (\EG error states), uses
the same transformer with the implication reversed; the triple $[p]c[q]$ is
valid exactly when $q \Rightarrow \mathit{asp}\llbracket c \rrbracket(p)$.
Unlike Hoare logic, there is no equivalent {\em wp}
characterization~\cite{VerschtK25}. Incorrectness logic triples are called {\em
  under-approximate}, as $q$ under-approximates the reachable states
$\mathit{asp} \llbracket c \rrbracket(p)$. In contrast, Hoare logic triples are {\em
  over-approximate}, as $q$ over-approximates the reachable states.
For simple imperative programs, \citet{VerschtK25} taxonomize these
under- and over-approximate triples for four {\em wp} variants and four {\em
  sp} variants~\citep{ZhangK22,VerschtK25}, corresponding to the standard
interpretations of nondeterminism and nontermination: angelic-demonic and
total-liberal.\footnote{For {\em sps},
  angelic and demonic refer to whether some or all of the initial states from
  which a final state is reachable must satisfy the precondition. This matters
  when several initial states reach the same final state, even in deterministic
  programs~\citep{VerschtK25}. Similarly, the
  postcondition of a total {\em sp} only contains final states reached from the
  precondition, while that of a liberal {\em sp} also contains final states that
  nothing reaches.}

Triples and {\em wps} have also been used to verify {\em monadic}
programs~\cite{nmb08htt, ynot-icfp08, fstar-pldi13, mumon, dm4all, SwierstraB19,
  SilverZ21, pulsecore,loom}, which can be shallowly embedded in proof
assistants, making verification convenient.
A key insight was that the specifications of monadic programs also have monadic structure,
so they compose with $\ret$ and $\bind$ as the programs do.
\Citet{fstar-pldi13,mumon} showed that {\em wps} form such a monad.
But how do we define {\em wps} modularly, one effect at a time? \Citet{dm4all}
observe that if we start with a base monotone {\em wp} monad
$\MonadType{\WPure}{A} \coloneqq (\mathsf{A} \to \mathbb{P}) \to \mathbb{P}$,
where $\mathbb{P}$ is the type of propositions, we can add effects with the same
monad transformers as for computation monads.
For instance, the state monad
transformer turns $\WPure$ into the usual monad of stateful {\em wps}, of type
$\StTType{\WPure}{S}{A} \coloneq \mathsf{S} \to (\mathsf{A} \times \mathsf{S} \to \mathbb{P}) \to \mathbb{P} \cong
(\mathsf{A} \times \mathsf{S} \to \mathbb{P}) \to (\mathsf{S} \to \mathbb{P})$.
Such {\em wp} monads have been used in verification frameworks including 
\fstar{}~\cite{fstar-pldi13, mumon, dm4free, dm4all} and Loom~\cite{loom}.

Monads of {\em sps} are less understood.
For a start, monotone {\em sps} alone do \emph{not} form a lawful monad. \Citeauthor{dm4all} obtain a base $\mathit{sp}$ monad of type $\MonadType{\SPost}{A}
\coloneqq \mathbb{P} \to (\mathsf{A} \to
\mathbb{P})$ by further restricting the carrier
to predicate transformers whose postconditions imply the precondition.
We observe that the resulting monad can specify not only pure programs, but also
nondeterministic and nonterminating ones, under an angelic and total
interpretation.
In this paper we introduce the first {\em sp} monads for all four
interpretations of nondeterminism and nontermination by instead restricting the carrier
to standard classes of predicate transformers~\cite{BackW98}: the angelic
monad variants are {\em disjunctive} (\IE preserve potentially infinite non-empty
disjunctions), demonic ones {\em conjunctive} (\IE preserve such conjunctions),
total ones {\em strict} (\IE $\bot$ precondition results in $\bot$ postcondition),
and liberal ones {\em terminating} (\IE $\top$ precondition results in $\top$ postcondition).
Applying similar restrictions to the monotone {\em wp} monad
gives one {\em wp} monad per interpretation.
While these four {\em wp} monads are submonads of the $\WPure$ monad above,
the four {\em sp} monads are more interesting, since each of them needs
its own $\ret$ and $\bind$ tailored to its carrier restriction.

When also considering effects such as state, a further challenge is
{\em effect modularity}: adding the ability to reason about new effects by
applying a monad transformer, which yields a lawful monad for free. The state
monad transformer, however, does not turn these base $\mathit{sp}$ monads into
the monads we expect for stateful {\em sps}, producing instead:
$\StTType{\SPost}{S}{A} = \mathsf{S} \to \mathbb{P} \to (\mathsf{A} \times \mathsf{S} \to \mathbb{P})$.
Such an $\mathit{sp}$ takes one initial state and, separately, one precondition proposition,
whereas a stateful {\em sp} should take a
precondition that is a predicate over initial states, as in the type
$(\mathsf{S} \to \mathbb{P}) \to (\mathsf{A} \times \mathsf{S} \to \mathbb{P})$.
Underlying this is a more basic challenge: even setting modularity aside,
there was no known way to
construct lawful stateful {\em sp} monads of the expected shape, which need a
restricted carrier and a matching \ret and \bind, separately for each of the
four interpretations.\iflater{}
\fi

We solve both challenges by generalizing our
{\em sp} monads to Hughes arrows with
application~\cite{Hughes00}, since arrow transformers then yield stateful {\em sp}
monads of the expected shape in an effect-modular way.
We start with a base arrow
of type $\ArrowType{\SPArr}{B}{A} \coloneqq (\mathsf{B} \to \mathbb{P}) \to
(\mathsf{A} \to \mathbb{P})$ satisfying a variant of the carrier restrictions above.
By fixing the input type $\mathsf{B}$ to $\mathsf{unit}$,
an arrow with application yields a monad~\cite{LindleyWY11}:
$$\MonadType{\SP}{A} \coloneqq \ArrowType{\SPArr}{unit}{A}
= (\mathsf{unit} \to \mathbb{P}) \to (\mathsf{A} \to \mathbb{P}) \cong \mathbb{P} \to (\mathsf{A} \to \mathbb{P})$$
Moreover, since arrows have an input type, which arrow transformers can extend,
applying the state arrow transformer gives us
$\StTArrType{\SP}{S}{B}{A} = (\mathsf{B} \times \mathsf{S} \to \mathbb{P}) \to (\mathsf{A} \times \mathsf{S} \to \mathbb{P})$,
which, when the input type is fixed, provides the expected shape of a stateful
{\em sp} monad.
Crucially, the resulting stateful monads inherit the carrier restriction and
lawfulness of the base arrow, since the state arrow transformer only adds the
state to the input and output types.

This allows us to give a taxonomy of these base and stateful {\em sp} and {\em
  wp} monads (\autoref{fig:monad-taxonomy}, in \autoref{sec:taxonomy}) and show
that the relationships established by \citet{VerschtK25} between under- and
over-approximate triples
already hold for the
predicate transformers themselves.
Additionally, we use the predicate transformer monads to define under- and
over-approximate triples and, where possible, Dijkstra monads~\cite{fstar-pldi13,
dm4free,dm4all}. A {\em Dijkstra monad} is a monad-like structure of computations
indexed by elements of a specification monad~\cite{dm4all}.
A {\em specification monad} is ordered by specification precision so that $\bind$
is monotone~\cite{KatsumataS13,HasuoGWP15}.

We make the following \textbf{contributions}:
\begin{itemize}[leftmargin=13pt,nosep,itemsep=2pt,label=$\blacktriangleright$]
\item We introduce the first lawful {\em sp} monads covering all four
interpretations of nondeterminism and nontermination.
They are derived from arrows with application
        restricted to preserve non-empty disjunctions (angelic) or conjunctions
        (demonic),
        and $\bot$ (total) or
        $\top$ (liberal).
This allows adding state using arrow transformers,
whereas monad transformers do not yield the expected monads.
We also obtain four base {\em wp} monads by applying similar restrictions directly to
the monotone {\em wp} monad~\cite{dm4all}.
Unlike the {\em wp} monads, which are all submonads of the same monad,
each {\em sp} monad has its own $\ret$ and $\bind$.

\item We recast \citeauthor{VerschtK25}'s taxonomy of triples~\citep{VerschtK25} as a taxonomy of the eight
{\em wp} and {\em sp} monads, and show that the relationships
they establish between under- and over-approximate triples already
hold for predicate transformers, both for the base monads and for their
versions with state.

\item We construct triples (precondition, monadic computation, postcondition) in terms of
monad morphisms from our base computation monad of nondeterminism and
nontermination into our eight {\em wp} and {\em sp} monads.
We also add state and exceptions to both the computation monad and the {\em wp} and {\em sp} monads
and lift the monad morphisms for free, through monad transformers as in
DM4All~\cite{dm4free,dm4all} and through arrow transformers for the {\em sps}.
This, for instance, allows us to define the semantics of incorrectness logic
triples $[p]c[q]$ for monadic programs $c$ as $q \Rightarrow \mathit{asp}\llbracket c \rrbracket (p)$,
analogous to O'Hearn's original definition above~\cite{OHearn20}.
We use the monad morphisms to prove that simple programs satisfy triples for
both correctness and incorrectness in a style inspired by Loom~\cite{loom}, though
focused not on automation but on generality, since we consider both over- and
under-approximation, {\em sps}, and general {\em wps}.

\item We show that our new $\mathit{asp}$ and $\mathit{dslp}$ monads
and all $\mathit{wp}$ monads form specification monads~\cite{dm4all}.
This lets us verify monadic programs by comparing the specification produced by
one of our monad morphisms with the intended specification using the
specification monad's order.
This covers both over- and under-approximation by flipping the order,
and as for triples, we use this to verify simple monadic programs.
As these predicate transformer monads are specification monads, our monad
morphisms targeting them give rise to Dijkstra monads~\cite{dm4all}.
The $\mathit{dsp}$ and $\mathit{aslp}$ monads have $\bind$s that are not
monotone with respect to their natural orders, so they are not specification monads,
which limits how specifications can be weakened and composed.

\item For intended specifications expressed as triples, we show that
the Loom-inspired verification style and the Dijkstra monad-inspired one are
conceptually equivalent for the specification monads based on {\em wps} as well
as over-approximate $\mathit{asp}$ and under-approximate $\mathit{dslp}$.
This requires turning triples into predicate transformers, which we show is
possible for the specification monads above, but not for {\em under}-approximate
$\mathit{asp}$ and {\em over}-approximate $\mathit{dslp}$.

\item
To study the expressive power of our stateful {\em sp} and {\em wp} specification monads,
we compare them to stateful versions of existing specification monads~\cite{dm4all},
such as predicates on results $\MonadType{\Pred}{A} = \mathsf{A} \to \mathbb{P}$
and pairs of pre- and postconditions,
by establishing Galois connections, some of which are isomorphisms.
In particular, we show that the known isomorphism~\cite{Maillard19}
between the predicate monad $\MonadType{\Pred}{A}$ and
the base {\em sp} monad of \citeauthor{dm4all}\ %
extends to isomorphisms between $\mathit{asp}$ and $\mathit{dwlp}$ and
between $\mathit{dslp}$ and $\mathit{awp}$ as specification
monads in the stateful setting.
\item We fully mechanize all definitions and proofs in Rocq.
\end{itemize}

\paragraph{Artifact and use of AI.}
The Rocq development is \ifanon included as anonymous supplementary material.
\else available at \url{https://github.com/lastland/SpecMonad/tree/main/Code}.\fi{} It was developed in part interactively with
the help of AI coding agents (Claude, Codex, and OpenCode), especially for
proofs, which are machine-checked by Rocq. Any AI-generated code that affects
the trusted computing base (definitions and theorem statements) was carefully
reviewed by the authors. AI agents were used to help with \LaTeX{} programming
and to improve some of the text of the paper; we wrote the
original text, checked all proposed improvements, and take full responsibility
for the paper and artifact.

\section{Background: Weakest Precondition Monads}\label{sec:background}

Monadic computations can be specified in terms of weakest precondition
predicate transformers~\cite{fstar-pldi13,mumon}. A $\mathit{wp}$ maps
predicates on the {\em output} called postconditions to predicates on
the {\em input} called preconditions: if the precondition holds,
then the computation is guaranteed to produce a result satisfying the
postcondition. A pure computation of type $\mathsf{A}$ can be
understood as having an output of type $\mathsf{A}$ and an input of
type $\mathsf{unit}$. Therefore, a $\mathit{wp}$ for a pure
computation of type $\mathsf{A}$ is a function from predicates on
$\mathsf{A}$ to predicates on $\mathsf{unit}$, specifically the
following type: %
$(\mathsf{A} \to \mathbb{P}) \to (\mathsf{unit} \to \mathbb{P}) \cong
(\mathsf{A} \to \mathbb{P}) \to \mathbb{P}$.
Prior work~\cite{Jensen78,dm4free} noted that this is the type of a continuation 
monad on the return type $\mathbb{P}$ with the corresponding $\ret$ and $\bind$:
\begin{equation}
  \label{eqn:monadic-operations-wp}
\begin{aligned}
\ret\,v
  &\coloneq \lambda~\mathit{post}.\,\mathit{post}~v,\\
\bind\,m\,f
  &\coloneq \lambda~\mathit{post}.\,
     m~(\lambda a.\,f~a~\mathit{post}).
\end{aligned}
\end{equation}
We call this base $\mathit{wp}$ monad $\WP$.
This monadic structure of $\mathit{wps}$ has been exploited to
naturally compose specifications, and to map monadic
computations to their predicate transformer semantics~\cite{dm4all}.

Predicates over any type are ordered as subsets, or equivalently by pointwise logical
implication, which we write $\sqsubseteq$: 
for $\varphi, \psi : \mathsf{A} \to \mathbb{P}$, $\varphi \sqsubseteq \psi$ iff $\forall (a : \mathsf{A}).~\varphi(a) \Rightarrow \psi(a)$.
In general, one works with \emph{monotone} predicate transformers that
preserve the order relations on
predicates~\cite{DijkstraS90,BackW98}. The monotonicity condition on
{\em wp} is, for all postconditions $q_1,q_2 : \mathsf{A} \to \mathbb{P}$:
\[
  q_1 \sqsubseteq q_2 \Rightarrow \mathit{wp}(q_1) \sqsubseteq \mathit{wp}(q_2).
\]
For example, $\ret\,x$ is a monotone predicate transformer, as the condition
$q_1 \sqsubseteq q_2 \Rightarrow (\ret\,x)(q_1) \sqsubseteq (\ret\,x)(q_2)$
unfolds into
$\left(\forall (a : \mathsf{A}).~q_1(a) \Rightarrow q_2(a)\right) \Rightarrow q_1(x) \Rightarrow q_2(x)$
which trivially holds. $\bind~m~f$ is also a monotone predicate 
transformer, provided that both $m$ and $f$ are monotone predicate 
transformers. 

Moreover, not only predicates are ordered, but also predicate
transformers themselves can be ordered in a pointwise manner:
\[
\mathit{wp}_1 \sqsubseteq \mathit{wp}_2 \quad\mbox{iff}\quad \forall (q : \mathsf{A} \to \mathbb{P}).~\mathit{wp}_1(q) \sqsubseteq \mathit{wp}_2(q).
\]
Then, a natural question is whether the monadic structure preserves
such order on predicate transformers.  For weakest precondition
monads, both $\ret$ and $\bind$ preserve the order
above.

For stateful programs with state type $\mathsf{S}$, preconditions must refer
to the initial state and postconditions to the result and final state.
Following~\citet{dm4all}, we can do
so systematically by applying the state monad transformer $\StT$ to
the $\WP$ monad described above. Now the postcondition
predicates are not only on the underlying type, but also on the state
($\mathsf{A} \times \mathsf{S} \to \mathbb{P}$), while preconditions
are predicates over the initial state ($\mathsf{S} \to \mathbb{P}$).
This gives the following monadic structure on
$\StTType{\WP}{S}{A} \coloneq
(\mathsf{A} \times \mathsf{S} \to \mathbb{P}) \to \mathsf{S} \to \mathbb{P}$,
which is essentially the same as above, but threading around the
state:
\begin{equation}
  \label{eqn:monad-wp}
\begin{aligned}
\ret\,x
  &\coloneq \lambda~s\,\mathit{post}.\,\mathit{post}~(x,s),\\
\bind\,w\,f
  &\coloneq \lambda~s\,\mathit{post}.\,
     w\,s\,\bigl(\lambda (x,s').\,f\,x\,s'\,\mathit{post}\bigr).
\end{aligned}
\end{equation}
Again, these predicates are naturally ordered, and we restrict 
the carrier to predicate transformers that are monotone.
For this monad, the monadic structure ($\ret$, $\bind$) also
respects the monotonicity condition.

\section{Strongest Postcondition Monads}\label{sec:spmonad}
In contrast to $\mathit{wps}$, $\mathit{sps}$ map preconditions to
postconditions.
In the base case, an $\mathit{sp}$ is therefore a function from
predicates on $1$ to predicates on $\mathsf{A}$, of type
$(1 \to \mathbb{P}) \to (\mathsf{A} \to \mathbb{P}) \cong \mathbb{P} \to (\mathsf{A} \to \mathbb{P})$.
Unlike for $\mathit{wps}$, there is no well-known monad of a similar shape to
inspire our definitions of \ret and \bind.

\subsection{Base $\mathit{sp}$ Monads}\label{sec:base-sp-monads}

Maillard et al.~\cite{dm4all} define a base $\mathit{sp}$
monad, $\MonadType{\SPost}{A} \coloneqq \mathbb{P} \to
(\mathsf{A} \to \mathbb{P})$, with the following
\ret and \bind, inspired by the treatment of \texttt{skip} and
sequencing in the classical strongest postcondition calculus~\cite{DijkstraS90}:
\begin{equation*}
  \label{eqn:angelic-sp}
  \tag{ang}
\begin{aligned}
\ret\,v
  &:= \lambda \mathit{pre}~a.\,\mathit{pre} \land a = v,\\
\bind\,m\,f
  &:= \lambda \mathit{pre}~b.\,
     \exists a.\,f~a~(m~\mathit{pre}~a)~b.
\end{aligned}
\end{equation*}
To satisfy the monad laws, they consider monotone $\mathit{sps}$
of the smaller type
$(\text{pre:}\mathbb{P}) \to (\mathsf{A} \to \mathbb{P}_{\text{/pre}})$,
where $\mathbb{P}_{\text{/pre}} = \{p : \mathbb{P}~|~p \Rightarrow \text{pre}\}$.
This smaller type suffices for the monad laws, and we 
observe that it is, in fact, equivalent to a carrier of 
disjunctive, strict $\mathit{sps}$.

Disjunctivity and strictness (along with conjunctivity and termination, introduced
shortly) are properties of predicate transformers that form classes 
closed under sequential composition. Therefore, they are usually 
called homomorphism properties. 
Using an $\mathit{sp}$ as an example, for every non-empty family of preconditions $(p_i)_{i \in I}$, the
disjunctivity condition on {\em sp} states that:
\[
  \mathit{sp}\Bigl(\bigvee_{i \in I} p_i\Bigr) = \bigvee_{i \in I} \mathit{sp}(p_i)
\]
where the disjunction of predicates is taken pointwise:
\[
  \Bigl(\bigvee_{i \in I} p_i\Bigr)(x) \quad\mbox{iff}\quad \exists (i : I), p_i(x).
\]
An {\em sp} is strict when it maps $\bot$ to $\bot$, \IE
preserves empty disjunctions.
The disjunctivity condition is defined with respect to preconditions 
of any shape. In order to provide intuition, we briefly 
switch to the stateful setting, where computations take 
an initial state and (potentially) produce an output. 

For disjunctivity, \citet{BackW98} showed that disjunctive predicate transformers correspond to 
angelic interpretations of program semantics. For now, fix an output $r$
that the computation can produce, and consider the initial states from
which it can produce $r$.
An angelic $\mathit{sp}$
includes $r$ in the postcondition $\mathit{sp}(p)$ when {\em some} of these
initial states satisfy the precondition $p$.
For a precondition $p = p_1 \lor p_2$,
some initial state that produces $r$
satisfies $p_1 \lor p_2$ ($r \in \mathit{sp}(p_1 \lor p_2)$)
only if that initial state satisfies $p_1$ ($r \in \mathit{sp}(p_1)$)
{\em or} satisfies $p_2$ ($r \in \mathit{sp}(p_2)$), with
the same argument working for any non-empty family of preconditions.
The precondition
is checked on each initial state separately.

For strictness, \citet{BackW98} showed that strict predicate transformers correspond to 
a total interpretation of program semantics.
A total interpretation of an $\mathit{sp}$ requires that the postcondition 
only includes outputs that the computation produces from an initial 
state in the precondition. Therefore, if the precondition is empty, $\bot$,
then no such outputs exist, $\bot$.
We therefore define   
$\MonadType{\ASP}{A}$ as the angelic, total monad,
with a carrier restricted to strict, disjunctive $\mathit{sps}$, meaning
$\mathit{sps}$ preserving all disjunctions, whether empty or non-empty, represented by 
the following refinement type:
\[
\{\mathit{sp} : \mathbb{P} \to (\mathsf{A} \to \mathbb{P}) ~|~
  \forall I~(p : I \to \mathbb{P})~a.~
  \mathit{sp}\bigl(\exists i.\,p(i)\bigr)(a) \iff \exists i.\,\mathit{sp}(p(i))(a)\}
\]
and with the same \ret and \bind as (\ref{eqn:angelic-sp}).
We note that the strictness condition is covered by the empty case. 

Two other properties of interest are conjunctivity and termination.
An {\em sp} is conjunctive when for every non-empty family of preconditions $(p_i)_{i \in I}$
\[
  \mathit{sp}\Bigl(\bigwedge_{i \in I} p_i\Bigr) = \bigwedge_{i \in I} \mathit{sp}(p_i)
\]
where the conjunction of predicates is taken pointwise:
\[
  \Bigl(\bigwedge_{i \in I} p_i\Bigr)(x) \quad\mbox{iff}\quad \forall (i : I).~p_i(x).
\]
An {\em sp} is terminating when it maps $\top$ to $\top$, \IE
preserves empty conjunctions. 

For conjunctivity, \citet{BackW98} showed that conjunctive predicate transformers correspond 
to demonic interpretations of program semantics.
Again, fix an output $r$ and consider the initial states from which the
computation can produce it. A demonic $\mathit{sp}$ includes $r$ in
$\mathit{sp}(p)$ when {\em all} of these states satisfy $p$.
Every initial state that
produces $r$ satisfies $p_1 \land p_2$ only if every such initial state
satisfies $p_1$ {\em and} every such initial state satisfies $p_2$.
Here, the precondition
is also checked on each initial state separately.
The main difference is that an angelic $\mathit{sp}$ looks for one initial state that
{\em satisfies} the precondition, while a demonic $\mathit{sp}$ looks for
one initial state that {\em falsifies} it.

For termination, \citet{BackW98} showed that terminating predicate transformers 
correspond to a partial interpretation of program semantics. 
A partial interpretation of an $\mathit{sp}$ requires that the postcondition 
includes outputs that no initial state produces, as well as outputs 
produced from initial states in the precondition. Therefore, if all 
initial states satisfy the precondition, the $\top$ precondition, then this represents all 
possible outputs, either produced from some initial state or produced 
by none, which is the $\top$ postcondition.

We care about these classes for two reasons. These classes are carrier
restrictions under which our other $\mathit{sp}$ monads are lawful.
Without disjunctivity or conjunctivity,
the associativity of our stateful \bind
fails
We work through a counterexample in \cref{app:law-counterexamples}.
Additionally, they match the four interpretations of nondeterminism and
nontermination, so each of our monads contains exactly the predicate
transformers of programs under its interpretation.

Keeping the \ret and \bind from above while changing only the carrier
restriction to conjunctive, terminating $\mathit{sps}$ does {\em not} yield
the corresponding $\MonadType{\DSLP}{A}$ monad. The restricted carrier has
the following refinement type, where the empty case expresses termination:
\[
\{\mathit{sp} : \mathbb{P} \to (\mathsf{A} \to \mathbb{P}) ~|~
  \forall I~(p : I \to \mathbb{P})~a.~
  \mathit{sp}\bigl(\forall i.\,p(i)\bigr)(a) \iff \forall i.\,\mathit{sp}(p(i))(a)\}
\]
We obtain a
lawful demonic, liberal %
$\MonadType{\DSLP}{A}$ monad by also modifying the \ret and \bind as follows:
\begin{equation*}
  \label{eqn:demonic-sp}
  \tag{dem}
\begin{aligned}
\ret\,v
  &:= \lambda \mathit{pre}~a.\,a = v \Rightarrow \mathit{pre},\\
\bind\,m\,f
  &:= \lambda \mathit{pre}~b.\,
     \forall a.\,f~a~(m~\mathit{pre}~a)~b.
\end{aligned}
\end{equation*}
The total \ret says that $a$ is produced and equals $v$, and
$\mathit{pre}$ holds ($\mathit{pre} \land a = v$). The partial one says that {\em
  if} $a$ is produced and equals $v$, then $\mathit{pre}$ holds
($a = v \Rightarrow \mathit{pre}$), which vacuously admits every $a \neq v$. The
angelic \bind says that $b$ is produced through {\em some} intermediate
value $a$ and $\mathit{pre}$ holds ($\exists a$). The demonic one says that, for
{\em every} intermediate value $a$, if $b$ is produced through $a$, then
$\mathit{pre}$ holds ($\forall a$).

To construct $\MonadType{\DSP}{A}$ and $\MonadType{\ASLP}{A}$, we restrict their
carriers to conjunctive, strict $\mathit{sps}$ and to disjunctive, terminating
$\mathit{sps}$, respectively. To define \ret and \bind, we build on the
observation of \citet[\S3.2.4]{VerschtK25} that $\mathit{dsp}$ and
$\mathit{aslp}$ predicate transformers do not have an inductive
definition but can be defined in terms
of $\mathit{asp}$ and $\mathit{dslp}$ transformers.
For \ret, we get:
\[
\begin{aligned}
\ret_{\vDSP}\,v
  &:= \ret_{\vASP}~v \land \ret_{\vDSLP}~v, \\
\ret_{\vASLP}\,v
  &:= \ret_{\vASP}~v \lor \ret_{\vDSLP}~v
\end{aligned}
\]

Defining \bind is not so straightforward though. Simply taking
$\bind~m~f$ = $\bind_{\vASP}~m~f$ $(\land/\lor)$ $\bind_{\vDSLP}~m~f$
does not type-check, since $\bind_{\vASP}$ and $\bind_{\vDSLP}$
expect $\mathit{asp}$ and $\mathit{dslp}$ predicate transformers, whereas
 $\mathsf{m}$ and $\mathsf{f}~a$ are $\mathit{dsp}$ 
or $\mathit{aslp}$ predicate transformers. 
Let us consider just $\MonadType{\DSP}{A}$. We utilize the fact that a $\mathit{dsp}$ predicate
transformer can be decomposed into 
an $\mathit{asp}$ component and a $\mathit{dslp}$
component, whose conjunction characterizes the same set of results as
the $\mathit{dsp}$ predicate transformer itself.
We calculate this $\mathit{asp}$ component as follows:
\[
\begin{aligned}
&\mathsf{aspOf} : \MonadType{\DSP}{A} \to \MonadType{\ASP}{A}\\
&\mathsf{aspOf}~\mathit{sp}
  := \lambda~\mathit{pre}~a.\, \mathit{sp}~\top~a \land
  \neg(\mathit{sp}~(\neg \mathit{pre})~a)
\end{aligned}
\]
The first conjunct, $\mathit{sp}~\top~a$, states that $a$ is produced. The
second conjunct, $\neg(\mathit{sp}~(\neg \mathit{pre})~a)$, rules out that
$\mathit{pre}$ does not hold while $a$ is produced, so $\mathit{pre}$
holds or $a$ is not produced. 
The conjunction of the two admits
$a$ exactly when $a$ is produced and $\mathit{pre}$ holds, which is the
behavior of an $\mathit{asp}$ predicate transformer.

We calculate the $\MonadType{\DSLP}{A}$ component as follows:
\[
\begin{aligned}
&\mathsf{dslpOf} : \MonadType{\DSP}{A} \to \MonadType{\DSLP}{A}\\
&\mathsf{dslpOf}~\mathit{sp}
  := \lambda~\mathit{pre}~a.\, \mathit{sp}~\mathit{pre}~a \lor
  \neg(\mathit{sp}~\top~a)
\end{aligned}
\]
The first disjunct, $\mathit{sp}~\mathit{pre}~a$, states that $\mathit{pre}$ holds and 
$a$ is produced. The second disjunct, $\neg(\mathit{sp}~\top~a)$, states that
$a$ is not produced. The disjunction of the two admits $a$ exactly when 
$a$ being produced implies $\mathit{pre}$ holding, 
which is the behavior of a 
$\mathit{dslp}$ predicate transformer. 
Both ``exactly when''
steps are classical. Deriving $\mathit{pre}$ from $a$ being produced
together with $\neg(\neg\mathit{pre} \land a \text{ produced})$, and turning
the implication back into the disjunction, use excluded middle.

We prove that for any
$\mathit{sp} : \MonadType{\DSP}{A}$, $\mathit{sp}$ admits the same outputs as the
conjunction $\mathsf{aspOf}~\mathit{sp} \land \mathsf{dslpOf}~\mathit{sp}$
(\cref{app:dsp-components}). We can therefore define
\bind as follows, for $m : \MonadType{\DSP}{A}$ and
$f : \mathsf{A} \to \MonadType{\DSP}{B}$:
\[
\begin{aligned}
\bind_{\vDSP}\,m\,f
  &:= \lambda~\mathit{pre}~b.\,
     \bind_{\vASP}~(\mathsf{aspOf}~m)~(\lambda a. \mathsf{~aspOf~(f~}a))~\mathit{pre}~b~\land \\
     &\phantom{{}:= \lambda~\mathit{pre}~b.\,}\bind_{\vDSLP}~(\mathsf{dslpOf}~m)~(\lambda a. \mathsf{~dslpOf~(f~}a))~\mathit{pre}~b.
\end{aligned}
\]
This gives us a lawful $\MonadType{\DSP}{A}$ monad. 
   For $\MonadType{\ASLP}{A}$, we have similar functions
$\mathsf{aspOf}$ and $\mathsf{dslpOf}$, but we construct \bind via the {\em
disjunction} of the $\mathit{asp}$ and $\mathit{dslp}$ components.
In \cref{app:aslp-components}, we give these function definitions and
prove that the disjunction admits the same outputs as the original
$\mathit{aslp}$ transformer.

This gives us four ``pure'' base $\mathit{sp}$ monads. When reasoning about monadic
programs like stateful programs, we would like to apply monad transformers to
pure $\mathit{sp}$ monads to derive effectful $\mathit{sp}$ monads, as is
done on the $\mathit{wp}$ side~\citep{Maillard19}. Unfortunately, using the
state monad transformer to derive their stateful versions gives an unexpected
type: $\mathsf{S} \to \mathbb{P} \to (\mathsf{A} \times \mathsf{S} \to \mathbb{P})$. Such $\mathit{sps}$
take an initial state and a separate proposition and return a postcondition,
whereas stateful $\mathit{sps}$ should take a precondition that ranges over
initial states and return a postcondition. This is a lawful monad, but it is
unclear how to interpret its elements as stateful
$\mathit{sps}$.

\subsection{Base $\mathit{sp}$ Arrows}\label{sec:base-sp-arrows}

To remedy this, we observe that a monad is only parameterized by an
output type, 
and an $\mathit{sp}$ that wants to lift the type of its preconditions automatically, 
like on the $\mathit{wp}$ side, would benefit from a structure that 
is parameterized over both an input and output type.
Based on this observation, we
propose using \citeauthor{Hughes00}' arrows and arrow transformers to 
derive stateful $\mathit{sps}$~\citep{Hughes00}. Arrows are parameterized by {\em both} 
an input type and an output type, which 
gives precise control over how lifting affects both the input and output.

In its basic form, an arrow
$\mathsf{Arr} : \mathit{Type} \to \mathit{Type} \to \mathit{Type}$ can be defined
as a structure with the following operations:
\[
\begin{aligned}
  \arr{} &: (\mathsf{B} \to \mathsf{A}) \to \ArrowType{Arr}{B}{A}, \\
  \arrcomp{}{} &: \ArrowType{Arr}{B}{A} \to \ArrowType{Arr}{A}{C} \to \ArrowType{Arr}{B}{C}, \\
  \first{} &: \ArrowType{Arr}{B}{A} \to \ArrowType{Arr}{B \times C}{A \times C}
\end{aligned}
\]
Intuitively, $\arr{}$ lifts a function to an arrow. The $\arrcomp{x}{y}$
operation composes two arrows $x$ and $y$ together by feeding the output of $x$
to the input of $y$, and $\first{}$ takes an arrow $x$ and makes it an arrow that
additionally carries an ``environment''. The $\first{}$ operator is also known as
the \emph{strength} of arrows~\citep{arrows-strong-monad}. These 
arrows need to satisfy particular arrow laws, detailed in~\cref{app:arrow-laws}.

All monads yield arrows, but not all arrows can be turned into monads~\citep{LindleyWY11}. Given
a monad $\mathsf{M}$, we can derive an arrow as follows:
$$\ArrowType{Arr}{B}{A} \coloneqq \mathsf{B} \to \MonadType{M}{A}$$
In \citet{Hughes00}, this is known as the Kleisli operation, 
producing a Kleisli arrow. This
means that arrows are more general, but less expressive than
monads. In this paper, we are only interested in structures as
expressive as monads, so we consider arrows
$\mathsf{Arr}$ with an additional \emph{application} method, or
$\app$:
$$\app : \ArrowType{Arr}{\ArrowType{Arr}{B}{A} \times B}{A}$$
This variant of arrows is also defined by \citeauthor{Hughes00}. The
$\app$ method itself is an arrow, which takes another arrow of type
$\ArrowType{Arr}{B}{A}$ and a value of type $\mathsf{B}$ as input. It
\emph{applies} the arrow to the value of $\mathsf{B}$ to obtain an output of
type $\mathsf{A}$. This operator gives arrows the same
expressiveness as monads~\cite{Atkey2011}. Indeed, given an arrow $\mathsf{Arr}$ with application,
we can turn it into a monad $\mathsf{M}$ by fixing the input:
$\MonadType{M}{A} \coloneq \ArrowType{Arr}{1}{A}$. The
monad operations can be derived as follows:
\[
\begin{aligned}
\ret\,v &\coloneqq \arr{(\lambda \_. v)}, \\
\bind\,m\,f &\coloneqq \arrcomp{(\arrcomp{\arr{(\lambda x. (x, x))}}{\first{m}})}{(\arrcomp{\arr{(\lambda (x,i). (f\,x, i))}}{\app})}
\end{aligned}
\]
Fixing the input type of a lawful arrow with
application gives a lawful monad~\cite{Hughes00,LindleyWY11}, which 
allows us to prove
the monad laws once, generically, from the arrow laws and the laws of
$\app$.

We note that we do {\em not} use the Kleisli arrows constructed by 
applying the Kleisli operation to our base $\mathit{sp}$ monads. 
Every monad has a Kleisli arrow. Applying the state arrow transformer to
that arrow and then fixing its input yields the same monad as applying
the state monad transformer to the original monad. Therefore, we would still end up
with $\mathit{sps}$ of an unexpected shape. Instead, we use arrows 
where the input type gives the type over which preconditions are defined, so threading
state through the input changes the type of the preconditions. A monad,
indexed only by its output, cannot thread state through the type 
of the preconditions the same way.

Additionally, we can construct structured maps between 
arrows called {\em arrow morphisms}, which need to satisfy 
arrow morphism laws. 
An arrow morphism
is a parametric family $\theta_{\mathsf{X},\mathsf{Y}} :
\ArrowType{A}{X}{Y} \to \ArrowType{B}{X}{Y}$ such that
$\theta(\arr[A]{f}) = \arr[B]{f}$,
$\theta(\arrcomp[A]{x}{y}) = \arrcomp[B]{\theta(x)}{\theta(y)}$, and
$\theta(\first[A]{x}) = \first[B]{\theta(x)}$, along with laws for
choice and application (\autoref{app:arrow-laws}).

Returning to strongest postconditions, we construct base $\mathit{sp}$ arrows
that mirror our base $\mathit{sp}$ monads: $\ASPArr, \DSLPArr, \DSPArr,$ and
$\ASLPArr$. We start from monotone base $\mathit{sp}$ arrows
($\mathit{sps}_{\text{arr}}$) of the type $(\mathsf{B} \to \mathbb{P}) \to (\mathsf{A} \to \mathbb{P})$,
where $(\mathsf{B} \to \mathbb{P})$ represents preconditions and $(\mathsf{A} \to \mathbb{P})$
represents postconditions. Therefore, monotonicity states that for any two preconditions
$p_1,p_2: \mathsf{B} \to \mathbb{P},~p_1 \sqsubseteq p_2 \Rightarrow \mathit{sp}_{\text{arr}}(p_1) \sqsubseteq \mathit{sp}_{\text{arr}}(p_2)$.
We present $\ASPArr$ and $\DSLPArr$ and refer to
\cref{app:dsp-aslp-arrows} for $\DSPArr$ and $\ASLPArr$.
Their $\mathit{asp}$ and $\mathit{dslp}$ components are the $\mathsf{aspOf}$
and $\mathsf{dslpOf}$ of \autoref{sec:base-sp-monads} with $\top$, $\bot$, and
$\neg$ taken pointwise. They are defined for every input and output type, so
the stateful monads of \autoref{sec:stateful-sp} need no separate definition.

For $\ArrowType{\ASPArr}{B}{A}$, we restrict the arrow carrier to disjunctive,
strict $\mathit{sps}_{\text{arr}}$, mirroring the carrier restriction of the
$\MonadType{\ASP}{A}$ monad. For every family of preconditions
$(p_i)_{i \in I}$ on the input $\mathsf{B}$:
\[
  \mathit{sp}_{\text{arr}}\Bigl(\bigvee_{i \in I} p_i\Bigr) = \bigvee_{i \in I} \mathit{sp}_{\text{arr}}(p_i)
\]
The empty families give strictness, and non-empty families 
give disjunctivity. 
Monotonicity and strictness alone are not enough: without
disjunctivity, the arrow laws fail (\cref{app:first-fst}).
Fixing $\mathsf{B} = 1$ gives a monad carrier isomorphic to the carrier of
$\MonadType{\ASP}{A}$.

We then define the arrow operations so that fixing
$\mathsf{B} = 1$ gives the same \ret and \bind as
$\MonadType{\ASP}{A}$, which we show in \cref{app:fixing-input-asp}:
\[
\begin{aligned}
  \arr{f} &: (\mathsf{B} \to \mathsf{A}) \to \ArrowType{\ASPArr}{B}{A} := \lambda \mathit{pre}~a.~\exists b.~\mathit{pre}~b \land f~b = a, \\
  \arrcomp{f}{g} &: \ArrowType{\ASPArr}{B}{A} \to \ArrowType{\ASPArr}{A}{C} \to \ArrowType{\ASPArr}{B}{C} := \lambda \mathit{pre}~c.~g~(f~\mathit{pre})~c, \\
  \first{f} &: \ArrowType{\ASPArr}{B}{A} \to \ArrowType{\ASPArr}{B \times C}{A \times C} \\
    &:= \lambda \mathit{pre}~(a,c).~f~(\lambda b.~\mathit{pre}~(b, c))~a, \\
  \app &: \ArrowType{\ASPArr}{\ArrowType{\ASPArr}{B}{A} \times B}{A} := \lambda \mathit{pre}\,a.\, \exists f. \exists b. \mathit{pre}\,(f,b) \wedge f (\lambda x.\,x = b)\,a
\end{aligned}
\]
Intuitively, an output is admitted when some input that satisfies the precondition 
produces it. The definitions mirror the \ret and \bind of (\ref{eqn:angelic-sp}),
except that the precondition is now a predicate on an explicit input, over which we also quantify.
$\ret\,v$ admits $v$ when
$\mathit{pre}$ holds, while $\arr{f}$ admits $f\,b$ for some input $b$ that
satisfies $\mathit{pre}$. At $\mathsf{B} = 1$, $\arr{(\lambda\_.\,v)}$ is
$\ret\,v$. Composition passes the postcondition of $f$ to $g$ as its
precondition. Unlike \bind, it needs no quantifier over intermediate values,
since $g$, being disjunctive, quantifies over its own input. $\first{f}$ runs
$f$ on the first component of the input and passes the second component $c$
through unchanged, so $f$ receives the precondition restricted to the inputs
paired with $c$. Finally, $\app$ takes over the existential of \bind. Where
\bind quantifies over an intermediate value $a$ to select the continuation
$f\,a$, $\app$ quantifies over its input, an arrow $f$ and an argument $b$,
and runs $f$ from exactly $b$.

For $\ArrowType{\DSLPArr}{B}{A}$, we restrict the arrow carrier 
to conjunctive, terminating $\mathit{sps}_{\text{arr}}$, mirroring the 
carrier restriction of the $\MonadType{\DSLP}{A}$ monad.
For every family of preconditions
$(p_i)_{i \in I}$ on the input $\mathsf{B}$:
\[
  \mathit{sp}_{\text{arr}}\Bigl(\bigwedge_{i \in I} p_i\Bigr) = \bigwedge_{i \in I} \mathit{sp}_{\text{arr}}(p_i)
\]
The empty families give termination, and non-empty families give 
conjunctivity. 
Fixing $\mathsf{B} = 1$ gives a monad carrier isomorphic to the carrier of
$\MonadType{\DSLP}{A}$. 

The relevant arrow operations are:
\[
\begin{aligned}
  \arr{f} &: (\mathsf{B} \to \mathsf{A}) \to \ArrowType{\DSLPArr}{B}{A} := \lambda \mathit{pre}~a.~\forall b.~f~b = a \Rightarrow \mathit{pre}~b, \\
  \app &: \ArrowType{\DSLPArr}{\ArrowType{\DSLPArr}{B}{A} \times B}{A} \\
    &:= \lambda \mathit{pre}~a.~\forall f.~\forall b.~(\forall p.~f~p~a \Rightarrow p~b) \Rightarrow \mathit{pre}~(f, b)
\end{aligned}
\]
The definitions of $\arrcomp{}{}$ and $\first{}$ are the same as for
$\ArrowType{\ASPArr}{B}{A}$ so we omit them. Fixing $\mathsf{B} = 1$ again
gives the \ret and \bind of $\MonadType{\DSLP}{A}$
(\cref{app:fixing-input-dslp}). 
Intuitively, an output is admitted when every input 
that produces it satisfies the precondition, which holds vacuously for 
outputs and no input produces. 
As for $\ASPArr$, the definitions follow the \ret and \bind of
(\ref{eqn:demonic-sp}) but quantify over the explicit input, now universally.
Where $\ret\,v$ admits $a$ when $a = v$ implies $\mathit{pre}$, $\arr{f}$
admits $a$ when every input $b$ with $f\,b = a$ satisfies $\mathit{pre}$; at
$\mathsf{B} = 1$, $\arr{(\lambda\_.\,v)}$ is $\ret\,v$. Composition and
$\first{}$ are unchanged, since a conjunctive $g$ quantifies universally over
its own input. Here, $\app$ takes over the universal quantifier of
\bind, so it admits $a$ when every arrow $f$ and argument $b$ from which $f$
produces $a$ satisfy $\mathit{pre}$.
Following \citet{Atkey2011}, a lawful
$\app$ can also be seen as a property of the arrow rather than extra
structure, reflecting that the arrow corresponds to a monad.

\subsection{Stateful $\mathit{sp}$ Arrows and Monads}\label{sec:stateful-sp}

Recall that a lawful arrow, or one that satisfies the arrow laws, 
with application gives a lawful monad once its input
type is fixed. If we start from a lawful arrow of type
$\ArrowType{Arr}{B}{A} = (\mathsf{B} \to \mathbb{P}) \to (\mathsf{A} \to \mathbb{P})$, then
applying the state arrow transformer~\citep{Hughes00} gives
$\StTArrType{Arr}{S}{B}{A} = (\mathsf{B} \times \mathsf{S} \to \mathbb{P}) \to (\mathsf{A} \times \mathsf{S} \to \mathbb{P})$.
Fixing $\mathsf{B} = 1$ gives a lawful monad whose type is isomorphic to the
expected shape of stateful $\mathit{sps}$,
$(1 \times \mathsf{S} \to \mathbb{P}) \to (\mathsf{A} \times \mathsf{S} \to \mathbb{P}) \cong (\mathsf{S} \to \mathbb{P}) \to (\mathsf{A} \times \mathsf{S} \to \mathbb{P})$.

We apply the state arrow transformer to each of our base $\mathit{sp}$
arrows and fix $\mathsf{B} = 1$. We present the resulting $\ASPSt$ and 
$\DSLPSt$ monad definitions 
here and refer to \cref{app:stateful-sp} for 
the arrow definitions, along with the $\DSPSt$ and $\ASLPSt$ monad definitions.
We check by hand that these definitions match our intuition. 

Unfolding the derived \ret and \bind of $\MonadType{\ASPSt}{S, A}$ and
$\MonadType{\DSLPSt}{S, A}$ gives:
\begin{equation*}
  \label{eqn:angelic-sp-stateful}
  \tag{ang-st}
\begin{aligned}
\ret_{\vASP}\,v
  &\equiv \lambda~\mathit{pre}~(a, s).\;\mathit{pre}~s \land a = v, \\
\bind_{\vASP}\,m\,f
  &\equiv \lambda~\mathit{pre}~(b, s).\;\exists a.\;f~a~(\lambda s'.~m~\mathit{pre}~(a, s'))~(b, s),
\end{aligned}
\end{equation*}
\begin{equation*}
  \label{eqn:demonic-sp-stateful}
  \tag{dem-st}
\begin{aligned}
\ret_{\vDSLP}\,v
  &\equiv \lambda~\mathit{pre}~(a, s).\;a = v \Rightarrow \mathit{pre}~s, \\
\bind_{\vDSLP}\,m\,f
  &\equiv \lambda~\mathit{pre}~(b, s).\;\forall a.\;f~a~(\lambda s'.~m~\mathit{pre}~(a, s'))~(b, s).
\end{aligned}
\end{equation*}
These are exactly the base definitions with the precondition now a
predicate on states. \ret checks the precondition on the current state,
and \bind hands the continuation $f~a$ the postcondition
produced by $m$ at the value $a$ and precondition $\mathit{pre}$, which is a predicate on intermediate states.

\subsection{Monads vs. Arrows}

We now revisit the mismatch between the stateful $\mathit{sp}$ carriers
obtained using monad and arrow transformers.  Focusing on angelic strongest
postconditions, we show how the two constructions become isomorphic after
appropriately restricting their carriers. As noted
in~\autoref{sec:base-sp-monads}, applying the state monad transformer to the
base monad $\ASP$ on $\mathsf{A}$ yields $\StTType{\ASP}{S}{A}$, with carrier
$\mathsf{S} \to \mathbb{P} \to (\mathsf{A} \times \mathsf{S} \to
\mathbb{P})$.  By contrast, applying the state arrow transformer to $\ASPArr$
and fixing its input type to $\mathsf{1}$ yields
$\StTArrType{\ASPArr}{S}{1}{A}$, with carrier
$(\mathsf{1} \times \mathsf{S} \to \mathbb{P}) \to
(\mathsf{A} \times \mathsf{S} \to \mathbb{P})$.  Since the unit component is
trivial, this carrier is isomorphic to $(\mathsf{S} \to \mathbb{P}) \to
(\mathsf{A} \times \mathsf{S} \to \mathbb{P})$.
Although these carriers have different shapes, restricting both to monotone,
strict, and disjunctive predicate transformers makes them isomorphic:%
\[
\StTType{\ASP}{S}{A}
  \cong \StTArrType{\ASPArr}{S}{1}{A}.
\]
To see this isomorphism more concretely, consider the following maps:
\[
\begin{aligned}
  \mathsf{toArrow}(m)
    &\coloneqq \lambda~\mathit{pre}~r.~
       \exists s.~m~s~(\mathit{pre}~s)~r,\\
  \mathsf{fromArrow}(g)
    &\coloneqq \lambda~s~p~r.~
       g~(\lambda t.~p \land t = s)~r,
\end{aligned}
\]
which are inverses of each other.

Therefore, in this paper, we choose to work with the more familiar
stateful $\mathit{sp}$ presentation based on arrows. 
The $\mathit{wp}$ side needs no
arrows, since the state monad transformer gives the expected shape:
$\mathsf{S} \to (\mathsf{A} \times \mathsf{S} \to \mathbb{P}) \to \mathbb{P} \cong 
(\mathsf{A} \times \mathsf{S} \to \mathbb{P}) \to (\mathsf{S} \to \mathbb{P})$.
We mainly use arrows as a construction device to automatically 
derive lawful stateful arrows and, correspondingly, lawful 
stateful monads. Our study focuses on the resulting monads.
We seek to showcase the 
behavior of these $\mathit{sp}$ monads in contexts that previously 
only considered $\mathit{wp}$ monads, such as in forming 
Dijkstra monads. 
Additionally, working with monads keeps the treatment uniform with
the $\mathit{wp}$ side. 

From here on, we work with $\mathit{wp}$ and $\mathit{sp}$ monads. The taxonomy of
\autoref{sec:taxonomy}, the triples, and the ordering on predicate
transformers of \autoref{sec:verification} are all stated in terms of
\ret, \bind, monad morphisms, and monadic relations. 
The state arrow transformer 
adds state to the 
$\mathit{sp}$ monads and also lifts arrow morphisms.
These maps preserve arrow structure, just as monad morphisms preserve
monad structure.
Since, in our setting, an arrow morphism gives a monad morphism once the
input is fixed~\cite{Atkey2011},
we obtain the monad morphisms into the
$\mathit{sp}$ monads from arrow morphisms.
Therefore, the state arrow transformer 
also ``lifts'' our monad morphisms. 
The monadic relations are proved for the base and 
stateful cases separately, but we believe that a similar 
lift from the base to the stateful 
case via monad and arrow transformers, could be formalized in future work.

\subsection{Refined WP Monads for Nondeterminism and Nontermination}\label{sec:refined-wp-monads}

As needed for our taxonomy in~\autoref{sec:taxonomy}, we also construct
$\mathit{wp}$ monads containing exactly the $\mathit{wps}$ that match the four
interpretations of nondeterminism and nontermination. We use the same carrier
restrictions to construct these $\mathit{wp}$ monads: disjunctive for the
angelic and conjunctive for the demonic interpretation of nondeterminism, and
strict for the total and terminating for the partial interpretation of
nontermination. Unlike the $\mathit{sp}$ monads, $\WP$ does not {\em need} such
carrier restrictions in order to be a lawful monad, but they allow us to define
the monads $\AWP$, $\DWLP$, $\DWP$, and $\AWLP$, corresponding to our four
$\mathit{sp}$ monads.
All the {\em wp} monads share the \ret and \bind of $\WP$, since
\ret satisfies all four homomorphism properties and \bind preserves each
of them.
This makes the four $\mathit{wp}$ monads submonads of $\WP$, each carrier is
a subset of $\WP$'s that is closed under its \ret and \bind, so each inherits
$\WP$'s monad structure and laws, and any two $\mathit{wps}$ can be compared inside $\WP$
(\autoref{sec:implication-relation}).

\section{Taxonomy of Predicate Transformer Monads}\label{sec:taxonomy}

We recast the taxonomy 
of triples of \citet{VerschtK25} as a taxonomy of predicate transformer monads.
In \autoref{fig:monad-taxonomy}, diagram (a) details the relationships for our
base effects of nondeterminism and nontermination, for which some of the {\em sp}
monads are isomorphic (\autoref{sec:collapse})%
while diagram (b) details the relationships between the monads obtained after adding the state effect,
which corresponds to the stateful triple taxonomy of \citeauthor{VerschtK25}. We
replicate the implications, the contrapositive relations, and the Galois
connections between triples as monadic relations between the predicate
transformer monads.

\begin{figure}[t]
  \centering
  \definecolor{taxgreen}{RGB}{74,128,58}
  \definecolor{taxorange}{RGB}{238,136,74}
  \newcommand{\montaxonomy}[2]{%
    \node[mon] (ASLP) at (0,7.3)    {$\csname ASLP#1\endcsname$};
    \node[mon] (ASP)  at (-2.7,5.9) {$\csname ASP#1\endcsname$};
    \node[mon] (DSLP) at (2.7,5.9)  {$\csname DSLP#1\endcsname$};
    \node[mon] (DSP)  at (0,4.5)    {$\csname DSP#1\endcsname$};
    \node[mon] (AWLP) at (0,2.8)    {$\csname AWLP#1\endcsname$};
    \node[mon] (DWLP) at (-2.7,1.4) {$\csname DWLP#1\endcsname$};
    \node[mon] (AWP)  at (2.7,1.4)  {$\csname AWP#1\endcsname$};
    \node[mon] (DWP)  at (0,0)      {$\csname DWP#1\endcsname$};

    #2

    \draw[imp] (DWP)  -- (DWLP);
    \draw[imp] (DWP)  -- (AWP);
    \draw[imp] (DWLP) -- (AWLP);
    \draw[imp] (AWP)  -- (AWLP);

    \draw[contra] (ASLP) -- (DSP);
    \draw[contra] (ASP)  -- (DSLP);
    \draw[contra] (DWLP) -- (AWP);
    \draw[contra] (AWLP) -- (DWP);

    \draw[adjunct] (ASP) -- (DWLP);
    \draw[adjunct] (AWP) -- (DSLP);
  }
  \begin{tikzpicture}[
      mon/.style={inner sep=4pt},
      imp/.style={-{Implies[]}, double, double distance=1.8pt, line width=0.7pt,
                  shorten >=1pt, shorten <=1pt},
      impboth/.style={imp, {Implies[]}-{Implies[]}},
      contra/.style={{Straight Barb[length=2.4pt,width=5.2pt,line width=0.7pt]}-%
                     {Straight Barb[length=2.4pt,width=5.2pt,line width=0.7pt]},
                     dotted, line width=0.9pt, draw=taxgreen,
                     shorten >=2pt, shorten <=2pt},
      adjunct/.style={-{Implies[]}, double, double distance=1.8pt, line width=0.7pt,
                     draw=taxorange, shorten >=1pt, shorten <=1pt},
      iso/.style={sloped, inner sep=0pt, font=\large},
      legend/.style={font=\scriptsize, anchor=west, inner sep=2pt,
                     text height=1.4ex, text depth=0.4ex},
      sublabel/.style={font=\small},
    ]
    \begin{scope}[x=0.74cm, y=0.8cm, name prefix=base-]
      \montaxonomy{}{%
        \draw[impboth] (ASP)  -- node[iso, below=3pt] {$\cong$} (DSP);
        \draw[impboth] (ASLP) -- node[iso, above=3pt] {$\cong$} (DSLP);
        \draw[imp] (ASP) -- (ASLP);
        \draw[imp] (DSP) -- (DSLP);
      }
      \node[sublabel] at (0,-1.1) {(a) Base monads};
    \end{scope}

    \begin{scope}[xshift=6.35cm, x=0.74cm, y=0.8cm, name prefix=state-]
      \montaxonomy{St}{%
        \draw[imp] (DSP)  -- (ASP);
        \draw[imp] (DSP)  -- (DSLP);
        \draw[imp] (ASP)  -- (ASLP);
        \draw[imp] (DSLP) -- (ASLP);
      }
      \path (DSP)  -- node[pos=0.5, below left=0pt]  {$\ast$}  (ASP);
      \path (DSP)  -- node[pos=0.5, below right=0pt] {$\star$} (DSLP);
      \path (ASP)  -- node[pos=0.5, above left=0pt]  {$\ast$}  (ASLP);
      \path (DSLP) -- node[pos=0.5, above right=0pt] {$\star$} (ASLP);
      \node[sublabel] at (0,-1.1) {(b) Monads with state};
    \end{scope}

    \begin{scope}[xshift=3.175cm, yshift=-1.75cm]
      \draw[imp]    (-4.45,0) -- (-3.65,0); \node[legend] at (-3.6,0) {Implication};
      \draw[contra] (-1.65,0) -- (-0.85,0); \node[legend] at (-0.8,0) {Contrapositive};
      \draw[adjunct] (1.45,0)  -- (2.25,0);  \node[legend] at (2.3,0)  {Adjunct};
    \end{scope}
  \end{tikzpicture}
  \caption{Monadic relations between the monads of predicate transformers, with the SP
    monads on top and the WP monads below, (a) for the base monads and (b) for
    the monads with state. In (a), $\DSP$ and $\ASLP$ are
    isomorphic ($\cong$) to $\ASP$ and $\DSLP$. Unmarked implications relate
    the elements of the two monads; in (b), $\ast$ and $\star$ mark implications
    that compare their $\mathsf{aspOf}$ and $\mathsf{dslpOf}$ components,
    respectively. An adjunction points from its left adjoint to its right adjoint.}
  \label{fig:monad-taxonomy}
\end{figure}

\subsection{Taxonomy of Triples}
\citeauthor{VerschtK25} present a taxonomy of triples. Each triple relates
a program $c$, a precondition $p$, and a postcondition $q$. Its validity
is expressed using a predicate transformer variant, for example as
$p \Rightarrow \mathit{dwlp} \llbracket c \rrbracket (q)$.
They relate triples using three relations: implications,
contrapositivity and Galois connections.
For implication, one triple implies another if validity of a triple in 
one logic implies its validity in the other logic. 
For contrapositivity, two triples are 
contrapositives if one triple is valid in one logic iff the 
other triple is valid in the other logic with the precondition 
{\em and} postcondition negated.
Given predicate transformers
$f : (X \to \mathbb{P}) \to (Y \to \mathbb{P})$ and
$g : (Y \to \mathbb{P}) \to (X \to \mathbb{P})$, $f$ and $g$ form a Galois
connection when for any $p : X \to \mathbb{P}$ and $ q : Y \to \mathbb{P}$, $~f(p) \Rightarrow q$ is valid 
if and only if $p \Rightarrow g(q)$ is valid.
Here, $f$ is the left adjoint and $g$ is the right adjoint. 

Analyzing their taxonomy 
of triples, which they construct by fixing 
the precondition, postcondition, and
program, except
that a contrapositive relation negates the precondition and the postcondition,
leads us to the following observations, which inspire our recast. 

First, for imperative programs, predicate transformers are defined 
recursively over the structure of the program. However, it need 
not be so. In our setting, predicate transformers can be studied apart from
programs and constructed using the \ret and \bind of the monadic
structure. Therefore, 
we claim that the relationships between triples can 
also be observed at the level of the bare predicate transformers, 
not indexed by a program, witnessed by monadic relations.

   Second, \citeauthor{VerschtK25}'s
taxonomy is comprised of triples that {\em over}- or {\em under}-approximate a
predicate transformer, where, for example,
$p \Rightarrow \mathit{dwlp} \llbracket c \rrbracket q$ is under-approximate and
$ \mathit{dwlp} \llbracket c \rrbracket q \Rightarrow p$ is over-approximate.
The implications that exist among under-approximate triples are dual to the ones
that exist among over-approximate triples. For example, in the under-approximate
version,
$(p \Rightarrow \mathit{dwp} \llbracket c \rrbracket(q)) \Rightarrow (p \Rightarrow \mathit{dwlp} \llbracket c \rrbracket(q))$.
However, for the over-approximate version,
$(\mathit{dwlp} \llbracket c \rrbracket(q) \Rightarrow p) \Rightarrow (\mathit{dwp} \llbracket c \rrbracket(q) \Rightarrow p)$.

Third, for contrapositives, each relation is between 
an under-approximate and an over-approximate 
triple, and it exists in a dualized form. For example, 
($p \Rightarrow \mathit{awp}\llbracket c \rrbracket(q)$) is contrapositive 
to ($\mathit{dwlp} \llbracket c \rrbracket(q) \Rightarrow p$). Dually,
($p \Rightarrow \mathit{dwlp} \llbracket c \rrbracket(q)$) is contrapositive 
to ($\mathit{awp} \llbracket c \rrbracket(q) \Rightarrow p$).

Fourth, for the adjunctions, the relations are dual with 
respect to which predicate transformer is the left adjoint or 
right adjoint, or vice versa. The Galois connection between 
predicate transformers when 
$\mathit{sp}$ is the left adjoint and the $\mathit{wp}$ is 
the right adjoint is represented by
($p \Rightarrow \mathit{dwlp} \llbracket c \rrbracket(q)) \iff  
(\mathit{asp} \llbracket c \rrbracket(p) \Rightarrow q$). 
The other Galois connection when 
$\mathit{wp}$ is the left adjoint and the $\mathit{sp}$ is 
the right adjoint is represented by
($q \Rightarrow \mathit{dslp} \llbracket c \rrbracket(p)) \iff  
(\mathit{awp} \llbracket c \rrbracket(q) \Rightarrow p$).

We demonstrate that, for implications and contrapositives, 
each pair of dual relationships is characterized by a single 
monadic relation, and
the two adjunctions are characterized by two monadic relations. 
Therefore, each implication or contrapositive edge of \autoref{fig:monad-taxonomy}
represents two arrows of \citeauthor{VerschtK25}'s diagram, and each
Galois edge one. 

\subsection{Monadic Relations}
\Citet{dm4all} introduce monadic relations as a way of relating a computation
monad to a specification monad, which are monads where \bind is monotone with 
respect to the ordering. 
 The definition of a monadic relation constrains \ret
and \bind but does not use an ordering, so it applies to any two monads. We use
monadic relations to relate our predicate transformer monads.

\begin{definition}[Monadic relation~\cite{dm4all}]
  A \emph{monadic relation} $\mathcal{R} : W_1 \leftrightarrow W_2$ between
  monads $W_1$ and $W_2$ is a family of relations
  $\mathcal{R}_{\mathsf{A}} : W_1\,\mathsf{A} \times W_2\,\mathsf{A} \to \mathbb{P}$,
  indexed by types $\mathsf{A}$, satisfying the following conditions:
\begin{itemize}
  \item[$\triangleright$] for every $a : \mathsf{A}$,
    $(\ret_{W_1}\,a)~\mathcal{R}_{\mathsf{A}}~(\ret_{W_2}\,a)$, 
  \item[$\triangleright$] and for every $w_1 : W_1\,A$, $w_2 : W_2\,A$,
        $k_1 : A \to W_1\,B$, and $k_2 : A \to W_2\,B$, we have
\end{itemize}
\[
  \frac{w_1~\mathcal{R}_{\mathsf{A}}~w_2 \qquad
        \forall (a : \mathsf{A}).~(k_1\,a)~\mathcal{R}_{\mathsf{B}}~(k_2\,a)}
       {(\bind_{W_1}\,w_1\,k_1)~\mathcal{R}_{\mathsf{B}}~(\bind_{W_2}\,w_2\,k_2)}
\]
\end{definition}

We build the following monadic relations based on relations between the
predicate transformers, as explained below. We present the monadic relations
between the stateful $\mathit{wp}$ and $\mathit{sp}$ monads to more 
directly connect to the stateful setting of the taxonomy of triples~\cite{VerschtK25}; however, the base
monadic relations can be similarly constructed for the relationships detailed in
the left diagram of \autoref{fig:monad-taxonomy}, the details of which we leave
to \cref{app:base-relations}. 

\subsubsection{Implication Relation}\label{sec:implication-relation}

The taxonomy of triples has implications among the $\mathit{wp}$
triples and among the $\mathit{sp}$ triples. We represent them 
as monadic relations among the $\mathit{wp}$ monads and among 
the $\mathit{sp}$ monads. Each monadic relation relates 
the elements of two monads, that is, two $\mathit{wps}$ or 
two $\mathit{sps}$.

The four stateful $\mathit{wp}$ monads have different carriers, but each is a
submonad of
$\StTType{\WPure}{S}{A} = (\mathsf{A} \times \mathsf{S} \to \mathbb{P}) \to (\mathsf{S} \to \mathbb{P})$,
the monad of stateful monotone $\mathit{wps}$, which we call $\WSt$ and discuss
in~\cref{sec:refined-wp-monads}, so any two $\mathit{wps}$ can be
compared within this monad. We compare them with the pointwise order:
\[
  wp_1 \sqsubseteq wp_2 \coloneqq \forall \mathit{post}~s.~wp_1(\mathit{post})(s) \Rightarrow wp_2(\mathit{post})(s)
\]
This order is a monadic relation between any two $\mathit{wp}$ monads, because
they share \ret and \bind, and \bind is monotone, which satisfies both
requirements of a monadic relation. If $wp_1 \sqsubseteq wp_2$, then every
under-approximate triple of $wp_1$ implies the same triple of $wp_2$, and every
over-approximate triple of $wp_2$ implies the same triple of $wp_1$. So the
opposite order gives the implications between under-approximate
triples.

\begin{figure}[t]
  \centering
  \definecolor{taxgreen}{RGB}{74,128,58}
  \definecolor{taxorange}{RGB}{238,136,74}
  \begin{tikzpicture}[
      region/.style={draw, line width=0.7pt},
      pt/.style={circle, fill, inner sep=1.2pt},
      num/.style={font=\small, inner sep=2pt},
      trans/.style={line width=0.7pt},
      hl/.style={line width=4.2pt, line cap=round, line join=round},
      chosen/.style={draw, circle, line width=0.7pt, minimum size=0.72cm, inner sep=0pt},
      note/.style={font=\small, inner sep=2pt},
      ptr/.style={-{Straight Barb[length=2.4pt,width=5.2pt,line width=0.7pt]},
                  line width=0.7pt, shorten >=1pt},
      sublabel/.style={font=\small},
    ]
    \begin{scope}
      \coordinate (s1) at (0.1,0.4);
      \coordinate (s2) at (2.15,0.55);
      \coordinate (s3) at (2.15,-0.95);
      \coordinate (s4) at (4.3,0);

      \draw[hl, taxgreen!50]  (s1) -- (s2) -- (s4);
      \draw[hl, taxorange!55] (s3) -- (s4);

      \draw[region] (0,0)    ellipse [x radius=0.78, y radius=1.7];
      \draw[region] (2.15,0) ellipse [x radius=0.78, y radius=1.7];
      \draw[region] (4.3,0)  ellipse [x radius=0.78, y radius=1.7];

      \draw[trans] (s1) -- (s2);
      \draw[trans] (s2) -- (s4);
      \draw[trans] (s3) -- (s4);

      \foreach \s in {s1,s2,s3,s4} \node[pt] at (\s) {};
      \node[num, anchor=east, xshift=-0.5pt]   at (s1) {$1$};
      \node[num, anchor=south, yshift=1pt]     at (s2) {$2$};
      \node[num, anchor=north, yshift=-2pt]    at (s3) {$3$};
      \node[num, anchor=north west]            at (s4) {$4$};

      \node[chosen] (c1) at (-0.06,0.4) {};
      \node[chosen] (c3) at (2.15,0.74) {};
      \node[note] (pre)  at (-0.5,2.2)  {$\mathit{pre}$};
      \node[note] (mpre) at (1.7,2.25)  {$\mathsf{m}~\mathit{pre}$};
      \draw[ptr] (pre)  -- (c1);
      \draw[ptr] (mpre) -- (c3);

      \node[note, above=4pt] at (s4) {??};
      \node[sublabel] at (2.15,-2.2) {(a)};
    \end{scope}

    \begin{scope}[xshift=6.2cm]
      \coordinate (s1) at (0.1,-0.7);
      \coordinate (s2) at (2.15,-0.7);
      \coordinate (s3) at (2.15,0.55);
      \coordinate (s4) at (4.3,-0.15);

      \draw[hl, taxgreen!50]  (s1) -- (s2) -- (s4);
      \draw[hl, taxorange!55] (s3) -- (s4);

      \draw[region] (0,0)    ellipse [x radius=0.78, y radius=1.7];
      \draw[region] (2.15,0) ellipse [x radius=0.78, y radius=1.7];
      \draw[region] (4.3,0)  ellipse [x radius=0.78, y radius=1.7];

      \draw[trans] (s1) -- (s2);
      \draw[trans] (s2) -- (s4);
      \draw[trans] (s3) -- (s4);

      \foreach \s in {s1,s2,s3,s4} \node[pt] at (\s) {};
      \node[num, anchor=north east]            at (s1) {$1$};
      \node[num, anchor=north, yshift=-2pt]    at (s2) {$2$};
      \node[num, anchor=south, yshift=1pt]     at (s3) {$3$};
      \node[num, anchor=north west]            at (s4) {$4$};

      \node[chosen] (c1) at (0,0.55) {};
      \node[chosen] (c3) at (2.15,0.74) {};
      \node[note] (pre)  at (-0.5,2.2)  {$\mathit{pre}$};
      \node[note] (mpre) at (1.7,2.25)  {$\mathsf{m}~\mathit{pre}$};
      \draw[ptr] (pre)  -- (c1);
      \draw[ptr] (mpre) -- (c3);

      \node[note, above=4pt] at (s4) {??};
      \node[sublabel] at (2.15,-2.2) {(b)};
    \end{scope}
  \end{tikzpicture}
  \caption{Sequencing two computations, with initial state $1$, intermediate
  states $2$ and $3$, and final state $4$. The first computation $\mathsf{m}$
  maps $\mathit{pre}$ to $\mathsf{m}~\mathit{pre}$, which \bind passes to the
  continuation as its precondition.
  (a) The $\DSPSt$ case: state $4$ is reachable only from initial states in
  $\mathit{pre}$ (green path), so the postcondition should contain it. A \bind
  that only sees $\mathsf{m}~\mathit{pre}$ excludes it, because the
  continuation also reaches $4$ from state $3$, which is outside
  $\mathsf{m}~\mathit{pre}$ (orange path).
  (b) The $\ASLPSt$ case: state $4$ is reachable only from an initial state
  outside $\mathit{pre}$ (green path), so the postcondition should not contain
  it. A \bind that only sees $\mathsf{m}~\mathit{pre}$ includes it, because
  the continuation also reaches $4$ from state $3$, which is inside
  $\mathsf{m}~\mathit{pre}$ (orange path).}
  \label{fig:dsp-sequencing}
\end{figure}

Monads of $\mathit{sps}$ are not so straightforward. Without state, the monad
taxonomy collapses into the left taxonomy in
\autoref{fig:monad-taxonomy},
  since $\MonadType{\DSP}{A}$ and $\MonadType{\ASLP}{A}$
coincide with $\MonadType{\ASP}{A}$ and $\MonadType{\DSLP}{A}$, as described
in~\autoref{sec:collapse}. With state, the four stateful $\mathit{sp}$ monads
share the type of stateful monotone $\mathit{sps}$,
$\MonadType{\SPSt}{S, A} = (\mathsf{S} \to \mathbb{P}) \to (\mathsf{A} \times \mathsf{S} \to \mathbb{P})$,
and we, again, start from the pointwise order:
\[
  sp_1 \sqsubseteq sp_2 \coloneqq \forall \mathit{pre}~(a,s).~sp_1(\mathit{pre})(a,s) \Rightarrow sp_2(\mathit{pre})(a,s)
\]
However, this is not a monadic relation between any of the $\mathit{sp}$ monads
because the \bind requirement is not satisfied.
  Every implication between
$\mathit{sp}$ monads has $\DSPSt$ or $\ASLPSt$ at one end, and their \bind is
not monotone for $\sqsubseteq$, as demonstrated in \autoref{fig:dsp-sequencing}.
Since \bind is computed through the $\mathsf{aspOf}$ and $\mathsf{dslpOf}$
components, we instead compare components: a $\DSPSt$ element $\mathit{dsp}$ is
related to an $\ASPSt$ element $\mathit{asp}$ when
$\mathsf{aspOf}(\mathit{dsp}) \sqsubseteq \mathit{asp}$ and to a $\DSLPSt$
element $\mathit{dslp}$ when
$\mathsf{dslpOf}(\mathit{dsp}) \sqsubseteq \mathit{dslp}$. Likewise, an
$\ASLPSt$ element $\mathit{aslp}$ is related to them when
$\mathit{asp} \sqsubseteq \mathsf{aspOf}(\mathit{aslp})$ and when
$\mathit{dslp} \sqsubseteq \mathsf{dslpOf}(\mathit{aslp})$. These relations are
monadic, and are represented by the marks $\ast$ ($\mathsf{aspOf}$) and $\star$
($\mathsf{dslpOf}$) in \autoref{fig:monad-taxonomy}. Additionally, each implies
$\sqsubseteq$ between the $\mathit{sps}$ themselves: a $\mathit{dsp}$ is the
conjunction of its components and an $\mathit{aslp}$ their disjunction
(\autoref{lem:dsp-split} and \autoref{lem:aslp-split}), so
$\mathit{dsp} \sqsubseteq \mathsf{aspOf}(\mathit{dsp})$ and
$\mathsf{aspOf}(\mathit{aslp}) \sqsubseteq \mathit{aslp}$, and likewise for
$\mathsf{dslpOf}$. By transitivity,
$\mathsf{aspOf}(\mathit{dsp}) \sqsubseteq \mathit{asp}$ gives
$\mathit{dsp} \sqsubseteq \mathit{asp}$, and
$\mathit{asp} \sqsubseteq \mathsf{aspOf}(\mathit{aslp})$ gives
$\mathit{asp} \sqsubseteq \mathit{aslp}$.

\subsubsection{Contrapositive Relation}\label{sec:contrapositive-relation}

The taxonomy of triples has contrapositives between over-approximate 
and under-approximate $\mathit{wp}$ triples, and between the 
over-approximate and under-approximate $\mathit{sp}$ triples. 
We represent them as monadic relations between $\mathit{wp}$
monads and between $\mathit{sp}$ monads.

For $\mathit{wp}$, we again work with $\MonadType{\WSt}{S, A}$. We relate 
two $\mathit{wps}$ when each is the negation of the other applied to 
the negated postcondition:
\[
\begin{aligned}
  wp_1 \bowtie wp_2 &\coloneqq \forall \mathit{post}~s.~
     wp_1(\mathit{post})(s) \iff \neg\, wp_2(\lambda r.~\neg\, \mathit{post}(r))(s)
\end{aligned}
\]
This is a monadic relation between
the monads $\MonadType{\AWPSt}{S, A}$ and $\MonadType{\DWLPSt}{S, A}$ and the
monads $\MonadType{\DWPSt}{S, A}$ and $\MonadType{\AWLPSt}{S, A}$, pairing their
elements one to one. If $\mathit{wp}_1 \bowtie \mathit{wp}_2$, then the
under-approximate triple of $\mathit{wp}_1$ is the contrapositive of the
over-approximate triple of $\mathit{wp}_2$, that is,
$p \Rightarrow \mathit{wp}_1(q) \iff \mathit{wp}_2(\neg q) \Rightarrow \neg p$.
  The
relation is symmetric, so the over-approximate triple of $\mathit{wp}_2$ is
likewise the contrapositive of the under-approximate triple of $\mathit{wp}_1$.

For $\mathit{sp}$, we again work in $\MonadType{\SPSt}{S, A}$ and use the 
same relation, now negating the precondition:
\[
\begin{aligned}
sp_1 \bowtie sp_2 &\coloneqq \forall \mathit{pre}~r.~
     sp_1(\mathit{pre})(r) \iff \neg\, sp_2(\lambda s.~\neg\, \mathit{pre}(s))(r)
\end{aligned}
\]
This is a monadic relation between the monads $\MonadType{\ASPSt}{S, A}$ and $\MonadType{\DSLPSt}{S, A}$ 
 and between the monads $\MonadType{\DSPSt}{S, A}$ and $\MonadType{\ASLPSt}{S, A}$, 
pairing again their elements one to one. Unlike for implications, 
\bind preserves this relation, so no 
components are needed, even for $\DSPSt$ and $\ASLPSt$. 
The reason is that negation exchanges
the two components, $\neg\,\mathit{asp}(\neg\,\mathit{pre}) = \mathit{dslp}(\mathit{pre})$
and $\neg\,\mathit{dslp}(\neg\,\mathit{pre}) = \mathit{asp}(\mathit{pre})$, and,
by De Morgan, turns the conjunction used by $\DSPSt$ into the disjunction
used by $\ASLPSt$:
\[
\begin{aligned}
  \neg\,\mathit{dsp}(\neg\,\mathit{pre})
  &= \neg\,\bigl(\mathit{asp}(\neg\,\mathit{pre}) \land \mathit{dslp}(\neg\,\mathit{pre})\bigr)\\
  &= \neg\,\mathit{asp}(\neg\,\mathit{pre}) \lor \neg\,\mathit{dslp}(\neg\,\mathit{pre})\\
  &= \mathit{dslp}(\mathit{pre}) \lor \mathit{asp}(\mathit{pre})\\
  &= \mathit{aslp}(\mathit{pre}).
\end{aligned}
\]
If $\mathit{sp}_1 \bowtie \mathit{sp}_2$, then the under-approximate triple of
$\mathit{sp}_1$ is the contrapositive of the over-approximate triple of
$\mathit{sp}_2$, that is,
$q \Rightarrow \mathit{sp}_1(p) \iff \mathit{sp}_2(\neg p) \Rightarrow \neg q$,
and symmetrically.

\subsubsection{Adjunct Relations}
\label{sec:adjunct-relations}
The taxonomy of triples has two Galois connections connecting an over-approximate 
$\mathit{wp}$ triple and an under-approximate $\mathit{sp}$ triple, 
and vice versa.
Working with $\WSt$ and $\SPSt$, we define the relation
$\Ladj_{\mathsf{A}} : \MonadType{\SPSt}{S,A} \to \MonadType{\WSt}{S, A} \to \mathbb{P}$ capturing $\mathit{sps}$ that
are left adjoint to $\mathit{wps}$:
\[
\begin{aligned}
\Ladj_{\mathsf{A}}~sp~wp := \forall \mathit{pre}~\mathit{post}.~(sp(\mathit{pre}) \sqsubseteq
\mathit{post}) \iff (\mathit{pre} \sqsubseteq \mathit{wp}(\mathit{post}))
\end{aligned}
\]
In a dual fashion, the relation
$\Radj_{\mathsf{A}} : \MonadType{\SPSt}{S, A} \to \MonadType{\WSt}{S, A} \to \mathbb{P}$ relates
$\mathit{sps}$ right adjoint to $\mathit{wps}$:
\[
\begin{aligned}
\Radj_{\mathsf{A}}~sp~wp := \forall \mathit{pre}~\mathit{post}.~(wp(\mathit{post}) \sqsubseteq
\mathit{pre}) \iff (\mathit{post} \sqsubseteq \mathit{sp}(\mathit{pre}))
\end{aligned}
\]
Both relations are compatible with the monadic structures involved.
\begin{lemma}\label{lem:adjunct-relations}
\begin{enumerate}
\item $\Ladj$ is a monadic relation between $\SPSt$ equipped with the angelic
  monadic operations (\ref{eqn:angelic-sp-stateful}) and $\WSt$. Moreover, if
  $\Ladj_{\mathsf{A}}~sp~wp$, then $sp$ is disjunctive and strict while $wp$ is
  conjunctive and terminating.
\item $\Radj$ is a monadic relation between $\SPSt$ equipped with the demonic
  monadic operations (\ref{eqn:demonic-sp-stateful}) and $\WSt$. Moreover, if
  $\Radj_{\mathsf{A}}~sp~wp$, then $sp$ is conjunctive and terminating while
  $wp$ is disjunctive and strict.
\end{enumerate}
\end{lemma}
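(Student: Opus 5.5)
The plan is to reduce every obligation to the standard theory of Galois connections between complete lattices of predicates, and to use the explicit formulas (\ref{eqn:angelic-sp-stateful}) and (\ref{eqn:monad-wp}). For part 1, the "moreover" clause comes first, since it is used afterwards. If $\Ladj_{\mathsf{A}}~sp~wp$, then $sp$ is a left adjoint and hence preserves all joins. For any family $(p_i)_{i\in I}$, $sp(\bigvee_i p_i)\sqsubseteq q$ iff $\bigvee_i p_i\sqsubseteq wp(q)$ iff $\forall i.~p_i\sqsubseteq wp(q)$ iff $\forall i.~sp(p_i)\sqsubseteq q$ iff $\bigvee_i sp(p_i)\sqsubseteq q$. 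Taking $q$ to be each side in turn gives equality. The empty family gives strictness. Dually, $wp$ is a right adjoint and preserves all meets, which gives conjunctivity and (empty meet) termination. Monotonicity of both follows from the same argument.

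For the return condition, I check directly that $\ret_{\vASP}\,v~\mathit{pre}\sqsubseteq\mathit{post}$ unfolds to $\forall s.~\mathit{pre}~s\Rightarrow\mathit{post}(v,s)$. This is exactly $\mathit{pre}\sqsubseteq\lambda s.~\mathit{post}(v,s)=(\ret\,v)(\mathit{post})$.

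For bind, assume $\Ladj~m~w$ and $\forall a.~\Ladj~(f\,a)~(k\,a)$, and fix $\mathit{pre}$ and $\mathit{post}$. The argument is a chain of equivalences:
\begin{itemize}
\item $\bind_{\vASP}\,m\,f~\mathit{pre}\sqsubseteq\mathit{post}$ unfolds, since the existential sits on the left of an implication, to $\forall a.~f\,a~(\lambda s'.~m~\mathit{pre}~(a,s'))\sqsubseteq\mathit{post}$.
\item By the hypothesis on $f\,a$ and $k\,a$, this is $\forall a.~(\lambda s'.~m~\mathit{pre}~(a,s'))\sqsubseteq k\,a~\mathit{post}$.
\item Curried, this says $m~\mathit{pre}\sqsubseteq\lambda(a,s').~k\,a~\mathit{post}~s'$.
\item By the hypothesis on $m$ and $w$, this is $\mathit{pre}\sqsubseteq w(\lambda(a,s').~k\,a~\mathit{post}~s')$, which is exactly $(\bind\,w\,k)(\mathit{post})$.
\end{itemize}
So the whole argument is an unfolding plus two uses of the adjunction hypotheses.

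Part 2 is dual. The "moreover" clause: $sp$ is now a right adjoint, so it preserves meets (conjunctive and terminating), and $wp$ is a left adjoint, so it preserves joins (disjunctive and strict). For return, $\mathit{post}\sqsubseteq\ret_{\vDSLP}\,v~\mathit{pre}$ means $\forall s.~\mathit{post}(v,s)\Rightarrow\mathit{pre}~s$, i.e.\ $(\ret\,v)(\mathit{post})\sqsubseteq\mathit{pre}$. For bind, $\mathit{post}\sqsubseteq\bind_{\vDSLP}\,m\,f~\mathit{pre}$ unfolds to $\forall a.~\mathit{post}\sqsubseteq f\,a~(\lambda s'.~m~\mathit{pre}~(a,s'))$, since the universal sits on the right of an implication. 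The hypothesis on $f\,a$ and $k\,a$ turns this into $\forall a.~k\,a~\mathit{post}\sqsubseteq\lambda s'.~m~\mathit{pre}~(a,s')$. Uncurrying and using the hypothesis on $m$ and $w$ gives $w(\lambda(a,s').~k\,a~\mathit{post}~s')\sqsubseteq\mathit{pre}$.

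The main obstacle I expect is the quantifier bookkeeping in the bind step. Two manipulations must be justified. First, $(\exists a.~X_a)\sqsubseteq P$ must be equivalent to $\forall a.~X_a\sqsubseteq P$ pointwise over pairs $(b,s)$. Second, the curried predicate $\lambda s'.~m~\mathit{pre}~(a,s')$ must correctly line up with the postcondition passed to $w$. Both are purely logical and intuitionistically valid, so no excluded middle is needed. One point about the carrier needs care: the relation is stated on the unrestricted $\SPSt$ and $\WSt$, so bind must be well defined there. This holds because the formulas make sense for arbitrary transformers; monotonicity of $\WSt$ elements is preserved by bind as in \autoref{sec:background}.
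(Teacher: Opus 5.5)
Your proof is correct and follows the paper's reasoning. The paper justifies the healthiness clauses only by remarking that the left adjoint preserves all disjunctions and the right adjoint all conjunctions, which is exactly your Galois-connection argument; it does not spell out the \ret{} and \bind{} conditions (they are checked in Rocq), and your direct unfolding, using each adjunction hypothesis once and only intuitionistic quantifier moves, is the natural constructive argument for them.
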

In both cases, the left adjoint preserves all disjunctions and the right
adjoint all conjunctions. As a consequence, $\Ladj$ restricts to a monadic
relation between $\ASPSt$ and $\DWLPSt$, and $\Radj$ to one between
$\DSLPSt$ and $\AWPSt$.
\Cref{sec:galois} will further show that both pair elements one to one.
The relations $\Ladj$ and $\Radj$ are dual in the following sense:
\[
(wp_1 \bowtie wp_2\ \wedge sp_1 \bowtie sp_2) \quad \Rightarrow \quad (\mathsf{L}_{\mathsf{A}}~sp_1~wp_1 \iff \mathsf{R}_{\mathsf{A}}~sp_2~wp_2)
\]
 As a result, one can traverse the above diagrams along the
adjunction relation and across the diamonds along the contrapositive
relation.

\subsection{Base Case Collapse}\label{sec:collapse}

Without state, $\MonadType{\DSP}{A}$ and $\MonadType{\ASLP}{A}$ coincide with
$\MonadType{\ASP}{A}$ and $\MonadType{\DSLP}{A}$. They have the same elements,
\ret, and \bind, and hence the same natural order. 
This is why the left
diagram of \autoref{fig:monad-taxonomy} relates them by isomorphisms and 
two-headed implications. 

Every element of these monads is the transformer of a program, a relation
$R$ from the initial point to results (\autoref{sec:base-sp-monads}), and
for every $R$,
\[
\begin{aligned}
  \mathit{dsp}(\mathit{pre})(a) &= (R~a \land \mathit{pre}) \land (R~a \Rightarrow \mathit{pre})
     = R~a \land \mathit{pre} = \mathit{asp}(\mathit{pre})(a),\\
  \mathit{aslp}(\mathit{pre})(a) &= (R~a \land \mathit{pre}) \lor (R~a \Rightarrow \mathit{pre})
     = R~a \Rightarrow \mathit{pre} = \mathit{dslp}(\mathit{pre})(a).
\end{aligned}
\]
So the carriers coincide, and since \ret and \bind compute the transformers
of the returned and the sequenced program, they coincide too. The identity
is an isomorphism of ordered monads
$\MonadType{\DSP}{A} \cong \MonadType{\ASP}{A}$ and
$\MonadType{\ASLP}{A} \cong \MonadType{\DSLP}{A}$.

This is the reversibility collapse of \citet{VerschtK25}. For programs in
which every final state has at most one initial state, angelic and demonic
$\mathit{sps}$ coincide. Without state, every computation has this property,
so the collapse holds for the monads themselves, not only for particular
programs. With state, a result can be reached from two different
initial states, one inside $\mathit{pre}$ and one outside, as in
\autoref{fig:dsp-sequencing}, so they can be differentiated.

In summary, the relations of the left diagram of 
\autoref{fig:monad-taxonomy} hold for the base monads, and those of the right 
diagram hold with state. They are proved for the base and state case 
independently, but we believe that future work can introduce a 
way to lift these monadic relations via the state monad 
and arrow transformers.

\section{Specification of Monadic Programs Using Predicate Transformer Monads}\label{sec:verification}

Thus far, we have considered $\mathit{wps}$ and $\mathit{sps}$ independent 
of computations. However, in order to use a predicate transformer to specify a
program, a predicate transformer semantics should be fixed to interpret the behavior of the program.
For monadic programs, \citet{dm4all} propose to use a {\em monad morphism}
from the computation monad into the monad of predicate transformers to capture a
specific semantics.
A monad morphism $\theta : \MonadType{M}{A} \to \MonadType{W}{A}$ consists of a
family of maps for every type $\mathsf{A}$, 
satisfying the following equations:
\[
\begin{aligned}
\theta(\ret~v) &= \ret~v, \\
\theta(\bind~m~f) &= \bind~\theta(m)~(\lambda a.~\theta(f~a)).
\end{aligned}
\]
The predicate transformer produced by a monad morphism 
represents the most precise embodiment of the program semantics 
with respect to the interpretation. 
We use monad morphisms below in one of two ways:
to define a predicate transformer triple instantiated with a
$\mathit{(pre,post)}$ specification~\cite{loom}, {\em or} to compare the
predicate transformer of a program to other predicate transformer specifications
via an ordering on the predicate transformer monad~\cite{mumon,dm4all,dm4free}.

On the one hand, a $\mathit{(pre,post)}$ specification can be validated by checking:
$$\mathit{pre} \sqsubseteq \theta_{\WP}(m)(\mathit{post})$$
This approach is at the core of the Loom verification framework in Lean~\cite{loom}. 
Since this check works with any monad morphism, we can use {\em all} of
our predicate transformer monads to construct monadic variants of
{\em all} the triples from~\cite{VerschtK25}.

On the other hand, a $\mathit{wp}$ specification $w$ can be validated by checking:
$$\theta_{\WP}(m) \sqsubseteq \mathit{w}$$
In order to facilitate a structured interaction between composition and
the weakening of specifications,
the \bind of the monad of predicate transformers 
has to be monotone with respect to the ordering, as follows:
\[
  \frac{w_1 \sqsubseteq w_2 \qquad
        \forall (x : \mathsf{A}).~k_1\,x \sqsubseteq k_2\,x}
       {\bind^{\WP}\,w_1\,k_1 \sqsubseteq \bind^{\WP}\,w_2\,k_2}
\]
Monads with this property, also called specification monads, are integral to the
construction of Dijkstra monads and the heart of verification frameworks like
\fstar{}~\cite{mumon,dm4all,dm4free}. Of our new $\mathit{sp}$ monads, $\ASP$
and $\DSLP$ are specification monads and can, therefore, be used to construct
new Dijkstra monads based on $\mathit{sps}$.

The rest of this section works out and compares these two styles:
\autoref{sec:stateful-morphisms} gives monad morphisms from the stateful 
computation monad $\StTType{NDiv}{S}{A}$ into our eight stateful predicate transformer monads.
\autoref{sec:lifting-morphisms} derives them from morphisms
from the base computation monad $\mathsf{NDiv}$ into our base predicate
transformer monads. This lifting uses the state monad and arrow transformers
on the $\mathit{wp}$ and $\mathit{sp}$ sides, respectively.
\autoref{sec:triples} uses these morphisms to define under- and
over-approximate triples in the style of~\cite{loom} 
and discusses how they compose. 
\Cref{sec:dms} turns to the Dijkstra monad-style and 
discusses when the two styles coincide. Finally, we build 
Dijkstra monads over the six specification monads, excluding 
$\DSP$ and $\ASLP$, whose \bind{}s are not monotone for their natural orders.

\subsection{Constructing Stateful Monad Morphisms}\label{sec:stateful-morphisms}

As stated above, a common approach to derive a predicate transformer 
that interprets the behavior of a program is to construct a 
monad morphism from the computation monad to the monad of predicate 
transformers.
Such a morphism typically interperets the outcomes of 
the computation. For nondeterminism and nontermination, a computation 
applied to a start state gives the set of all its outcomes, each a result 
or divergence, which the morphism can interpret 
angelically (some result must satisfy the postcondition) or demonically 
(all of them must), forbiding divergence (total) or permitting it (liberal).

We start with a base computation monad with {\em only} nondeterminism 
and nontermination, which we call $\mathsf{NDiv}$, and construct the 
stateful version by applying the state monad transformer,
$\StTType{NDiv}{S}{A}$, which we will call $\mathsf{State}$.
A computation $m$ of the computation monad maps a start state $s$ to a non-empty set
$m~s$ of outcomes. An outcome $o$ is either a result $r : \mathsf{A} \times \mathsf{S}$, written $\result{r}$, or
divergence, written $\mathsf{div}$. We write $\theta^{\mathsf{St}}_{\vDWP}$ for the 
monad morphism into $\MonadType{\DWPSt}{S, A}$, with the interpreted effect on top and the variant below, and likewise for the other seven. 
The four $\mathit{wp}$ monad morphisms are:
\begin{equation}
  \label{eqn:monad-morphisms-stateful-wp}
\begin{aligned}
\theta^{\mathsf{St}}_{\vDWP}(m)
  &:= \lambda~\mathit{post}~s.\,\forall o \in m~s.\;
      \mathsf{match}~o~\mathsf{with}~\result{r} \mapsto \mathit{post}~r \mid \mathsf{div} \mapsto \bot,\\
\theta^{\mathsf{St}}_{\vDWLP}(m)
  &:= \lambda~\mathit{post}~s.\,\forall o \in m~s.\;
      \mathsf{match}~o~\mathsf{with}~\result{r} \mapsto \mathit{post}~r \mid \mathsf{div} \mapsto \top,\\
\theta^{\mathsf{St}}_{\vAWP}(m)
  &:= \lambda~\mathit{post}~s.\,\exists o \in m~s.\;
      \mathsf{match}~o~\mathsf{with}~\result{r} \mapsto \mathit{post}~r \mid \mathsf{div} \mapsto \bot,\\
\theta^{\mathsf{St}}_{\vAWLP}(m)
  &:= \lambda~\mathit{post}~s.\,\exists o \in m~s.\;
      \mathsf{match}~o~\mathsf{with}~\result{r} \mapsto \mathit{post}~r \mid \mathsf{div} \mapsto \top.
\end{aligned}
\end{equation}
The four $\mathit{sp}$ monad morphisms quantify over the start states and their outcomes:
\begin{equation}
  \label{eqn:monad-morphisms-stateful-sp}
\begin{aligned}
\theta^{\mathsf{St}}_{\vASP}(m)
  &:= \lambda~\mathit{pre}~r.\,\exists s.\;\exists o \in m~s.\;\mathit{pre}~s \land o = r,\\
\theta^{\mathsf{St}}_{\vDSLP}(m)
  &:= \lambda~\mathit{pre}~r.\,\forall s.\;\forall o \in m~s.\;o = r \Rightarrow \mathit{pre}~s,\\
\theta^{\mathsf{St}}_{\vDSP}(m)
  &:= \lambda~\mathit{pre}~r.\,\theta^{\mathsf{St}}_{\vASP}(m)~\mathit{pre}~r
      \land \theta^{\mathsf{St}}_{\vDSLP}(m)~\mathit{pre}~r,\\
\theta^{\mathsf{St}}_{\vASLP}(m)
  &:= \lambda~\mathit{pre}~r.\,\theta^{\mathsf{St}}_{\vASP}(m)~\mathit{pre}~r
      \lor \theta^{\mathsf{St}}_{\vDSLP}(m)~\mathit{pre}~r.
\end{aligned}
\end{equation}
As in \citet{VerschtK25}, where $\mathit{dsp}$ and $\mathit{aslp}$ are
defined as the conjunction and the disjunction of $\mathit{asp}$ and
$\mathit{dslp}$, we define
$\theta^{\mathsf{St}}_{\vDSP}$ and $\theta^{\mathsf{St}}_{\vASLP}$ as the
pointwise conjunction and disjunction of $\theta^{\mathsf{St}}_{\vASP}$
and $\theta^{\mathsf{St}}_{\vDSLP}$.

\subsection{Lifting Morphisms}\label{sec:lifting-morphisms}

Now, we describe how we could have derived those same 
stateful monad morphisms from base monad morphisms 
using the state monad and arrow transformers. 

For $\mathit{wps}$, we construct base monad morphisms from $\mathsf{NDiv}$ into the base 
$\mathit{wp}$ monads, which we write without the superscript. As shown below, they largely mirror 
their stateful counterparts:
\begin{equation}
\begin{aligned}
\theta_{\vDWP}(m)
  &:= \lambda~\mathit{post}.\,\forall o \in m.\;
      \mathsf{match}~o~\mathsf{with}~\result{a} \mapsto \mathit{post}~a \mid \mathsf{div} \mapsto \bot,\\
\theta_{\vDWLP}(m)
  &:= \lambda~\mathit{post}.\,\forall o \in m.\;
      \mathsf{match}~o~\mathsf{with}~\result{a} \mapsto \mathit{post}~a \mid \mathsf{div} \mapsto \top,\\
\theta_{\vAWP}(m)
  &:= \lambda~\mathit{post}.\,\exists o \in m.\;
      \mathsf{match}~o~\mathsf{with}~\result{a} \mapsto \mathit{post}~a \mid \mathsf{div} \mapsto \bot,\\
\theta_{\vAWLP}(m)
  &:= \lambda~\mathit{post}.\,\exists o \in m.\;
      \mathsf{match}~o~\mathsf{with}~\result{a} \mapsto \mathit{post}~a \mid \mathsf{div} \mapsto \top.
\end{aligned}
\end{equation}
The state monad transformer also lifts monad morphisms: if $\theta$ is a
monad morphism from $\mathsf{NDiv}$ to $\WP$, then applying it at each
initial state, $\lambda m~s.~\theta(m~s)$, is a monad morphism from
$\StTType{NDiv}{S}{A}$ to $\StTType{\WP}{S}{A} = \MonadType{\WSt}{S, A}$.
In fact, this is how the stateful $\mathit{wp}$ morphisms of
\cref{eqn:monad-morphisms-stateful-wp} are defined. 
We also use the exception monad transformer, $\mathsf{ExcT}$, below the state
monad transformer, giving monad morphisms from $\StTType{ExcT(NDiv)}{S}{A}$
to $\MonadType{\WStExc}{S, A}$, defined in \cref{app:exc-morphisms}.

For $\mathit{sps}$, we construct base \emph{arrow} morphisms from 
the Kleisli arrow of $\mathsf{NDiv}$ into the four base $\mathit{sp}$ arrows.
They largely mirror their stateful counterparts, and since arrows have a
non-unit input type, unlike the base $\mathit{sp}$ monads, the morphisms
into $\DSPArr$ and $\ASPArr$, and into $\DSLPArr$ and $\ASLPArr$, remain distinct:
\begin{equation}
\begin{aligned}
\theta_{\ASPArr}(f)
  &:= \lambda~\mathit{pre}~a.\,\exists b.\;\exists o \in f~b.\;\mathit{pre}~b \land o = a,\\
\theta_{\DSLPArr}(f)
  &:= \lambda~\mathit{pre}~a.\,\forall b.\;\forall o \in f~b.\;o = a \Rightarrow \mathit{pre}~b,\\
\theta_{\DSPArr}(f)
  &:= \lambda~\mathit{pre}~a.\,\theta_{\ASPArr}(f)~\mathit{pre}~a
      \land \theta_{\DSLPArr}(f)~\mathit{pre}~a,\\
\theta_{\ASLPArr}(f)
  &:= \lambda~\mathit{pre}~a.\,\theta_{\ASPArr}(f)~\mathit{pre}~a
       \lor \theta_{\DSLPArr}(f)~\mathit{pre}~a.
\end{aligned}
\end{equation}
The state arrow transformer likewise lifts arrow morphisms: if $\theta$ is an
arrow morphism from the Kleisli arrow of $\mathsf{NDiv}$,
$\mathsf{Kl}(\mathsf{NDiv})$, to an $\mathit{sp}$ arrow $\mathsf{Arr}$, then
the same $\theta$, at the input and output types $\mathsf{B} \times \mathsf{S}$
and $\mathsf{A} \times \mathsf{S}$, is an arrow morphism from
$\StTArrType{Kl(NDiv)}{S}{B}{A}$ to $\StTArrType{Arr}{S}{B}{A}$. Fixing the
input to $1$ turns it into a monad morphism from $\StTType{NDiv}{S}{A}$ to
the stateful $\mathit{sp}$ monad, since a stateful computation $m$ is the
arrow $\lambda(\ast, s).~m~s$ from $1 \times \mathsf{S}$. The stateful $\mathit{sp}$ morphisms of \cref{eqn:monad-morphisms-stateful-sp}
are exactly these lifts; for instance,
$\theta^{\mathsf{St}}_{\vASP}(m) = \theta_{\ASPArr}(\lambda(\ast, s).~m~s)$.
We also use the exception arrow transformer, $\mathsf{ExcArrT}$, below the
state arrow transformer, which in the same way gives monad morphisms from
$\StTType{ExcT(NDiv)}{S}{A}$ into the $\mathit{sp}$ monads with state and
exceptions, defined in \cref{app:exc-morphisms}.

\subsection{Triples from Monad Morphisms}\label{sec:triples}

We use the generic $\mathit{wp}$ triples of Loom~\cite{loom}, of the
following form:
\[ 
  \mathit{pre} \sqsubseteq \theta_{\WP}(m)(\mathit{post}) 
\]
where the above ordering corresponds to under-approximate 
triples and the flipped ordering corresponds to over-approximate 
triples. 

We define a generic $\mathit{sp}$ triple analogously:
\[ 
  \mathit{post} \sqsubseteq \theta_{\SP}(m)(\mathit{pre}) 
\]

For each stateful monad morphism of \autoref{sec:stateful-morphisms},
we fix a program $\mathit{m}$, a precondition $\mathit{p}$, and
a postcondition $\mathit{q}$, and construct the corresponding over-
and under-approximate
$\mathit{wp}$ and $\mathit{sp}$ triples, where the over-approximate version uses the opposite ordering on predicates
in the triple definitions above.

Each edge of diagram (b) of \autoref{fig:monad-taxonomy} yields a connection between triples; we give one
example per kind of edge.
$\ASLPSt$ is related to $\ASPSt$ by the implication relation,
so we obtain an implication between the over-approximate triples
$\theta^{\mathsf{St}}_{\vASLP}(m)(p) \sqsubseteq q$ and $\theta^{\mathsf{St}}_{\vASP}(m)(p) \sqsubseteq q$,
and the reverse one between the under-approximate triples.
$\ASPSt$ is related to $\DSLPSt$ by the 
contrapositive relation, so we obtain a contrapositive
relation between $\theta^{\mathsf{St}}_{\vASP}(m)(p) \sqsubseteq q$ and 
$\neg q \sqsubseteq \theta^{\mathsf{St}}_{\vDSLP}(m)(\neg p)$.
Finally, $\ASPSt$ is related to $\DWLPSt$ by the 
adjunction relation, so we obtain an adjunction, which
at this level is a Galois connection,
between $\theta^{\mathsf{St}}_{\vASP}(m)(p) \sqsubseteq q$ and
$p \sqsubseteq \theta^{\mathsf{St}}_{\vDWLP}(m)(q)$.
This allows us to reconstruct all of the 
triples in~\cite{VerschtK25} and derive the same relationships.

We can also use these triples 
to verify monadic programs with a concrete pre- and 
postcondition specification in a style inspired by Loom~\cite{loom}.
Notably, we can verify incorrectness logic triples, focusing here on two
variants: O'Hearn's incorrectness
logic~\cite{OHearn20}, where triple validity is determined by
$q \Rightarrow \mathit{asp}\llbracket c \rrbracket(p)$, and
Lisbon logic~\cite{Moller21,loom}, where triple validity 
is determined by $p \Rightarrow \mathit{awp}\llbracket c \rrbracket(q)$.

Consider the following monadic program $m : \MonadType{State}{\mathbb{N},
  1}$:
\[
m := \lambda s.
\begin{cases}
  \{\,((), 0),\; ((), s)\,\} & \text{if $s$ is even,} \\
  \{\,((), s)\,\}            & \text{otherwise.}
\end{cases}
\]
Evaluating the monad morphisms targeting $\AWPSt$ and $\ASPSt$ gives:
\[
\begin{aligned}
\theta^{\mathsf{St}}_{\vAWP}(m)
  &= \lambda~\mathit{post}~s.\;
     \mathit{post}~((), s) \lor (s \text{ is even} \land \mathit{post}~((), 0)), \\
\theta^{\mathsf{St}}_{\vASP}(m)
  &= \lambda~\mathit{pre}~((), s').\;
     \mathit{pre}~s' \lor (s' = 0 \land \exists s.\; s \text{ is even} \land \mathit{pre}~s).
\end{aligned}
\]
Consider the following precondition, $p := \lambda s.~s > 0$, and 
postcondition, $q := \lambda (\_, s').~s' = 0$. Now, evaluating the triple 
validity as shown above gives:
\[
\begin{aligned}
\theta^{\mathsf{St}}_{\vAWP}(m)(q) &= \lambda~s.\; s \text{ is even}
  && \text{so } p \sqsubseteq \theta^{\mathsf{St}}_{\vAWP}(m)(q) \text{ fails at } s = 1, \\
\theta^{\mathsf{St}}_{\vASP}(m)(p) &= \lambda~((), s').\; \top
  && \text{so } q \sqsubseteq \theta^{\mathsf{St}}_{\vASP}(m)(p) \text{ holds}.
\end{aligned}
\]

This example illustrates that the two incorrectness logics are indeed
different: as noted in the introduction, O'Hearn's incorrectness logic has no
{\em wp} characterization~\cite{VerschtK25}. Intuitively,
the Lisbon triple needs {\em every} initial state in $\mathit{p}$ to reach 
some final state in $\mathit{q}$, 
while incorrectness logic only needs every final state in $\mathit{q}$ to be reached 
by {\em some} initial state in $\mathit{p}$. 
If we slightly modify the precondition to $p' := \lambda s.~s > 0 \land s \text{~is even}$,
then both triples hold:
\[
\begin{aligned}
\theta^{\mathsf{St}}_{\vAWP}(m)(q) &= \lambda~s.\; s \text{ is even}
  && \text{so } p' \sqsubseteq \theta^{\mathsf{St}}_{\vAWP}(m)(q) \text{ holds}, \\
\theta^{\mathsf{St}}_{\vASP}(m)(p') &= \lambda~((), s').\; s' \text{ is even}
  && \text{so } q \sqsubseteq \theta^{\mathsf{St}}_{\vASP}(m)(p') \text{ holds}.
\end{aligned}
\]

\subsubsection{Composition}

Monadic programs are sequenced using \bind, where the continuation
takes the returned value of the first computation. Triples can be sequenced
similarly. Given $p \sqsubseteq \theta^{\mathsf{St}}_{\WP}(m)(q_0)$ and, for every returned
value $a$, $p_a \sqsubseteq \theta^{\mathsf{St}}_{\WP}(f~a)(q)$, assume also that
$q_0(a, s) \Rightarrow p_a(s)$ for all $a$ and $s$.
By monotonicity of $\theta^{\mathsf{St}}_{\WP}(m)$, we have
$p \sqsubseteq \theta^{\mathsf{St}}_{\WP}(m)(\lambda (a, s).~\theta^{\mathsf{St}}_{\WP}(f~a)(q)(s))$.
Here we recognize $\WP$'s \bind from \cref{eqn:monadic-operations-wp}.
By the monad morphism law, this simplifies to
$p \sqsubseteq \theta^{\mathsf{St}}_{\WP}(\bind~m~f)(q)$.
The over-approximate rule flips
the three orderings and side condition, $p_a(s) \Rightarrow q_0(a, s)$, and holds for the same reason, so all eight kinds of
$\mathit{wp}$ triples compose.

The $\mathit{sp}$ triples are less uniform. For $\ASPSt$ and
$\DSLPSt$, the analogous rules hold in both directions. This covers
Hoare logic, as the over-approximate $\ASPSt$ triple, and
incorrectness logic~\cite{OHearn20}, as the under-approximate one.
An under-approximate $\DSPSt$ triple is the conjunction of an
$\ASPSt$ triple and a $\DSLPSt$ triple, and so is an
over-approximate $\ASLPSt$ triple, so these compose through their
components.

The two remaining kinds, over-approximate $\DSPSt$ triples and
under-approximate $\ASLPSt$ triples, do {\em not} compose.
\autoref{fig:dsp-sequencing}(a) is a counterexample for $\DSPSt$:
take $p = \{1\}$, $q_0 = p_a = \{2\}$, and $q = \emptyset$. The first
computation takes $p$ exactly to $q_0$. From $p_a$, the second computation
gives the empty postcondition, because state $4$ is also reachable from
state $3$, which is outside of $p_a$. Both premises and the side condition
hold, but the conclusion fails: in the sequenced computation, state $4$ is
reachable only from $p$, since the first computation never produces
state $3$.
\autoref{fig:dsp-sequencing}(b) gives the dual counterexample for
$\ASLPSt$. In both cases, the intermediate assertions do not say
which intermediate states the first computation can actually produce.

\subsection{Predicate Transformers as Ordered Specifications}\label{sec:dms}

We next cover another way of specifying a monadic program: directly by a
predicate transformer, which is compared against the one computed by the monad
morphism. While the generic triple definitions differ between $\mathit{wp}$
and $\mathit{sp}$, comparing a specification given as a
predicate transformer $\mathit{pt}$ against the one computed by the monad
morphism takes the same form for all variants: an inequation in the order of
the predicate transformer monad,
\[
  \theta(m) \sqsubseteq \mathit{pt}
  \qquad\text{or}\qquad
  \mathit{pt} \sqsubseteq \theta(m),
\]
depending on the direction of approximation. The first states that
$\mathit{pt}$ over-approximates the transformer computed for $m$, and the
second that it under-approximates it.

As explained at the start of \autoref{sec:verification},
for specification monads, \bind must be monotone
with respect to the ordering. This allows us to arbitrarily
weaken specifications and compose them in the expected manner, as shown for $\mathit{wps}$
\[
  \frac{\theta_{\WP}(m) \sqsubseteq w_m \qquad
        \forall (a : \mathsf{A}).~\theta_{\WP}(f~a) \sqsubseteq w_f\,a}
       {\theta_{\WP}(\bind\,m\,f) \sqsubseteq \bind\,w_m\,w_f}
\]
where the conclusion follows from the monad morphism laws. The same rule
holds with the order reversed, for specifications that under-approximate.

For all the $\mathit{wp}$ monads, $\ASPSt$, and $\DSLPSt$, which are specification
monads, such comparisons compose as above and give rise to the corresponding
Dijkstra monads, which we will shortly define.
For $\DSPSt$ and $\ASLPSt$, \bind is not monotone with
respect to the natural ordering on $\mathit{sps}$ defined in~\autoref{sec:implication-relation}.

Such a comparison verifies a monadic program against a concrete
predicate transformer specification. 
We demonstrate this technique on Lisbon logic, for the same program as before and the following
$\AWPSt$ specification: $\lambda~\mathit{post}~s.\; s > 0 \land s \text{ is even} \land \mathit{post}~((), 0)$.
Evaluating the monad morphism gives:
\[
\begin{aligned}
\theta^{\mathsf{St}}_{\vAWP}(m)
  &= \lambda~\mathit{post}~s.\; \mathit{post}~((), s) \lor (s \text{ is even} \land \mathit{post}~((), 0)).
\end{aligned}
\]
Since Lisbon logic is under-approximate, the comparison is
$\mathit{pt} \sqsubseteq \theta^{\mathsf{St}}_{\vAWP}(m)$, which unfolds to
\[
\forall p~s.\quad s > 0 \land s \text{ is even} \land p~((), 0)
  \Rightarrow p~((), s) \lor (s \text{ is even} \land p~((), 0))
\]
and this holds by the second disjunct. 

In fact, this $\AWPSt$ predicate transformer is equivalent 
to the $\AWPSt$ pre- and postcondition pair from before. 
This is an established result between some sets of triples and orderings
over predicate transformers; for example, \citet{mumon} write pre- and postcondition
specifications as abbreviations for $\mathit{wps}$.
It is more natural to work with pairs of pre- and postconditions~\cite{Leino05}.
To connect these specifications to predicate transformers, Swamy et al. showed
that any pre- and postcondition pair, $(\mathit{pre}, \mathit{post})$,
can be embedded into a $\mathit{wp}$ as follows
\[ 
  \mathsf{pp2wp}(\mathit{pre, post}) := \lambda \mathit{p}~s.~\mathit{pre}~s \land \forall r.~\mathit{post~r} \Rightarrow \mathit{p~r}
\]
such that $\mathit{pre} \sqsubseteq \theta^{\mathsf{St}}_{\WP}(m)(\mathit{post})$ if and only if 
$\mathsf{pp2wp}(\mathit{pre, post}) \sqsubseteq \theta^{\mathsf{St}}_{\WP}(m)$.
  We name this 
embedding $\mathsf{pp2wp_{under}}$ to represent that it corresponds to 
the under-approximate $\mathit{wp}$ triple. 

Over-approximate $\mathit{wp}$ triples can also be expressed in terms of the
order on $\WP$ thanks to a dual embedding $\mathsf{pp2wp_{over}}$:
\[ 
  \mathsf{pp2wp_{over}}~(\mathit{pre}, \mathit{post}) := \lambda~\mathit{p}~s.~(\forall r.~\mathit{p}~r \Rightarrow \mathit{post}~r) \Rightarrow \mathit{pre}~s
\]
With both orderings flipped, we have that $\theta^{\mathsf{St}}_{\WP}(m)(\mathit{post}) \sqsubseteq \mathit{pre}$ if and only if 
$\theta^{\mathsf{St}}_{\WP}(m) \sqsubseteq \mathsf{pp2wp_{over}}(\mathit{pre, post})$.
Both embeddings produce monotone $\mathit{wps}$, but they may lie outside the carrier of the
$\mathit{wp}$ monad of the triple. The comparison therefore uses the order of
$\WSt$, of which the four $\mathit{wp}$ monads are submonads.

A natural question is whether similar embeddings exist for the 
two $\mathit{sp}$ specification monads. We show that they exist for over-approximate 
$\ASPSt$ triples and under-approximate $\DSLPSt$
triples, exactly those triples that are in a Galois connection 
with a $\mathit{wp}$ variant:
\[
\begin{aligned}
\mathsf{pp2asp_{over}}(\mathit{pre}, \mathit{post})
  &:= \lambda~p~r.~\exists s.~p~s \land (\mathit{pre}~s \Rightarrow \mathit{post}~r), \\
\mathsf{pp2dslp_{under}}(\mathit{pre}, \mathit{post})
  &:= \lambda~p~r.~\forall s.~(\mathit{post}~r \Rightarrow \mathit{pre}~s) \Rightarrow p~s
\end{aligned}
\]
such that $\theta^{\mathsf{St}}_{\vASP}(m)(\mathit{pre}) \sqsubseteq \mathit{post}$ if and only if
$\theta^{\mathsf{St}}_{\vASP}(m) \sqsubseteq \mathsf{pp2asp_{over}}(\mathit{pre}, \mathit{post})$
and 
$\mathit{post} \sqsubseteq \theta^{\mathsf{St}}_{\vDSLP}(m)(\mathit{pre})$ if and only if
$\mathsf{pp2dslp_{under}}(\mathit{pre}, \mathit{post}) \sqsubseteq \theta^{\mathsf{St}}_{\vDSLP}(m)$.

The other two $\mathit{sp}$ triples have no such embedding:

\begin{lemma}\label{lem:no-embedding}
Let $\mathsf{S}$ have two distinct states and let $\mathsf{A}$ be inhabited.
\begin{enumerate}
\item The under-approximate $\ASPSt$ triple, that is, the incorrectness
  triple, has no embedding: for some $\mathit{pre}$ and $\mathit{post}$,
  there is no element $w$ of $\ASPSt$ such that, for every $m$,
  $w \sqsubseteq \theta^{\mathsf{St}}_{\vASP}(m)$ holds exactly when
  $\mathit{post} \sqsubseteq \theta^{\mathsf{St}}_{\vASP}(m)(\mathit{pre})$
  does.
\item The over-approximate $\DSLPSt$ triple has no embedding: for some
  $\mathit{pre}$ and $\mathit{post}$, there is no element $w$ of $\DSLPSt$
  such that, for every $m$, $\theta^{\mathsf{St}}_{\vDSLP}(m) \sqsubseteq w$
  holds exactly when
  $\theta^{\mathsf{St}}_{\vDSLP}(m)(\mathit{pre}) \sqsubseteq \mathit{post}$
  does.
\end{enumerate}
\end{lemma}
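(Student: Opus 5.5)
The plan for part 1 is a ``locality'' argument. Any $w$ in $\ASPSt$ is disjunctive and strict, so the comparison $w \sqsubseteq \theta^{\mathsf{St}}_{\vASP}(m)$ can be checked one initial state at a time. The incorrectness triple, in contrast, asks that every final state in $\mathit{post}$ be reached from \emph{some} initial state in $\mathit{pre}$. That existential is not local, and we will exploit the mismatch.

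First I will show the locality claim: for every $w \in \ASPSt$ and every $m$,
\[
w \sqsubseteq \theta^{\mathsf{St}}_{\vASP}(m) \iff \forall s.~w(\lambda t.\,t = s) \sqsubseteq \theta^{\mathsf{St}}_{\vASP}(m)(\lambda t.\,t = s).
\]
For the direction from right to left, write an arbitrary $\mathit{pre}$ as the disjunction of the family $(\lambda t.\,t=s)_{s \in \{s \mid \mathit{pre}~s\}}$. Disjunctivity of $w$ (with strictness covering the empty family) gives $w(\mathit{pre}) = \bigvee_{\mathit{pre}~s} w(\lambda t.\,t=s)$. Monotonicity of $\theta^{\mathsf{St}}_{\vASP}(m)$ then finishes this direction. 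The direction from left to right is immediate. I also note from the definition that $\theta^{\mathsf{St}}_{\vASP}(m)(\lambda t.\,t=s)(r)$ holds exactly when $\mathsf{val}(r) \in m~s$. So this value depends only on the outcome set $m~s$.

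Next I choose the counterexample. Let $\mathsf{S} \supseteq \{s_0,s_1\}$ with $s_0 \neq s_1$, fix $a : \mathsf{A}$, and take $\mathit{pre} := \top$ and $\mathit{post} := \lambda r.\,r = (a,s_0)$. The triple then says that some initial state can reach $(a,s_0)$. Define three programs, each sending every other state $s$ to $\{\mathsf{val}(a,s)\}$:
\begin{itemize}
\item $m_1$ sends $s_0 \mapsto \{\mathsf{val}(a,s_0)\}$ and $s_1 \mapsto \{\mathsf{val}(a,s_1)\}$;
\item $m_2$ sends $s_0 \mapsto \{\mathsf{val}(a,s_1)\}$ and $s_1 \mapsto \{\mathsf{val}(a,s_0)\}$;
\item $m_3$ sends $s_0 \mapsto m_2~s_0$ and $s_1 \mapsto m_1~s_1$.
\end{itemize}
Both $m_1$ and $m_2$ satisfy the triple. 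The program $m_3$ never produces $(a,s_0)$ from $s_0$ or $s_1$, and it does not from any other state either, so $m_3$ violates the triple.

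Now suppose $w$ were an embedding. Then $w \sqsubseteq \theta(m_1)$ and $w \sqsubseteq \theta(m_2)$. By locality, the singleton condition at $s_0$ holds for $m_3$ because it holds for $m_2$. The condition at $s_1$ holds because it holds for $m_1$. The condition at any other state holds because all three programs agree there. So $w \sqsubseteq \theta(m_3)$, which contradicts the embedding property. The step I expect to need the most care is the locality decomposition. It uses disjunctivity over a possibly infinite, and possibly empty, family of singletons, and the family must be indexed exactly as in the carrier restriction of $\ASPSt$.

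For part 2 I will reduce to part 1 via the contrapositive relation $\bowtie$ of \autoref{sec:contrapositive-relation}. The definitions give $\theta^{\mathsf{St}}_{\vDSLP}(m)(\mathit{pre}) = \neg\,\theta^{\mathsf{St}}_{\vASP}(m)(\neg\mathit{pre})$. Given a putative $w \in \DSLPSt$, set $w'(p) := \neg w(\neg p)$. Classically, by De Morgan, $w'$ is disjunctive and strict, so $w' \in \ASPSt$. Negation also reverses $\sqsubseteq$, which gives two equivalences:
\[
\theta^{\mathsf{St}}_{\vDSLP}(m) \sqsubseteq w \iff w' \sqsubseteq \theta^{\mathsf{St}}_{\vASP}(m),
\]
\[
\theta^{\mathsf{St}}_{\vDSLP}(m)(\mathit{pre}) \sqsubseteq \mathit{post} \iff \neg\mathit{post} \sqsubseteq \theta^{\mathsf{St}}_{\vASP}(m)(\neg\mathit{pre}).
\]
So an embedding $w$ for $(\mathit{pre},\mathit{post})$ yields an embedding $w'$ for the incorrectness triple at $(\neg\mathit{pre},\neg\mathit{post})$. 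Choosing $\mathit{pre} := \bot$ and $\mathit{post} := \lambda r.\,r \neq (a,s_0)$ gives $\neg\mathit{pre} = \top$ and $\neg\mathit{post} = (\lambda r.\,r = (a,s_0))$, which is exactly the pair refuted in part 1. So part 2 follows from part 1.
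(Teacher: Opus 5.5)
Your proof is correct. Its core matches the paper's. The locality step is the decomposition $w(p) = \bigvee_{s \in p} w(\{s\})$ that the paper gets from strictness and disjunctivity. Your part 2 is the duality $w \mapsto \lambda p.\,\neg w(\neg p)$ that the paper notes at the end of its appendix proof of part 2; its main argument there instead repeats the reasoning dually, using conjunctivity and $w(p) = \bigwedge_{s \notin p} w(\mathsf{S} \setminus \{s\})$.

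The two proofs differ in how part 1 is closed. The paper takes $m_1$ and $m_2$ that reach $r_0$ only from $s_0$, resp.\ $s_1$, and diverge from every other start state. At every singleton one of the two upper bounds is then $\bot$, which forces $w = \bot$. After that, any program that fails the triple, such as the always-diverging one, gives the contradiction. You instead splice $m_2$ at $s_0$ and $m_1$ at $s_1$ into a third program $m_3$. It inherits both bounds state by state, yet reaches $(a, s_0)$ from no start state.

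What your route buys: it needs neither divergence nor vanishing bounds, since all three of your programs are deterministic and total. So it shows the incorrectness triple has no embedding even on that fragment. What the paper's route buys: the sharper structural fact that every candidate embedding collapses to $\bot$, so the choice of violating program no longer matters.
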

\begin{proof}
We prove the first part; the second is dual, with conjunctivity in place of
disjunctivity, and we prove it in \cref{app:dslp-no-embedding}.
An element $w$ of $\ASPSt$ is strict and disjunctive, so it satisfies
$w(p) = \bigvee_{s \in p} w(\{s\})$: it commits to the start state from
which each result is reached, whereas the incorrectness triple only asks that
the results in $\mathit{post}$ be reachable from {\em some} state of
$\mathit{pre}$. Take distinct states $s_0$ and $s_1$, a result $r_0$,
$\mathit{pre} = \{s_0, s_1\}$, and $\mathit{post} = \{r_0\}$.
A program $m_1$ whose only result is $r_0$, reached only from $s_0$, and a
program $m_2$ whose only result is $r_0$, reached only from $s_1$, both
satisfy the triple, so such a $w$ would satisfy
$w \sqsubseteq \theta^{\mathsf{St}}_{\vASP}(m_1)$ and
$w \sqsubseteq \theta^{\mathsf{St}}_{\vASP}(m_2)$.
At a single start state $s$, one of the two bounds is always $\bot$, so
$w(\{s\}) = \bot$, and hence $w = \bot$ by disjunctivity.
But $\bot \sqsubseteq \theta^{\mathsf{St}}_{\vASP}(m)$ holds for every $m$,
so the triple would have to hold for every program. It does not hold for a
program that always diverges: such a program reaches no result, so $r_0$ is
not reachable from $\mathit{pre}$.
\end{proof}

Now, this does not stop one from specifying a program with an 
$\ASPSt$ predicate transformer, and some of those predicate 
transformers might correspond to incorrectness logic triples. 
However, there does not exist an embedding that converts 
an incorrectness triple into a corresponding $\ASPSt$ predicate 
transformer for every triple.

\subsubsection{Dijkstra Monads}

The monad laws make specifications compose. Monotonicity of \bind
extends this structured composition to weakened specifications, as described above.
This property is necessary for constructing Dijkstra monads. 

\begin{definition}[Dijkstra monad~{\cite[\S5]{dm4all}}]
A \emph{Dijkstra monad} $D$ over a specification monad $W$, that is, a monad
with a preorder $\sqsubseteq_A$ on each $W\,A$ for which $\bind^{W}$
is monotone, is given by:
\begin{itemize}
  \item[$\triangleright$] for every type $A$ and specification $w : W\,A$, a
    type $D\,A\,w$ of computations specified by $w$,
  \item[$\triangleright$] a return and a bind specified by the return and
    bind of $W$, satisfying the monad laws (which are well-typed because $W$
    satisfies them),
  \item[$\triangleright$] and a weakening along the order of $W$, compatible
    with \bind:
\end{itemize}
\[
\begin{gathered}
  \frac{v : A}{\ret^{D}\,v : D\,A\,(\ret^{W}\,v)}
  \qquad
  \frac{m : D\,A\,w_1 \qquad
        \forall (x : A).~f\,x : D\,B\,(w_2\,x)}
       {\bind^{D}\,m\,f : D\,B\,(\bind^{W}\,w_1\,w_2)}
  \\[1.5ex]
  \frac{m : D\,A\,w \qquad w \sqsubseteq_A w'}
       {\mathsf{weaken}\,m : D\,A\,w'}
\end{gathered}
\]
\end{definition}

The comparison rule for \bind at the start of this subsection is exactly
this typing rule for \bind, with $D\,A\,w$ being the computations $m$ such
that $\theta(m) \sqsubseteq w$. For specifications that under-approximate, the
computations with $w \sqsubseteq \theta(m)$ form a Dijkstra monad for the
reversed order, for which \bind is monotone as well. 
We construct Dijkstra monads indexed with specifications from $\WSt$, through
each of the four $\mathit{wp}$ monad morphisms, and from $\ASPSt$ and
$\DSLPSt$, which are specification monads, but not for $\DSPSt$ and
$\ASLPSt$.
We construct them from the monad morphisms from the
computation monads into these specification monads.
As shown in~\cite{dm4all}, such monad morphisms
naturally give rise to Dijkstra monads.
A key benefit of Dijkstra monads is that they only contain computations that
satisfy their specification, which was the foundation of verification frameworks
like \fstar{}.

As demonstrated above, triples for all the $\mathit{wp}$ monads,
over-approximate $\ASPSt$ triples, and under-approximate $\DSLPSt$ triples
can be represented by predicate transformers. The corresponding Dijkstra
monads can therefore contain {\em exactly} the computations that satisfy
those triples.
Additionally, we show that
all of the examples that we verified using the corresponding 
triple can also be proved to be elements within the 
corresponding Dijkstra monad.
For under-approximate $\ASPSt$ 
and over-approximate $\DSLPSt$, one can still specify 
programs by the predicate transformer, but there does not 
exist an embedding from the triple into a predicate transformer 
for an arbitrary triple.

\section{Galois Connections Between Specification Monads}\label{sec:galois}

Beyond the taxonomy established for the various monads of {\em wps} and
{\em sps}, one can wonder how expressive these variants are.
To that end, \citet{dm4all} show that $\WP$ and $\SPost$ can be
related through Galois connections
to two simpler monads $\PrePost$ and
$\Pred$, where $\WP^\op$ indicates that the order is flipped.
\begin{equation*}
  \label{eqn:galois-dm4all}
\begin{tikzcd}
	{\SPost} && {\Pred} && {\PrePost} && {\WP^\op}
	\arrow[""{name=0, anchor=center, inner sep=0}, shift left=2, from=1-1, to=1-3]
	\arrow[""{name=1, anchor=center, inner sep=0}, shift left=2, from=1-3, to=1-1]
	\arrow[""{name=2, anchor=center, inner sep=0}, shift left=2, from=1-3, to=1-5]
	\arrow[""{name=3, anchor=center, inner sep=0}, shift left=2, from=1-5, to=1-3]
	\arrow[""{name=4, anchor=center, inner sep=0}, shift left=2, from=1-5, to=1-7]
	\arrow[""{name=5, anchor=center, inner sep=0}, shift left=2, from=1-7, to=1-5]
	\arrow["\dashv"{anchor=center, rotate=90}, draw=none, from=1, to=0]
	\arrow["\dashv"{anchor=center, rotate=-90}, draw=none, from=2, to=3]
	\arrow["\dashv"{anchor=center, rotate=-90}, draw=none, from=4, to=5]
\end{tikzcd}
\tag{$\dagger$}
\end{equation*}
A Galois connection $f \dashv g$ between two specification monads $\mathsf{W_1}$ and
$\mathsf{W_2}$ consists of monotone natural transformations $f_A : \MonadType{W_1}{A} \to
\MonadType{W_2}{A}$ and $g_A : \MonadType{W_2}{A} \to \MonadType{W_1}{A}$, such
that $f_A\,w_1 \sqsubseteq^{\mathsf{W_2}} w_2$ iff $w_1 \sqsubseteq^{\mathsf{W_1}} g_A\,w_2$ for any
$w_1 : \MonadType{W_1}{A}, w_2 : \MonadType{W_2}{A}$ and $\mathsf{A}$.
$\PrePost$ consists of pairs of a precondition and a postcondition, while
$\Pred$ only retains the postcondition.
\citet{Maillard19} further showed that in the pure case, the Galois connection
between $\SPost$ and $\Pred$ on the left establishes an isomorphism of
specification monads.

In the rest of this section,
we extend this analysis in terms of Galois connections to the stateful case.
We consider submonads of monotone, stateful backward predicate transformers $\WSt$ and
monotone, stateful forward predicate transformers $\SPSt$ described by healthiness properties.
While the angelic (\ref{eqn:angelic-sp-stateful}) and demonic
(\ref{eqn:demonic-sp-stateful}) operations on $\SPSt$ do not satisfy all monad
laws, it is still meaningful to ask when these operations are preserved by a
natural transformation from or to other monads.\footnote{When considering merely monotone {\em sps}
  rather than conjunctive and strict ones, the demonic $\bind$ is more adequately
  expressed as
 $\lambda \mathit{pre}~b~s_0.\,
     \forall a~s_i.\,(\forall\,\mathit{pre}_i.\,f~a~\mathit{pre}_i~b \Rightarrow
     \mathit{pre}_i\,s_i) \Rightarrow m~pre~(a,s_i)$ as it satisfies some
     inequations refining the monad laws.}
Specifically, we compare it with $\MonadType{\PredSt}{S, A} := \mathsf{S} \to \mathsf{A} \times \mathsf{S}
\to \mathbb{P}$ ordered with pointwise logical implication and endowed with the monadic operations:
\begin{equation}
  \label{eq:monad-pred}
  \begin{aligned}
    \ret_\PredSt\,a &:= \lambda (s_0 : S)(r : A \times S).~r = (a, s_0) \\
    \bind_\PredSt\,m\,f &:= \lambda (s_0 : S)(r : B\times S).~ \exists (a : A)(s_i : S).~m\,s_0\,(a,s_i) \wedge f\,a\,s_i\,r
  \end{aligned}
\end{equation}
$\ret_\PredSt\,a$ asserts that the returned value is the given $a : \mathsf{A}$ and the state is left
unchanged, while $\bind_\PredSt\,m\,f$ for $m : \MonadType{\PredSt}{S,A}$ and $f :
\mathsf{A} \to \MonadType{\PredSt}{S,B}$ existentially quantifies over an intermediate result
value $a$ and state $s_i$, satisfying the postcondition $m$ and passed to the
continuation $f$.

\newcommand{\predtospL}{\varphi}
\newcommand{\sptopredL}{\varphi'}
\newcommand{\predtowpL}{\psi}
\newcommand{\wptopredL}{\psi'}

Recall from \cref{sec:adjunct-relations} that $\Ladj$ relates $\mathit{sps}$
that are left adjoint to $\mathit{wps}$.
We now show that any predicate $p : \MonadType{\PredSt}{S,A}$ induces such a pair
$\Ladj_{\mathsf{A}}~(\predtospL\,p)~(\predtowpL\,p)$ where $\predtospL :
\MonadType{\PredSt}{S, A} \to \MonadType{\SPSt}{S, A}$ and $\predtowpL :
\MonadType{\PredSt}{S, A} \to \MonadType{\WStOp}{S, A}$ are stateful variants of the
maps displayed at (\ref{eqn:galois-dm4all}), defined as
\begin{align*}
  \predtospL~p&:= \lambda (pre : \mathsf{S}\to\mathbb{P})(r : \mathsf{A} \times \mathsf{S}).~\exists s_0 : \mathsf{S}.~ pre\,s_0 \wedge p\,s_0\,r\\
  \predtowpL~p&:= \lambda (post : \mathsf{A}\times \mathsf{S} \to \mathbb{P})(s_0 : \mathsf{S}).~\forall r : \mathsf{A} \times \mathsf{S}.~ p\,s_0\,r \Rightarrow post\,r.
\end{align*}
Equipping $\SPSt$ with the angelic monadic structure (\ref{eqn:angelic-sp-stateful}), $\predtospL$ and $\predtowpL$ are both monotone monad morphisms.

Returning to $\Ladj$, given $p : \MonadType{\PredSt}{S, A}$, $pre : \mathsf{S} \to
\mathbb{P}$ and $post : \mathsf{A} \times \mathsf{S} \to \mathbb{P}$,
we have that
\begin{align*}
  \predtospL\,p\,pre \sqsubseteq post
  &\iff (\forall s_0~r.~pre\,s_0 \wedge p\,s_0\,r \Rightarrow post\,r) \\
  &\iff (\forall s_0~r.~pre\,s_0\Rightarrow p\,s_0\,r \Rightarrow post\,r)\\
  &\iff (\forall s_0.~pre\,s_0 \Rightarrow \forall r.~p\,s_0\,r \Rightarrow post\,r)\\
  &\iff pre \sqsubseteq \predtowpL\,p\,post,
\end{align*}
which establishes $\Ladj_{\mathsf{A}}~(\predtospL\,p)~(\predtowpL\,p)$.
This observation yields a characterization of the relation $\Ladj$ in the following
sense.
\begin{lemma}
  For $sp : \MonadType{\SPSt}{S, A}$ and $wp : \MonadType{\WStOp}{S, A}$, $\Ladj_{\mathsf{A}}~sp~wp$
  holds if and only if there exists $p : \MonadType{\PredSt}{S, A}$, such that $sp =
  \predtospL~p$ and $wp = \predtowpL~p$.
\end{lemma}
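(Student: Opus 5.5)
The ``if'' direction is exactly the calculation displayed just before the statement: for any $p$, unfolding $\predtospL\,p\,pre \sqsubseteq post$ and currying the implication gives $pre \sqsubseteq \predtowpL\,p\,post$. So $\Ladj_{\mathsf{A}}~(\predtospL\,p)~(\predtowpL\,p)$ holds, and nothing further is needed there. The work is in the ``only if'' direction. Given $\Ladj_{\mathsf{A}}~sp~wp$, I will extract a candidate predicate from $sp$ by probing it on singleton preconditions:
\[
  p := \lambda s_0~r.~sp\,(\lambda t.~t = s_0)\,r .
\]
I then show separately that $sp = \predtospL\,p$ and $wp = \predtowpL\,p$, using functional and propositional extensionality for the equalities.

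For $sp = \predtospL\,p$, I use that $sp$ is a left adjoint, so it preserves all disjunctions. This is the ``moreover'' part of \cref{lem:adjunct-relations}. It can also be derived directly: $sp(\bigvee_i pre_i) \sqsubseteq q$ iff $\bigvee_i pre_i \sqsubseteq wp(q)$ iff $\forall i.~pre_i \sqsubseteq wp(q)$ iff $\forall i.~sp(pre_i)\sqsubseteq q$, and then one takes $q := \bigvee_i sp(pre_i)$ for one inclusion and uses monotonicity for the other. Every precondition decomposes as $pre = \bigvee_{s_0}\{(\lambda t.~t=s_0) \mid pre\,s_0\}$, indexed by $I := \{s_0 \mid pre\,s_0\}$; this family may be empty, which strictness covers. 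Hence
\[
  sp\,pre\,r \iff \exists s_0.~pre\,s_0 \wedge sp\,(\lambda t.~t=s_0)\,r \iff \predtospL\,p\,pre\,r .
\]

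For $wp = \predtowpL\,p$, fix $post$ and $s_0$. From $\Ladj$ with the singleton precondition $\lambda t.~t = s_0$, the condition $sp(\lambda t.~t=s_0)\sqsubseteq post$ is equivalent to $(\lambda t.~t=s_0)\sqsubseteq wp(post)$, that is, to $wp\,post\,s_0$. The left side is literally $\forall r.~p\,s_0\,r \Rightarrow post\,r$, which is $\predtowpL\,p\,post\,s_0$. Each direction of this pointwise equivalence is immediate, and this also shows that $wp$ is determined by $sp$.

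I expect the main obstacle to be the disjunctivity step, specifically stating it in a form that handles an arbitrary, possibly empty, index type $I$. In a formal setting this also means being careful about extensionality. It will be cleaner to avoid the general family and prove the two inclusions for $sp = \predtospL\,p$ directly from $\Ladj$. For ``$\Leftarrow$'', monotonicity of $sp$ (itself a consequence of $\Ladj$) applied to $\{s_0\}\sqsubseteq pre$ suffices. For ``$\Rightarrow$'', instantiate $\Ladj$ with $post := \predtospL\,p\,pre$: it suffices to show $pre \sqsubseteq wp(\predtospL\,p\,pre)$. By the singleton equivalence above, this reduces to $sp(\lambda t.~t=s_0)\sqsubseteq \predtospL\,p\,pre$ for each $s_0$ with $pre\,s_0$, which holds with witness $s_0$.
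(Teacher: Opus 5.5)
Your proof is correct and is essentially the paper's argument. Your witness $p$ is exactly the paper's right adjoint $\varphi'\,sp$, and $sp = \varphi\,(\varphi'\,sp)$ is the paper's fixpoint property of $\varphi \circ \varphi'$ on strict, disjunctive $\mathit{sps}$, which you also rederive directly from $\Ladj$. The only deviation is on the $\mathit{wp}$ side: you get $wp = \psi\,p$ immediately by instantiating $\Ladj$ at singleton preconditions, instead of using conjunctivity and termination of $wp$ and the fixpoints of $\psi \circ \psi'$, which also spares you from separately showing that $\varphi'\,sp$ and $\psi'\,wp$ coincide.
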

Intuitively, this characterization of $\Ladj$ holds because we can reconstruct a
predicate out of a disjunctive and strict {\em sp} or a conjunctive and
terminating {\em wp}.
Indeed, paralleling the pure case (\ref{eqn:galois-dm4all}), $\predtospL$ and $\predtowpL$ both have
right adjoints $\sptopredL : \MonadType{\SPSt}{S, A} \to \MonadType{\PredSt}{S, A}$
and $\wptopredL : \MonadType{\WStOp}{S, A} \to \MonadType{\PredSt}{S, A}$ given by
\begin{align*}
  \sptopredL~sp&:= \lambda (s_0 : \mathsf{S})(r :\mathsf{A} \times \mathsf{S}).~sp\,(\lambda s.~s = s_0)\,r\\
  \wptopredL~wp&:= \lambda (s_0 : \mathsf{S})(r:\mathsf{A} \times \mathsf{S}).~\forall (p : \mathsf{A} \times \mathsf{S} \to \mathbb{P}).~wp\,p\,s_0 \Rightarrow p\,r.
\end{align*}
$\sptopredL$ preserves the demonic monadic structure on $\SPSt$ while $\wptopredL$
does not preserve $\WStOp$'s \bind.
We observed that $\Ladj$ restricted to a relation between $\ASPSt$ and $\DWLPSt$.
We can further this remark by inspecting the nuclei of the adjunctions
$\predtospL \dashv \sptopredL$ and $\predtowpL \dashv \wptopredL$:
(1) every $p : \MonadType{\PredSt}{S, A}$ is a fix point of both $\sptopredL \circ \predtospL$
and $\wptopredL \circ \predtowpL$, (2) the fix points of $\predtospL \circ
\sptopredL$ are exactly elements of $\ASPSt$, and (3) the fix points of
$\predtowpL \circ \wptopredL$ are exactly elements of $\DWLPSt^\op$.
As a result, we obtain the characterization of $\Ladj$ and a chain of correspondences
\[
  \ASPSt \quad \cong \quad \PredSt \quad \cong \quad \DWLPSt^\op.
\]

\newcommand{\predtospR}{\xi}
\newcommand{\sptopredR}{\xi'}
\newcommand{\predtowpR}{\zeta}
\newcommand{\wptopredR}{\zeta'}
We can now turn to the case of $\Radj$, where a similar story unfolds with
respect to the specification monad $\PredStOp$, with the same carrier and monadic
structure as $\PredSt$
but with order reversed.
Consider the maps
$\predtospR : \MonadType{\PredStOp}{S, A} \to \MonadType{\SPSt}{S, A}$,
$\predtowpR : \MonadType{\PredStOp}{S, A} \to \MonadType{\WStOp}{S, A}$,
$\sptopredR : \MonadType{\SPSt}{S, A} \to \MonadType{\PredStOp}{S, A}$, and
$\wptopredR : \MonadType{\WStOp}{S, A} \to \MonadType{\PredStOp}{S, A}$, defined as
\begin{align*}
  \predtospR~p&:= \lambda (pre : \mathsf{S} \to \mathbb{P})(r : \mathsf{A} \times \mathsf{S}).~\forall s_0 : \mathsf{S}.~p\,s_0\,r \Rightarrow pre\,s_0\\
  \predtowpR~p&:= \lambda (post : \mathsf{A} \times \mathsf{S} \to \mathbb{P})(s_0 : \mathsf{S}).~\exists r : \mathsf{A} \times \mathsf{S}.~p\,s_0\,r \wedge post\,r\\
  \sptopredR~sp&:= \lambda (s_0 : \mathsf{S})(r : \mathsf{A} \times \mathsf{S}).~\forall (pre : \mathsf{S} \to \mathbb{P}).~sp\,pre\,r \Rightarrow pre\,s_0\\
  \wptopredR~wp&:= \lambda (s_0 : \mathsf{S})(r : \mathsf{A} \times \mathsf{S}).~wp\,(\lambda (r' : \mathsf{A} \times \mathsf{S}).~r = r')\,s_0.
\end{align*}
We can show that all maps are monotone, natural in $\mathsf{A}$, and the corresponding
pairs form Galois connections: $\sptopredR \dashv \predtospR$ and $\wptopredR
\dashv \predtowpR$.
Moreover, $\predtospR, \predtowpR$ and $\sptopredR$ are monad morphisms if we
equip $\SPSt$ with the demonic monadic operations
(\ref{eqn:demonic-sp-stateful}), while $\wptopredR$
fails to preserve $\WStOp$'s \bind.

For any $p : \MonadType{\PredStOp}{S, A}$, we have that
$\Radj_{\mathsf{A}}~(\predtospR\,p)~(\predtowpR\,p)$.
Indeed, given $pre : \mathsf{S} \to \mathbb{P}$ and $post : \mathsf{A} \times \mathsf{S} \to \mathbb{P}$,
\begin{align*}
  \predtowpR\,p\,post \sqsubseteq pre
  &\iff \forall s_0 : \mathsf{S}.~(\exists r : \mathsf{A} \times \mathsf{S}.~ p\,s_0\,r \wedge post\,r) \Rightarrow  pre\,s_0 \\
  &\iff \forall (s_0 : \mathsf{S})(r : \mathsf{A} \times \mathsf{S}).~post\,r \Rightarrow p\,s_0\,r \Rightarrow pre\,s_0 \\
  &\iff \forall r : \mathsf{A} \times \mathsf{S}.~post\,r \Rightarrow \forall s_0 : \mathsf{S}.~p\,s_0\,r \Rightarrow pre\,s_0 \\
  &\iff post \sqsubseteq \predtospR\,p\,pre.
\end{align*}
We also obtain a characterization of $\Radj$ in terms of $\PredStOp$.
\begin{lemma}
  For $sp : \MonadType{\SPSt}{S, A}$ and $wp : \MonadType{\WStOp}{S, A}$, $\Radj_{\mathsf{A}}~sp~wp$
  holds if and only if there exists $p : \MonadType{\PredStOp}{S, A}$, such that $sp =
  \predtospR~p$ and $wp = \predtowpR~p$.
\end{lemma}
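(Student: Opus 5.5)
The reverse direction is immediate: if $sp = \predtospR\,p$ and $wp = \predtowpR\,p$, then the calculation just above the statement gives $\Radj_{\mathsf{A}}~(\predtospR\,p)~(\predtowpR\,p)$. So the work is in the forward direction.

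Assume $\Radj_{\mathsf{A}}~sp~wp$. Take $p := \wptopredR\,wp$, so that
\[
  p\,s_0\,r \iff wp\,(\lambda r'.~r = r')\,s_0 .
\]
We must show $wp = \predtowpR\,p$ and $sp = \predtospR\,p$, and we get both from the adjunction. By \cref{lem:adjunct-relations}, $wp$ is disjunctive and strict. Every postcondition is the union of its singletons, $post = \bigvee_{r \in post} \{r\}$, which is a possibly empty family; disjunctivity and strictness together cover all cases. Hence
\[
  wp\,post\,s_0 \iff \exists r.~post\,r \wedge wp\,\{r\}\,s_0 ,
\]
which is exactly $\predtowpR\,p\,post\,s_0$. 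This settles $wp = \predtowpR\,p$.

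For the second equation we use the fact that a right adjoint is determined by its left adjoint. The calculation before the statement shows that $\predtospR\,p$ is right adjoint to $\predtowpR\,p = wp$. So for every $pre$ and $post$,
\[
  post \sqsubseteq sp(pre) \iff wp(post) \sqsubseteq pre \iff post \sqsubseteq \predtospR\,p\,pre .
\]
Instantiating $post := sp(pre)$ gives $sp(pre) \sqsubseteq \predtospR\,p\,pre$. Instantiating $post := \predtospR\,p\,pre$ gives the reverse inclusion. Extensionality of predicates, propositional and functional, which the paper already uses, then yields $sp = \predtospR\,p$. Note that this argument needs no healthiness property of $sp$.

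The main obstacle is the decomposition of $wp$ over singletons. In the refinement-type encoding, disjunctivity is stated for families indexed by arbitrary $I$. We instantiate it with $I := \{r \mid post\,r\}$ and $p_i := \{i\}$, and must check that the union of this family is extensionally $post$. The empty case has to go through strictness, which is why \cref{lem:adjunct-relations} is needed rather than mere monotonicity. If that lemma were unavailable, we would derive the decomposition directly from $\Radj$, since a left adjoint preserves all joins. This is the standard argument: test against an arbitrary $pre$ and use the adjunction on both sides.
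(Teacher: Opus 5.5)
Your proof is correct and follows essentially the route the paper indicates. By \cref{lem:adjunct-relations} the $wp$ is disjunctive and strict, so it is a fixpoint of $\zeta \circ \zeta'$, and $p := \zeta'\,wp$ reconstructs it. The one difference is that you recover $sp = \xi\,p$ from uniqueness of right adjoints. This spares you from using the conjunctivity and termination of $sp$ (the fixpoint property of $\xi \circ \xi'$) and from checking separately that $\xi'\,sp$ and $\zeta'\,wp$ agree.
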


Again, we can investigate the nuclei of the adjunctions $\sptopredR \dashv
\predtospR$ and $\wptopredR \dashv \predtowpR$: (1) every $p :
\MonadType{\PredStOp}{S,A}$ is a fix point of both $\sptopredR \circ \predtospR$ and
$\wptopredR \circ \predtowpR$, (2) the fix points of $\predtospR \circ
\sptopredR$ are exactly elements of $\DSLPSt$, and (3) the fix points of
$\predtowpR \circ \wptopredR$ are exactly elements of $\AWPSt^\op$.
As a result, we get another chain of correspondences:
\[
  \DSLPSt \quad \cong \quad \PredStOp \quad \cong \quad \AWPSt^\op
\]

\section{Formalization in Rocq}

All definitions and results presented in this paper are formalized in Rocq
(about 28k lines), assuming the axioms of functional extensionality,
propositional extensionality, and excluded middle.

We use the triples of \autoref{sec:verification} to verify about 30 small
example programs, including all the sequential examples of~\citet{OHearn20}.
These examples cover state, state and exceptions, recursion,
and loops and together use all sixteen kinds of triples.
The programs are written as interaction
trees~\cite{XiaZHHMPZ20}, and a monad morphism maps them to the
computation monad of \autoref{sec:verification}.

\section{Related Work}

\paragraph{Hoare-like logics.}
Incorrectness logic was first proposed by \citet{reverse-hoare} under the name
reverse Hoare logic. The term ``incorrectness logic'' was coined by
\citet{OHearn20} later. Since then, many variants of it have been studied,
including partial incorrectness logic~\citep{ZhangK22}, underspecified
incorrectness logic~\citep{uil}, sufficient incorrectness logic~(a.k.a.~Lisbon
logic)~\citep{sil, u-turn},~\textit{etc.}\@ \Citet{VerschtK25} taxonomize
Hoare-like triples in an imperative setting and inspired our work. 
Starting from the classical strongest postcondition and the strongest liberal
postcondition of \citet{ZhangK22}, which they call $\mathit{asp}$ and
$\mathit{dslp}$, they complete the picture with two novel variants,
$\mathit{dsp}$ and $\mathit{aslp}$.
They also observe that $\mathit{aslp}$ and $\mathit{dsp}$ cannot be defined by
induction on the program structure and propose that they can be represented
using disjunctions and conjunctions of
$\mathit{asp}$ and $\mathit{dslp}$, respectively~\citep[\S3.2.4]{VerschtK25}. We extend their work to
monadic functional programs with different combinations of effects, and
construct a predicate transformer monad for each of the four $\mathit{sp}$
variants.

\paragraph{Outcome logics.}
Outcome logic~\citep{ZilbersteinDS23} unifies correctness and incorrectness
reasoning in a single program logic, whose semantics is expressed in a monad
and whose assertions describe the collection of outcomes that a program produces.
It expresses the $\mathit{dwlp}$ and
$\mathit{awp}$ triples but not the $\mathit{dwp}$ and $\mathit{awlp}$
triples, which need nontermination as an outcome~\citep{LiZS24}.
Total outcome logic~\citep{LiZS24} adds this outcome and expresses every logic
of the taxonomy of \citet{VerschtK25}, and logics in between, with one proof
system. 
It expresses the $\mathit{dsp}$ and
$\mathit{aslp}$ logics as combinations of two triples, without defining the
two transformers directly; here, they are elements of their own monads,
$\DSPSt$ and $\ASLPSt$.
Both logics use a monad to give imperative programs their semantics, and one
logic to state all the triples. We instead make the predicate transformers
themselves monads. Each triple is read off a monad morphism from the computation
monad, the relations of the taxonomy become relations between these monads, and
monad transformers extend these monads to state and exceptions. Unlike our work, outcome
logics also cover probabilistic choice.

\paragraph{Predicate transformers forming monads.}
\Citet{Jacobs15} studies predicate transformers as a monad, with
$\mathit{wp}$ as a monad morphism into it, \citet{HasuoGWP15} derives
$\mathit{wps}$ generically from monads enriched with an order, and
\citet{AguirreKK22} develop $\mathit{wps}$ in fibrations, where $\mathit{sp}$
arises as the left adjoint of $\mathit{wp}$, as in our relation $\Ladj$
(\autoref{sec:galois}). 
Instead of deriving predicate transformers from the computation monad, we
restrict the carriers of our monads to combinations of the homomorphism
properties of \citet{BackW98}.

\paragraph{Dijkstra monads and specification monads.}
\citet{fstar-pldi13} introduced Dijkstra monads in the context of \fstar{} for
reasoning intrinsically about stateful programs using backward predicate
transformers represented as elements of a continuation monad to stateful
predicates.
The Dijkstra monad methodology was then extended to a variety of
effects~\cite{dm4free, dm4all}.
\citet{dm4all} decompose Dijkstra monads as a triple of a computation monad
$\mathsf{M}$, a specification monad $\mathsf{W}$ and a monad morphism $\theta :
\mathsf{M} \to \mathsf{W}$ interpreting programs as specifications.
While their work mention forward predicate transformers, their strategy to
effect modularity using monad transformers does not apply directly to provide
an intuitive specification monad for stateful forward predicate transformers as
we show in this paper. Using arrow transformers, we obtain predicate transformers
of the expected shape.

\paragraph{Verification using monad morphisms.}
\Citet{loom} build foundational verifiers in Lean based on 
predicate transformer triples. They derive $\mathit{wp}$ calculi 
from ordered monad algebras and lift
them through a library of monad transformers, for state, exceptions,
divergence with total or partial correctness, and nondeterminism with demonic
or angelic choice. Their incorrectness proofs use angelic $\mathit{wps}$, as in
Lisbon logic, and they do not consider $\mathit{sps}$. Like us, they obtain
effect modularity from transformers, but they lift algebras through monad
transformers, whereas we lift monad morphisms through monad transformers and
arrow morphisms through arrow transformers.

\section{Conclusion and Future Work}

This work introduces strongest postcondition monads for all
combinations of angelic or demonic nondeterminism and total or liberal
interpretations of nontermination, and extends them with state and exceptions
by generalizing to arrows and applying arrow transformers. Together with the
corresponding weakest precondition monads, these allow us to recast the taxonomy of
triples presented by~\citet{VerschtK25} as a taxonomy of predicate
transformer monads, both for the base monads and for their versions with state.
We demonstrate how they can be used to specify simple monadic programs, with
and without state and exceptions, in a Loom-inspired style~\cite{loom} and in
the style of Dijkstra monads~\cite{dm4all}, and relate the two styles.
All our results are mechanized in Rocq.

Further work is needed to investigate whether these foundational
results can underpin a practical verification framework in the style of
Loom~\cite{loom} that supports {\em sps}, general {\em wps}, and both under- and
over-approximation, ideally allowing these different kinds of reasoning to be
meaningfully combined. To that end, considering further computational
effects such as IO and continuations would widen the applicability of our
findings. This would benefit from making the taxonomy and the
Galois connections of \autoref{sec:galois} effect-modular, which is not
straightforward: some of the maps and carriers involved lack a full monad or
arrow structure, so one would need transformers that, like the state
transformers, do not depend on the full structure.
Finally, this work also reveals unexpected skew monadic structures on monotone forward
predicate transformers $\SP$ that could be interesting to investigate on their own,
inspired by the seminal work of \citet{LackStreet12} on skew monoidales.

\ifanon\else
\begin{credits}
\subsubsection*{\ackname}
This work was in part funded
by the Deutsche Forschungsgemeinschaft (DFG, German Research Foundation)
under Germany's Excellence Strategy -- EXC 2092 CASA -- 390781972.
Exequiel Rivas was funded by the Estonian Research Council starting grant PSG749.
\end{credits}
\fi

\appendix
\crefalias{section}{appendix}
\renewcommand{\theHsection}{\Alph{section}}

\makeatletter
\let\originalseccntformat\@seccntformat
\renewcommand{\@seccntformat}[1]{%
  \ifcsname appendixprefix@#1\endcsname
    \csname appendixprefix@#1\endcsname
  \fi
  \originalseccntformat{#1}}
\newcommand{\appendixprefix@section}{\appendixname~}
\makeatother

\section{Arrow and Arrow Morphism Laws}\label{app:arrow-laws}

All arrow laws regarding arrows with application are shown below: 
  \[
  \begin{aligned}
    \arrcomp{\mathsf{id}}{f} & \equiv f \\
    \arrcomp{f}{\mathsf{id}} & \equiv f \\
    \arrcomp{\arrcomp{f}{g}}{h} & \equiv \arrcomp{f}{(\arrcomp{g}{h})} \\
    \arrcomp{\arr{f}}{\arr{g}} &\equiv \arr{(g \cdot f)} \\
    \first{(\arr{f})} & \equiv \arr{(\lambda (a, c).\, (f\,a, c))} \\
    \first{(\arrcomp{f}{g})} & \equiv \arrcomp{\first{f}}{\first{g}} \\
    \arrcomp{\first{f}}{\arr{\mathsf{fst}}} & \equiv \arrcomp{\arr{\mathsf{fst}}}{f} \\
    \arrcomp{\first{(\first{f})}}{\mathsf{assoc}} & \equiv \arrcomp{\mathsf{assoc}}{\first{f}} \\
    \arrcomp{\first{f}}{\arr{(\lambda (b,c).\,(b, g\,c))}} & \equiv \arrcomp{\arr{(\lambda (a,c).\,(a,g\,c))}}{\first{f}} \\
    \arrcomp{\arr{(\lambda a.\,(f, a))}}{\app} &\equiv f \\
    \arrcomp{\first{(\arr{(\lambda f.\,\arrcomp{g}{f})})}}{\app} &\equiv \arrcomp{\mathsf{second}\,g}{\app} \\
    \arrcomp{\first{(\arr{(\lambda f.\,\arrcomp{f}{h})})}}{\app} &\equiv  \arrcomp{\app}{h}
  \end{aligned}
\]
We use $\equiv$ to represent the equality defined on the arrow, which is defined
as logical equivalence on \emph{sps} arrows. All these laws are defined by
\citet{Hughes00}. We also state laws of arrow transformers defined by
\citeauthor{Hughes00} and prove that state and option arrow transformers satisfy
these laws.

In addition, we show that all our \emph{sps} arrows are choice arrows, and they
satisfy relevant laws defined by \citeauthor{Hughes00} and the additional choice
arrow distribution law defined by \citet{paterson2003arrows}.

An arrow morphism $\theta$ from an arrow $\mathsf{A}$ to an arrow
$\mathsf{B}$, both with choice and application, is a family
$\theta_{\mathsf{X},\mathsf{Y}} : \ArrowType{A}{X}{Y} \to \ArrowType{B}{X}{Y}$
that respects $\equiv$, so that $f \equiv g$ implies
$\theta(f) \equiv \theta(g)$, and satisfies the following laws:
\[
\begin{aligned}
  \theta(\arr[A]{h}) & \equiv \arr[B]{h} \\
  \theta(\arrcomp[A]{f}{g}) & \equiv \arrcomp[B]{\theta(f)}{\theta(g)} \\
  \theta(\first[A]{f}) & \equiv \first[B]{\theta(f)} \\
  \theta(\mathsf{left}_{\mathsf{A}}\,f) & \equiv \mathsf{left}_{\mathsf{B}}\,\theta(f) \\
  \theta(\app_{\mathsf{A}}) & \equiv \arrcomp[B]{\arr[B]{(\lambda (f, a).\,(\theta(f), a))}}{\app_{\mathsf{B}}}
\end{aligned}
\]
The first three are the laws of \citet[Def.~4.4]{Atkey2011}, and the last two
preserve choice and application. Since $\app_{\mathsf{A}}$ takes an arrow of
$\mathsf{A}$ as input and $\app_{\mathsf{B}}$ an arrow of $\mathsf{B}$, the
law for $\app$ first maps the input arrow with $\theta$. We use it, together
with the laws of $\mathsf{B}$, to show that an arrow morphism is, at each fixed
input, a monad morphism between the monads of the two arrows.

\section{The $\DSPArr$ and $\ASLPArr$ Arrows}\label{app:dsp-aslp-arrows}

For $\ArrowType{\DSPArr}{B}{A}$, we restrict the arrow carrier to conjunctive,
strict $\mathit{sps}_{\text{arr}}$, mirroring the carrier restriction of the
$\MonadType{\DSP}{A}$ monad. For every non-empty family of preconditions
$(p_i)_{i \in I}$ on the input $\mathsf{B}$:
\[
  \mathit{sp}_{\text{arr}}\Bigl(\bigwedge_{i \in I} p_i\Bigr) = \bigwedge_{i \in I} \mathit{sp}_{\text{arr}}(p_i)
  \qquad\text{and}\qquad
  \mathit{sp}_{\text{arr}}(\bot) = \bot.
\]
As for the monad, a $\mathit{dsp}$ arrow splits into an $\mathit{asp}$ and a
$\mathit{dslp}$ component, where $\mathit{sp}~(\lambda\_.\,\top)~a$ states
that $a$ is reachable:
\[
\begin{aligned}
  &\mathsf{aspOf} : \ArrowType{\DSPArr}{B}{A} \to \ArrowType{\ASPArr}{B}{A}\\
  &\mathsf{aspOf}~\mathit{sp} := \lambda \mathit{pre}~a.~\mathit{sp}~(\lambda\_.\,\top)~a \land
       \neg\,\mathit{sp}~(\lambda b.\,\neg\,\mathit{pre}~b)~a, \\
  &\mathsf{dslpOf} : \ArrowType{\DSPArr}{B}{A} \to \ArrowType{\DSLPArr}{B}{A}\\
  &\mathsf{dslpOf}~\mathit{sp} := \lambda \mathit{pre}~a.~\mathit{sp}~\mathit{pre}~a \lor
       \neg\,\mathit{sp}~(\lambda\_.\,\top)~a
\end{aligned}
\]
Every $\mathit{sp}$ admits the same outputs as
$\mathsf{aspOf}~\mathit{sp} \land \mathsf{dslpOf}~\mathit{sp}$. Each arrow
operation applies the $\ASPArr$ operation to the $\mathit{asp}$ components
and the $\DSLPArr$ operation to the $\mathit{dslp}$ components, and
conjoins the results:
\[
\begin{aligned}
  \arr{f} &: (\mathsf{B} \to \mathsf{A}) \to \ArrowType{\DSPArr}{B}{A} \\
    &:= \lambda \mathit{pre}~a.~\arr[\vASP]{f}~\mathit{pre}~a \land
        \arr[\vDSLP]{f}~\mathit{pre}~a, \\
  \arrcomp{f}{g} &: \ArrowType{\DSPArr}{B}{A} \to \ArrowType{\DSPArr}{A}{C} \to \ArrowType{\DSPArr}{B}{C} \\
    &:= \lambda \mathit{pre}~c.~
        (\arrcomp[\vASP]{\mathsf{aspOf}~f}{\mathsf{aspOf}~g})~\mathit{pre}~c \\
    &\phantom{{}:= \lambda \mathit{pre}~c.~}\land
        (\arrcomp[\vDSLP]{\mathsf{dslpOf}~f}{\mathsf{dslpOf}~g})~\mathit{pre}~c, \\
  \first{f} &: \ArrowType{\DSPArr}{B}{A} \to \ArrowType{\DSPArr}{B \times C}{A \times C} \\
    &:= \lambda \mathit{pre}~q.~
        \first[\vASP]{(\mathsf{aspOf}~f)}~\mathit{pre}~q \land
        \first[\vDSLP]{(\mathsf{dslpOf}~f)}~\mathit{pre}~q, \\
  \app &: \ArrowType{\DSPArr}{\ArrowType{\DSPArr}{B}{A} \times B}{A} \\
    &:= \lambda \mathit{pre}~a.~
        (\exists f~b.~\mathit{pre}~(f, b) \land \mathsf{aspOf}~f~(\lambda x.\,x = b)~a) \\
    &\phantom{{}:= \lambda \mathit{pre}~a.~}\land
        (\forall f~b.~\mathsf{aspOf}~f~(\lambda x.\,x = b)~a \Rightarrow \mathit{pre}~(f, b))
\end{aligned}
\]
In $\app$, $\mathsf{aspOf}~f~(\lambda x.\,x = b)~a$ states that $f$ can
produce $a$ from $b$. Composing the $\mathit{dsp}$ transformers directly,
$\lambda \mathit{pre}~c.~g~(f~\mathit{pre})~c$, would not work: it breaks the
law $\arrcomp{\arr{f}}{\arr{g}} \equiv \arr{(g \cdot f)}$ whenever $g$ maps
a value produced by $f$ and a value never produced by $f$ to the same
output, as in \autoref{fig:dsp-sequencing}.

Dually, for $\ArrowType{\ASLPArr}{B}{A}$, we restrict the arrow carrier to
disjunctive, terminating $\mathit{sps}_{\text{arr}}$, mirroring the carrier
restriction of the $\MonadType{\ASLP}{A}$ monad. For every non-empty family
of preconditions $(p_i)_{i \in I}$ on the input $\mathsf{B}$:
\[
  \mathit{sp}_{\text{arr}}\Bigl(\bigvee_{i \in I} p_i\Bigr) = \bigvee_{i \in I} \mathit{sp}_{\text{arr}}(p_i)
  \qquad\text{and}\qquad
  \mathit{sp}_{\text{arr}}(\top) = \top.
\]
Here $\neg\,\mathit{sp}~(\lambda\_.\,\bot)~a$ states that $a$ is reachable,
and the components are
\[
\begin{aligned}
  &\mathsf{aspOf} : \ArrowType{\ASLPArr}{B}{A} \to \ArrowType{\ASPArr}{B}{A}\\
  &\mathsf{aspOf}~\mathit{sp} := \lambda \mathit{pre}~a.~\mathit{sp}~\mathit{pre}~a \land
       \neg\,\mathit{sp}~(\lambda\_.\,\bot)~a, \\
  &\mathsf{dslpOf} : \ArrowType{\ASLPArr}{B}{A} \to \ArrowType{\DSLPArr}{B}{A}\\
  &\mathsf{dslpOf}~\mathit{sp} := \lambda \mathit{pre}~a.~\neg\,\mathit{sp}~(\lambda b.\,\neg\,\mathit{pre}~b)~a \lor
       \mathit{sp}~(\lambda\_.\,\bot)~a
\end{aligned}
\]
Every $\mathit{sp}$ admits the same outputs as
$\mathsf{aspOf}~\mathit{sp} \lor \mathsf{dslpOf}~\mathit{sp}$, and the arrow
operations are those of $\ArrowType{\DSPArr}{B}{A}$ with $\lor$ in place of
$\land$.

Both are lawful arrows with application: they satisfy all the laws of
\cref{app:arrow-laws}. Fixing $\mathsf{B} = 1$ gives the \ret and \bind
of $\MonadType{\DSP}{A}$ and $\MonadType{\ASLP}{A}$ from
\autoref{sec:spmonad}.

\section{Fixing the Input of the $\ASPArr$ and $\DSLPArr$ Arrows}\label{app:fixing-input}

This appendix shows that fixing $\mathsf{B} = 1$ in
$\ArrowType{\ASPArr}{B}{A}$ and $\ArrowType{\DSLPArr}{B}{A}$ gives the
carriers and the \ret and \bind of
$\MonadType{\ASP}{A}$~\eqref{eqn:angelic-sp} and
$\MonadType{\DSLP}{A}$~\eqref{eqn:demonic-sp}.

A precondition $\mathit{pre} : 1 \to \mathbb{P}$ is determined by the
proposition $\mathit{pre}~\ast$, where $\ast$ is the unique element of $1$,
and a proposition $p$ is the precondition $\lambda\_.\,p$. We identify the
two, $(1 \to \mathbb{P}) \cong \mathbb{P}$, and write $\mathit{pre}$ for
both. The identification commutes with pointwise disjunctions and
conjunctions, so an
$\mathit{sp}_{\text{arr}} : (1 \to \mathbb{P}) \to (\mathsf{A} \to \mathbb{P})$
preserves all disjunctions, respectively all conjunctions, exactly when the
corresponding $\mathit{sp} : \mathbb{P} \to (\mathsf{A} \to \mathbb{P})$
does. Hence the carrier of $\ArrowType{\ASPArr}{1}{A}$ corresponds to that of
$\MonadType{\ASP}{A}$, and the carrier of $\ArrowType{\DSLPArr}{1}{A}$ to
that of $\MonadType{\DSLP}{A}$, up to logical equivalence, which is the
equality of the arrows. All these $\mathit{sps}$ are monotone, so they
respect logical equivalence of preconditions, which we use without mention.

Recall the \ret and \bind derived from an arrow with application
(\autoref{sec:base-sp-arrows}), where $h := \lambda (a, i).\,(f\,a, i)$:
\[
\begin{aligned}
  \mathsf{ret}\,v &:= \arr{(\lambda \_.\,v)},\\
  \mathsf{bind}\,m\,f &:=
    \arrcomp{(\arrcomp{\arr{(\lambda x.\,(x, x))}}{\first{m}})}{(\arrcomp{\arr{h}}{\app})}.
\end{aligned}
\]
Since $\arrcomp{f}{g}$ passes the postcondition of $f$ to $g$ as its
precondition, $\mathsf{bind}\,m\,f~\mathit{pre} = \app~P_3$, where
\[
  P_1 := \arr{(\lambda x.\,(x, x))}~\mathit{pre}, \qquad
  P_2 := \first{m}~P_1, \qquad
  P_3 := \arr{h}~P_2.
\]
Below, $u$, $i$, $j$, $x$, and $y$ range over $1$, $a$ over $\mathsf{A}$,
$b$ over $\mathsf{B}$, and $g$ over the arrows from $1$ to $\mathsf{B}$.

\subsection{The Angelic Arrow}\label{app:fixing-input-asp}

\begin{lemma}\label{lem:asp-singleton}
For every $\mathit{sp} : \MonadType{\ASP}{A}$, $\mathit{pre} : \mathbb{P}$,
and $a : \mathsf{A}$,
$\mathit{sp}~\mathit{pre}~a \iff \mathit{pre} \land \mathit{sp}~\top~a$.
\end{lemma}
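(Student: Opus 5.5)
The plan is to use the carrier restriction of $\MonadType{\ASP}{A}$ directly: $\mathit{sp}$ preserves all disjunctions, empty and non-empty, indexed by arbitrary types $I$. The key observation is that a proposition $\mathit{pre}$ can be written as an existentially quantified family over the index type $I := \mathit{pre}$ itself (the type of proofs of $\mathit{pre}$), with the constant family $p(i) := \top$. Then $\exists (i : I).\,p(i)$ is logically equivalent to $\mathit{pre}$. By propositional extensionality (or by monotonicity, since $\mathit{sp}$ respects logical equivalence of preconditions), $\mathit{sp}~\mathit{pre}~a \iff \mathit{sp}~(\exists i.\,\top)~a$.

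Applying the disjunctivity condition from the refinement type gives
\[
  \mathit{sp}~(\exists (i : \mathit{pre}).\,\top)~a \iff \exists (i : \mathit{pre}).\,\mathit{sp}~\top~a .
\]
Since the body $\mathit{sp}~\top~a$ does not depend on $i$, the right-hand side is exactly $\mathit{pre} \land \mathit{sp}~\top~a$. Chaining these equivalences yields the statement. Note that this single argument covers both cases: when $\mathit{pre}$ is false the family is empty and we are using strictness, and when $\mathit{pre}$ holds the family is non-empty.

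If one prefers to avoid indexing over a proposition, I would split classically on $\mathit{pre}$. If $\mathit{pre}$ holds, then $\mathit{pre} = \top$ by propositional extensionality, and both sides reduce to $\mathit{sp}~\top~a$. If $\neg\mathit{pre}$, then $\mathit{pre} = \bot$, and strictness (the empty family) gives $\mathit{sp}~\bot~a \iff \bot$, matching the right-hand side.

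The only mildly delicate step is justifying the rewrite of $\mathit{pre}$ into the indexed existential form. This needs either propositional extensionality, which the development assumes, or the fact that carrier elements respect logical equivalence of preconditions, which follows from monotonicity as noted at the start of this appendix. Everything else is a direct unfolding.
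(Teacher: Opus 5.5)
Your proof is correct and matches the paper's. The paper indexes over $I := \{\ast \mid \mathit{pre}\}$, which is inhabited exactly when $\mathit{pre}$ holds and is equivalent to your proof type of $\mathit{pre}$, with the constant family $\top$, and reads $\exists i \in I.\,\mathit{sp}~\top~a$ as $\mathit{pre} \land \mathit{sp}~\top~a$; your optional classical case split is also valid but unnecessary, since the main argument is constructive.
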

\begin{proof}
Take the index set $I := \{\ast \mid \mathit{pre}\}$, which is inhabited
exactly when $\mathit{pre}$ holds, and the constant family $p(i) := \top$.
Then $\exists i.\,p(i)$ is equivalent to $\mathit{pre}$, so disjunctivity
gives $\mathit{sp}~\mathit{pre}~a \iff \exists i \in I.\,\mathit{sp}~\top~a$,
which is $\mathit{pre} \land \mathit{sp}~\top~a$.
\end{proof}

So an element of $\MonadType{\ASP}{A}$ admits $a$ exactly when $a$ is
produced, $\mathit{sp}~\top~a$, and $\mathit{pre}$ holds. For \ret,
unfolding $\arr{}$ gives
\[
  \mathsf{ret}\,v~\mathit{pre}~a \iff \exists x.\,\mathit{pre} \land v = a
  \iff \mathit{pre} \land a = v,
\]
the \ret of~\eqref{eqn:angelic-sp}. For \bind, unfolding $\arr{}$,
$\first{}$, $\arr{}$, and $\app$ in turn gives
\[
\begin{aligned}
  P_1~(u, i) &\iff \exists x.\,\mathit{pre} \land (x, x) = (u, i)
    \iff \mathit{pre},\\
  P_2~(a, i) &\iff m~(\lambda x.\,P_1~(x, i))~a \iff m~\mathit{pre}~a,\\
  P_3~(g, i) &\iff \exists (a, j).\,P_2~(a, j) \land (f\,a, j) = (g, i)
    \iff \exists a.\,m~\mathit{pre}~a \land f\,a = g,\\
  \app~P_3~b &\iff \exists g.\,\exists i.\,P_3~(g, i) \land g~(\lambda y.\,y = i)~b
    \iff \exists a.\,m~\mathit{pre}~a \land f\,a~\top~b.
\end{aligned}
\]
In the last line, $\lambda y.\,y = i$ is $\top$ because $1$ has a single
element, so at $\mathsf{B} = 1$, $\app$ asks whether $g$ produces $b$. By
\autoref{lem:asp-singleton} for $f\,a$, the conjunction
$m~\mathit{pre}~a \land f\,a~\top~b$ is $f\,a~(m~\mathit{pre}~a)~b$, so
\[
  \mathsf{bind}\,m\,f~\mathit{pre}~b \iff \exists a.\,f\,a~(m~\mathit{pre}~a)~b,
\]
the \bind of~\eqref{eqn:angelic-sp}. Every step is constructive.

\subsection{The Demonic Arrow}\label{app:fixing-input-dslp}

\begin{lemma}\label{lem:dslp-singleton}
For every $\mathit{sp} : \MonadType{\DSLP}{A}$, $\mathit{pre} : \mathbb{P}$,
and $a : \mathsf{A}$,
$\mathit{sp}~\mathit{pre}~a \iff \mathit{pre} \lor \mathit{sp}~\bot~a$.
\end{lemma}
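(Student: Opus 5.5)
This mirrors \autoref{lem:asp-singleton}, with conjunctivity in place of disjunctivity. We take the index set $I := \{\ast \mid \neg\,\mathit{pre}\}$, which is inhabited exactly when $\mathit{pre}$ fails, together with the constant family $p(i) := \bot$. The conjunction $\forall i \in I.\,p(i)$ then says that $I$ is empty, that is, $\neg\neg\,\mathit{pre}$. Using excluded middle, which the development assumes, this is equivalent to $\mathit{pre}$. Since elements of $\MonadType{\DSLP}{A}$ are monotone, they respect logical equivalence of preconditions. The carrier condition, which covers both the empty and the non-empty case, therefore gives
\[
  \mathit{sp}~\mathit{pre}~a \iff \mathit{sp}\bigl(\forall i \in I.\,\bot\bigr)~a \iff \forall i \in I.\,\mathit{sp}~\bot~a,
\]
and the right-hand side unfolds to $\neg\,\mathit{pre} \Rightarrow \mathit{sp}~\bot~a$.

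It remains to turn this implication into the disjunction $\mathit{pre} \lor \mathit{sp}~\bot~a$. From right to left this is immediate. From left to right, we case split on $\mathit{pre}$ by excluded middle: if $\mathit{pre}$ holds we take the left disjunct, and otherwise the hypothesis yields $\mathit{sp}~\bot~a$.

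Alternatively, we can avoid the index trick. We case split on $\mathit{pre}$ and use propositional extensionality to rewrite $\mathit{pre}$ as $\top$ or $\bot$. In the $\top$ case, termination (the empty conjunction) gives $\mathit{sp}~\top~a$, so both sides hold. In the $\bot$ case, both sides reduce to $\mathit{sp}~\bot~a$.

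The only real obstacle is that, unlike the angelic lemma, this step is not constructive: going from the implication to the disjunction (or, equivalently, splitting on $\mathit{pre}$) needs classical reasoning. It is therefore cleaner to use the case split directly, relying on termination for $\top$, rather than to go through the index set.
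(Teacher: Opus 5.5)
Your first argument is exactly the paper's proof: the index set $\{\ast \mid \neg\,\mathit{pre}\}$ with the constant family $\bot$, excluded middle to identify $\neg\neg\,\mathit{pre}$ with $\mathit{pre}$, the carrier condition (which must also cover the possibly empty family) to get $\neg\,\mathit{pre} \Rightarrow \mathit{sp}~\bot~a$, and excluded middle again to reach the disjunction. Your alternative is also correct: it splits on $\mathit{pre}$, using termination when $\mathit{pre}$ holds and monotonicity (or propositional extensionality) when it fails, whereas the paper gives only the index-set argument, paralleling \autoref{lem:asp-singleton}.
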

\begin{proof}
Take the index set $J := \{\ast \mid \neg\mathit{pre}\}$ and the constant
family $p(j) := \bot$. Then $\forall j.\,p(j)$ is $\neg\neg\mathit{pre}$,
which by excluded middle is equivalent to $\mathit{pre}$, so conjunctivity
gives $\mathit{sp}~\mathit{pre}~a \iff \forall j \in J.\,\mathit{sp}~\bot~a$,
which is $\neg\mathit{pre} \Rightarrow \mathit{sp}~\bot~a$ and, again by
excluded middle, $\mathit{pre} \lor \mathit{sp}~\bot~a$.
\end{proof}

So an element of $\MonadType{\DSLP}{A}$ admits $a$ exactly when
$\mathit{pre}$ holds or $a$ is not produced, $\mathit{sp}~\bot~a$. For
\ret,
\[
  \mathsf{ret}\,v~\mathit{pre}~a \iff \forall x.\,v = a \Rightarrow \mathit{pre}
  \iff a = v \Rightarrow \mathit{pre},
\]
the \ret of~\eqref{eqn:demonic-sp}. For \bind, the same unfolding
gives
\[
\begin{aligned}
  P_1~(u, i) &\iff \forall x.\,(x, x) = (u, i) \Rightarrow \mathit{pre}
    \iff \mathit{pre},\\
  P_2~(a, i) &\iff m~(\lambda x.\,P_1~(x, i))~a \iff m~\mathit{pre}~a,\\
  P_3~(g, i) &\iff \forall (a, j).\,(f\,a, j) = (g, i) \Rightarrow P_2~(a, j)
    \iff \forall a.\,f\,a = g \Rightarrow m~\mathit{pre}~a,\\
  \app~P_3~b &\iff \forall g.\,\forall i.\,(\forall p.\,g~p~b \Rightarrow p~i)
    \Rightarrow P_3~(g, i)\\
  &\iff \forall a.\,(\forall p.\,f\,a~p~b \Rightarrow p) \Rightarrow m~\mathit{pre}~a.
\end{aligned}
\]
In the last two lines, $p$ ranges over $1 \to \mathbb{P} \cong \mathbb{P}$,
so $p~i$ is $p$, and at $\mathsf{B} = 1$, $\app$ again asks whether $g$
produces $b$: every precondition that $g$ accepts for $b$ holds. By
\autoref{lem:dslp-singleton} for $f\,a$, this is $\neg(f\,a~\bot~b)$: taking
$p := \bot$ gives one direction, and for the other, $f\,a~p~b$ gives
$p \lor f\,a~\bot~b$. Then, by excluded middle and
\autoref{lem:dslp-singleton} again,
\[
\begin{aligned}
  \mathsf{bind}\,m\,f~\mathit{pre}~b
  &\iff \forall a.\,\neg(f\,a~\bot~b) \Rightarrow m~\mathit{pre}~a\\
  &\iff \forall a.\,m~\mathit{pre}~a \lor f\,a~\bot~b\\
  &\iff \forall a.\,f\,a~(m~\mathit{pre}~a)~b,
\end{aligned}
\]
the \bind of~\eqref{eqn:demonic-sp}. Unlike the angelic derivation, this
one is classical.

\section{Stateful $\mathit{sp}$ Arrows and Monads}\label{app:stateful-sp}

This appendix gives the stateful arrows of \autoref{sec:stateful-sp} and the
\ret and \bind of $\MonadType{\DSPSt}{S, A}$ and $\MonadType{\ASLPSt}{S, A}$.

\subsection{The Stateful Arrows}\label{app:stateful-arrows}

The state arrow transformer~\citep{Hughes00} threads a state through the
input and the output of an arrow,
$\StTArrType{Arr}{S}{B}{A} := \ArrowType{Arr}{B \times S}{A \times S}$, with
the operations
\[
\begin{aligned}
  \arr[St]{f} &:= \arr{(\lambda (b, s).\,(f\,b, s))},\\
  \arrcomp[St]{f}{g} &:= \arrcomp{f}{g},\\
  \first[St]{f} &:= \arrcomp{\arrcomp{\arr{(\lambda ((b, c), s).\,((b, s), c))}}{\first{f}}}{\arr{(\lambda ((a, s), c).\,((a, c), s))}},\\
  \app_{\mathsf{St}} &:= \arrcomp{\arr{(\lambda ((f, b), s).\,(f, (b, s)))}}{\app}.
\end{aligned}
\]
The state passes through $\arr{}$ unchanged, $\first{}$ moves it next to the
input of $f$ and back, and $\app$ hands it to the applied arrow together with
its input. For $\StTArrType{\ASPArr}{S}{B}{A}$, unfolding the operations of
\autoref{sec:base-sp-arrows} gives
\[
\begin{aligned}
  \arr[St]{f} &= \lambda \mathit{pre}~(a, s).~\exists b.~\mathit{pre}~(b, s) \land f~b = a,\\
  \first[St]{f} &= \lambda \mathit{pre}~((a, c), s).~f~(\lambda (b, s').~\mathit{pre}~((b, c), s'))~(a, s),\\
  \app_{\mathsf{St}} &= \lambda \mathit{pre}~(a, s).~\exists f~b~s'.~\mathit{pre}~((f, b), s') \land f~(\lambda x.\,x = (b, s'))~(a, s),
\end{aligned}
\]
and for $\StTArrType{\DSLPArr}{S}{B}{A}$
\[
\begin{aligned}
  \arr[St]{f} &= \lambda \mathit{pre}~(a, s).~\forall b.~f~b = a \Rightarrow \mathit{pre}~(b, s),\\
  \app_{\mathsf{St}} &= \lambda \mathit{pre}~(a, s).~\forall f~b~s'.~(\forall p.~f~p~(a, s) \Rightarrow p~(b, s')) \Rightarrow \mathit{pre}~((f, b), s').
\end{aligned}
\]
In both stateful arrows, $\arrcomp[St]{}{}$ uses the corresponding base-arrow
composition. Their definitions of $\first[St]{}$ coincide because the base
definitions of $\first{}$ coincide and the stateful definitions apply
$\arr{}$ only to bijections. For such bijections $h$, both arrows give
$\arr{h} = \lambda \mathit{pre}~y.~\mathit{pre}~(h^{-1}\,y)$. Applying the
transformer to the operations of $\ArrowType{\DSPArr}{B}{A}$ and
$\ArrowType{\ASLPArr}{B}{A}$ (\cref{app:dsp-aslp-arrows}) gives their
stateful arrows in the same way.

\subsection{The Stateful $\mathit{dsp}$ and $\mathit{aslp}$ Monads}\label{app:stateful-dsp-aslp}

Fixing $\mathsf{B} = 1$ gives the monads of \autoref{sec:stateful-sp}, whose
elements are $\mathit{dsp}$, respectively $\mathit{aslp}$, arrows from
$1 \times \mathsf{S}$ to $\mathsf{A} \times \mathsf{S}$, with preconditions
over states, $(1 \times \mathsf{S} \to \mathbb{P}) \cong (\mathsf{S} \to \mathbb{P})$.
Their $\mathit{asp}$ and $\mathit{dslp}$ components are therefore the
$\mathsf{aspOf}$ and $\mathsf{dslpOf}$ of \cref{app:dsp-aslp-arrows} at
input $1 \times \mathsf{S}$, and, as for the base monads, \ret and \bind are
computed through them. For $\MonadType{\DSPSt}{S, A}$,
\[
\begin{aligned}
\ret_{\vDSP}\,v
  &\equiv \ret_{\vASP}~v \land \ret_{\vDSLP}~v, \\
\bind_{\vDSP}\,m\,f
  &\equiv \lambda~\mathit{pre}~r.\;
     \bind_{\vASP}~(\mathsf{aspOf}~m)~(\lambda a.~\mathsf{aspOf}~(f~a))~\mathit{pre}~r~\land \\
  &\phantom{{}\equiv \lambda~\mathit{pre}~r.\;}
     \bind_{\vDSLP}~(\mathsf{dslpOf}~m)~(\lambda a.~\mathsf{dslpOf}~(f~a))~\mathit{pre}~r,
\end{aligned}
\]
where $\ret_{\vASP}$, $\bind_{\vASP}$, $\ret_{\vDSLP}$, and $\bind_{\vDSLP}$
are those of~\eqref{eqn:angelic-sp-stateful}
and~\eqref{eqn:demonic-sp-stateful}. For $\MonadType{\ASLPSt}{S, A}$, \ret and
\bind are the same with $\lor$ in place of $\land$. Since $\ret_{\vASP}~v$
implies $\ret_{\vDSLP}~v$, these simplify to
$\ret_{\vDSP}~v \equiv \ret_{\vASP}~v$ and
$\ret_{\vASLP}~v \equiv \ret_{\vDSLP}~v$.

\section{Counterexamples to the Monad and Arrow Laws}\label{app:law-counterexamples}

This appendix works through the counterexamples behind the carrier
restrictions of \autoref{sec:spmonad}. Each keeps the operations of one of
our monads or arrows but drops its homomorphism restriction:
\begin{itemize}
  \item[$\triangleright$] \cref{app:assoc-angelic}: the stateful \ret
    and \bind of $\MonadType{\ASP}{S, A}$ on monotone, strict
    $\mathit{sps}$, without disjunctivity;
  \item[$\triangleright$] \cref{app:assoc-demonic}: the stateful \ret
    and \bind of $\MonadType{\DSLP}{S, A}$ on monotone, terminating
    $\mathit{sps}$, without conjunctivity;
  \item[$\triangleright$] \cref{app:first-fst}: the operations of
    $\ArrowType{\ASPArr}{B}{A}$ and $\ArrowType{\DSLPArr}{B}{A}$ on monotone,
    strict, respectively terminating, $\mathit{sps}_{\text{arr}}$.
\end{itemize}
The two monads fail associativity and the two arrows the arrow law
$\arrcomp{\first{f}}{\arr{\mathsf{fst}}} \equiv \arrcomp{\arr{\mathsf{fst}}}{f}$.
Each
counterexample uses a transformer that inspects two inputs at once, which
disjunctivity, respectively conjunctivity, rules out.

\subsection{Associativity Without Disjunctivity}\label{app:assoc-angelic}

Consider $\MonadType{\SPostAng}{S, A}$, whose elements are monotone stateful
$\mathit{sps}$ of type
$(\text{pre}: \mathsf{S} \to \mathbb{P}) \to (\mathsf{A} \times \mathsf{S} \to \mathbb{P}_{\text{/pre}})$.
Here $\mathbb{P}_{\text{/pre}} = \{p : \mathbb{P} ~|~ p \Rightarrow \exists s.~\text{pre}~s\}$
is the codomain condition of $\MonadType{\SPost}{A}$; for monotone
$\mathit{sps}$, it amounts to strictness.
We use the \ret and \bind of $\MonadType{\ASP}{S, A}$
from~\eqref{eqn:angelic-sp-stateful}.
The unit laws hold, but \bind is not associative as soon as there are
two distinct states $s_1, s_2$ and two distinct values $a_1, a_2$. Take,
for any inhabited types $\mathsf{B}$ and $\mathsf{C}$,
\[
\begin{aligned}
m &: \MonadType{\SPostAng}{S, A}
  := \lambda~\mathit{pre}~\_.\,\exists s.~\mathit{pre}~s,\\
f &: \mathsf{A} \to \MonadType{\SPostAng}{S, B}
  := \lambda a.\,\lambda~\mathit{pre}~(b,s).\,(\exists s'.\,\mathit{pre}~s')\\
  &\phantom{{}: \mathsf{A} \to \MonadType{\SPostAng}{S, B} := {}}
     \land ((a = a_1 \land s = s_1) \lor (a = a_2 \land s = s_2)),\\
g &: \MonadType{\SPostAng}{S, C}
  := \lambda~\mathit{pre}~\_.\,\mathit{pre}~s_1 \land \mathit{pre}~s_2.
\end{aligned}
\]
When its precondition is satisfiable, $m$ admits every result; $f~a$ ends
in $s_1$ when $a = a_1$, in $s_2$ when $a = a_2$, and has no result
otherwise; and $g$ has a result only if its precondition contains {\em
both} $s_1$ and $s_2$. All three are monotone and have the restricted type.
For $\mathit{pre} := \lambda \_.\,\top$ and $r := (c_0, s_1)$, compare
\[
  \mathsf{bind}~(\mathsf{bind}~m~f)~(\lambda \_.\,g)~\mathit{pre}~r
  \quad\text{and}\quad
  \mathsf{bind}~m~(\lambda a.\,\mathsf{bind}~(f~a)~(\lambda \_.\,g))~\mathit{pre}~r.
\]
Since $m~\mathit{pre}~(a, s'')$ holds for all $a$ and $s''$, the
intermediate states that $f~a$ passes on to $g$ are
$P_a := \lambda s'.\,(a = a_1 \land s' = s_1) \lor (a = a_2 \land s' = s_2)$,
so $P_{a_1} = \{s_1\}$, $P_{a_2} = \{s_2\}$, and $P_a = \emptyset$ otherwise.
Unfolding \bind, the left-hand side is equivalent to
$\exists b.~g~(\lambda s'.\,\exists a.\,P_a~s')~r$ and the right-hand side to
$\exists a.\,\exists b.~g~(\lambda s'.\,P_a~s')~r$. On the left, $g$ receives
the union $\bigcup_a P_a = \{s_1, s_2\}$ and admits $r$. On the right, $a$ is
chosen first and $g$ receives a single $P_a$, none of which contains both
states, so it does not. A disjunctive $g$ would satisfy
$g(\bigcup_a P_a) = \bigcup_a g(P_a)$ and make the two sides agree.

\subsection{Associativity Without Conjunctivity}\label{app:assoc-demonic}

The demonic monad $\MonadType{\SPostDem}{S, A}$ has the dual codomain,
which amounts to termination,
$(\text{pre}: \mathsf{S} \to \mathbb{P}) \to (\mathsf{A} \times \mathsf{S} \to \mathbb{P}^{\forall}_{\text{/pre}})$
with $\mathbb{P}^{\forall}_{\text{/pre}} = \{p : \mathbb{P} ~|~ (\forall s.~\text{pre}~s) \Rightarrow p\}$,
and the \ret and \bind~\eqref{eqn:demonic-sp-stateful} of
$\MonadType{\DSLP}{S, A}$.
Again the unit laws hold and associativity fails, now with the dual
transformers
\[
\begin{aligned}
m &:= \lambda~\mathit{pre}~\_.\,\forall s.~\mathit{pre}~s,\\
f &:= \lambda a.\,\lambda~\mathit{pre}~(b,s).\,
     ((a = a_1 \land s = s_1) \lor (a = a_2 \land s = s_2)) \Rightarrow (\forall s'.\,\mathit{pre}~s'),\\
g &:= \lambda~\mathit{pre}~\_.\,\mathit{pre}~s_1 \lor \mathit{pre}~s_2,
\end{aligned}
\]
where $g$ accepts a precondition that contains {\em at least one} of $s_1$
and $s_2$. Take $\mathit{pre} := \lambda \_.\,\bot$ and $r := (c_0, s_1)$.
Since $m~\mathit{pre}~(a, s'')$ is false, the intermediate states passed on
to $g$ are
$P_a := \lambda s'.\,\neg (a = a_1 \land s' = s_1) \land \neg (a = a_2 \land s' = s_2)$,
so $P_{a_1} = \mathsf{S} \setminus \{s_1\}$, $P_{a_2} = \mathsf{S} \setminus \{s_2\}$,
and $P_a = \mathsf{S}$ otherwise. The left-hand side is equivalent to
$\forall b.~g~(\lambda s'.\,\forall a.\,P_a~s')~r$, which fails because
$\bigcap_a P_a = \mathsf{S} \setminus \{s_1, s_2\}$ contains neither state.
The right-hand side is equivalent to
$\forall a.\,\forall b.~g~(\lambda s'.\,P_a~s')~r$, which holds because
every $P_a$ contains $s_1$ or $s_2$. A conjunctive $g$ would make the two
sides agree.

\subsection{Arrow Laws Without Disjunctivity or Conjunctivity}\label{app:first-fst}

The same phenomenon breaks the arrow laws. Let
$\ArrowType{\SPArr_{\mathsf{ang}}}{B}{A}$ consist of the monotone, strict
$\mathit{sps}_{\text{arr}}$ of type
$(\text{pre}: \mathsf{B} \to \mathbb{P}) \to (\mathsf{A} \to \mathbb{P}_{\text{/pre}})$,
where $\mathbb{P}_{\text{/pre}} = \{p : \mathbb{P} ~|~ p \Rightarrow \exists b.~\text{pre}~b\}$,
with the operations of $\ArrowType{\ASPArr}{B}{A}$. Fixing
$\mathsf{B} = 1$ gives the lawful monad $\MonadType{\SPostAng}{A}$, yet the
arrow is not lawful: the law
$\arrcomp{\first{f}}{\arr{\mathsf{fst}}} \equiv \arrcomp{\arr{\mathsf{fst}}}{f}$
fails. Take distinct $b_1, b_2 : \mathsf{B}$ and $c_1, c_2 : \mathsf{C}$, and
\[
\begin{aligned}
  g &: \ArrowType{\SPArr_{\mathsf{ang}}}{B}{A}
    := \lambda~\mathit{pre}~\_.\,\mathit{pre}~b_1 \land \mathit{pre}~b_2,\\
  \mathit{pre} &:= \lambda (b, c).\,(b = b_1 \land c = c_1) \lor (b = b_2 \land c = c_2).
\end{aligned}
\]
Writing $P_c := \lambda b.\,\mathit{pre}~(b, c)$, we have
$P_{c_1} = \{b_1\}$, $P_{c_2} = \{b_2\}$, and $P_c = \emptyset$ otherwise.
Unfolding the operations, for any $a : \mathsf{A}$,
\[
\begin{aligned}
  (\arrcomp{\first{g}}{\arr{\mathsf{fst}}})~\mathit{pre}~a &\iff \exists c.~g~P_c~a,\\
  (\arrcomp{\arr{\mathsf{fst}}}{g})~\mathit{pre}~a &\iff g~\Bigl(\bigcup_c P_c\Bigr)~a.
\end{aligned}
\]
The left-hand side is false, since no $P_c$ contains both $b_1$ and $b_2$,
and the right-hand side is true, since $\bigcup_c P_c = \{b_1, b_2\}$. A
disjunctive $g$ would make the two sides agree, which is why
$\ArrowType{\ASPArr}{B}{A}$ restricts its carrier to disjunctive
$\mathit{sps}_{\text{arr}}$. Dually, for the demonic arrow
$\ArrowType{\SPArr_{\mathsf{dem}}}{B}{A}$ of monotone, terminating
$\mathit{sps}_{\text{arr}}$ with the operations of
$\ArrowType{\DSLPArr}{B}{A}$, the same law fails for
$g := \lambda~\mathit{pre}~\_.\,\mathit{pre}~b_1 \lor \mathit{pre}~b_2$ and
the complementary precondition
$\mathit{pre} := \lambda (b, c).\,\neg((b = b_1 \land c = c_1) \lor (b = b_2 \land c = c_2))$,
now with the left-hand side true and the right-hand side false.

\section{Decomposing $\mathit{dsp}$ and $\mathit{aslp}$ Predicate Transformers}\label{app:dsp-decomposition}

This appendix proves that an element of $\MonadType{\DSP}{A}$ admits the same
outputs as the conjunction of its $\mathit{asp}$ and $\mathit{dslp}$
components, that an element of $\MonadType{\ASLP}{A}$ admits the same outputs
as their disjunction, and that in both cases the components lie in
$\MonadType{\ASP}{A}$ and $\MonadType{\DSLP}{A}$.

\subsection{Components of $\MonadType{\DSP}{A}$}\label{app:dsp-components}

Recall the components of an element $\mathit{sp}$ of $\MonadType{\DSP}{A}$
from \autoref{sec:base-sp-monads},
\[
\begin{aligned}
\mathsf{aspOf}~\mathit{sp} &:= \lambda~\mathit{pre}~a.\,\mathit{sp}~\top~a \land \neg(\mathit{sp}~(\neg \mathit{pre})~a),\\
\mathsf{dslpOf}~\mathit{sp} &:= \lambda~\mathit{pre}~a.\,\mathit{sp}~\mathit{pre}~a \lor \neg(\mathit{sp}~\top~a),
\end{aligned}
\]
and that the carrier of $\MonadType{\DSP}{A}$ consists of the strict,
conjunctive, and hence monotone, $\mathit{sps}$.

\begin{lemma}\label{lem:dsp-split}
For every $\mathit{sp} : \MonadType{\DSP}{A}$, $\mathit{pre} : \mathbb{P}$,
and $a : \mathsf{A}$,
\[
  \mathit{sp}~\mathit{pre}~a \iff
  \mathsf{aspOf}~\mathit{sp}~\mathit{pre}~a \land \mathsf{dslpOf}~\mathit{sp}~\mathit{pre}~a.
\]
\end{lemma}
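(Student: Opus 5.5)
The plan is a direct case analysis, using first a representation of elements of $\MonadType{\DSP}{A}$ analogous to \autoref{lem:asp-singleton} and \autoref{lem:dslp-singleton}. We show that for every $\mathit{sp} : \MonadType{\DSP}{A}$, $\mathit{sp}~\mathit{pre}~a \iff \mathit{pre} \land \mathit{sp}~\top~a$. Since $\mathit{pre}$ is a proposition, we split on it by excluded middle. If $\mathit{pre}$ holds, it is logically equivalent to $\top$, so monotonicity (which follows from conjunctivity) gives $\mathit{sp}~\mathit{pre}~a \iff \mathit{sp}~\top~a$. If $\neg\mathit{pre}$, then $\mathit{pre}$ is equivalent to $\bot$, and strictness gives $\mathit{sp}~\mathit{pre}~a \iff \bot$. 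Here propositional extensionality, or simply monotonicity in both directions, lets us rewrite the argument of $\mathit{sp}$.

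With this representation, write $R := \mathit{sp}~\top~a$. Both components then unfold to the familiar forms. For the $\mathsf{aspOf}$ component, $\mathit{sp}~(\neg\mathit{pre})~a \iff \neg\mathit{pre} \land R$, so
\[
\mathsf{aspOf}~\mathit{sp}~\mathit{pre}~a \iff R \land \neg(\neg\mathit{pre} \land R) \iff R \land \mathit{pre}.
\]
The last step is classical, since we derive $\mathit{pre}$ from $\neg\neg\mathit{pre}$. For the $\mathsf{dslpOf}$ component,
\[
\mathsf{dslpOf}~\mathit{sp}~\mathit{pre}~a \iff (\mathit{pre} \land R) \lor \neg R \iff (R \Rightarrow \mathit{pre}).
\]
The conjunction of the two is $(R \land \mathit{pre}) \land (R \Rightarrow \mathit{pre})$, which is equivalent to $R \land \mathit{pre}$. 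By the representation above, this is exactly $\mathit{sp}~\mathit{pre}~a$. This is the same calculation as the base-case collapse in \autoref{sec:collapse}.

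We expect the main obstacle to be the first step, the representation lemma. The carrier is stated via non-empty conjunctions together with strictness, not via disjunctions. We therefore cannot reuse the index-set trick of \autoref{lem:asp-singleton} directly, and must instead argue through monotonicity and a classical case split on $\mathit{pre}$. It must also be checked that monotonicity follows from conjunctivity: take the two-element family $\{p_1, p_2\}$ with $p_1 \sqsubseteq p_2$, so that $p_1 = p_1 \land p_2$ and hence $\mathit{sp}(p_1) = \mathit{sp}(p_1) \land \mathit{sp}(p_2) \sqsubseteq \mathit{sp}(p_2)$. After that, the remaining steps are propositional tautologies, valid classically.
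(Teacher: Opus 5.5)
Your proof is correct, but it takes a different route from the paper's.

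\textbf{Your route.} You first prove the representation $\mathit{sp}~\mathit{pre}~a \iff \mathit{pre} \land \mathit{sp}~\top~a$ by excluded middle on $\mathit{pre}$. This amounts to proving the base-case collapse of $\MonadType{\DSP}{A}$ into $\MonadType{\ASP}{A}$ from \autoref{sec:collapse}. After that, everything reduces to classical tautologies in $R$ and $\mathit{pre}$.

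\textbf{The paper's route.} The paper uses no representation and argues directly from the homomorphism properties. If $\mathit{sp}~\mathit{pre}~a$, then $\mathsf{dslpOf}$ holds by its first disjunct, and $\mathit{sp}~\top~a$ holds by monotonicity. Moreover, $\mathit{sp}~(\neg\mathit{pre})~a$ is impossible, since binary conjunctivity and monotonicity would give $\mathit{sp}~\bot~a$, contradicting strictness. Conversely, the conjunct $\mathit{sp}~\top~a$ of $\mathsf{aspOf}$ refutes the second disjunct of $\mathsf{dslpOf}$, leaving $\mathit{sp}~\mathit{pre}~a$.

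\textbf{What each buys.} The paper's argument is constructive and never uses that the precondition is a single proposition. It therefore carries over verbatim to the arrow and stateful components, with $\top$, $\bot$, and $\neg$ taken pointwise. That is where the paper actually relies on the lemma, e.g.\ for $\DSPSt$ in \autoref{sec:implication-relation}. Your route gives explicit closed forms for the components: $\mathsf{aspOf}~\mathit{sp}~\mathit{pre}~a \iff R \land \mathit{pre}$ and $\mathsf{dslpOf}~\mathit{sp}~\mathit{pre}~a \iff (R \Rightarrow \mathit{pre})$. However, it is classical and depends on the collapse, which fails once preconditions range over states. So it proves the stated base-case lemma but not its stateful reuse. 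The ``main obstacle'' you anticipate is also avoidable: strictness plus a single binary conjunction already does all the work.
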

\begin{proof}
Suppose $\mathit{sp}~\mathit{pre}~a$. Then $\mathsf{dslpOf}~\mathit{sp}~\mathit{pre}~a$
holds by its first disjunct, and $\mathit{sp}~\top~a$ holds by monotonicity.
If $\mathit{sp}~(\neg\mathit{pre})~a$ also held, conjunctivity would give
$\mathit{sp}~(\mathit{pre} \land \neg\mathit{pre})~a$, and monotonicity
$\mathit{sp}~\bot~a$, which strictness rules out. So
$\neg(\mathit{sp}~(\neg\mathit{pre})~a)$, and
$\mathsf{aspOf}~\mathit{sp}~\mathit{pre}~a$ holds. Conversely,
$\mathsf{aspOf}~\mathit{sp}~\mathit{pre}~a$ gives $\mathit{sp}~\top~a$, which rules
out the second disjunct of $\mathsf{dslpOf}~\mathit{sp}~\mathit{pre}~a$, so its first
disjunct, $\mathit{sp}~\mathit{pre}~a$, holds.
\end{proof}

This proof is constructive. Classical reasoning is needed only to show that
the components satisfy the carrier restrictions of $\MonadType{\ASP}{A}$ and
$\MonadType{\DSLP}{A}$.

\begin{lemma}\label{lem:dsp-components}
For every $\mathit{sp} : \MonadType{\DSP}{A}$, $\mathsf{aspOf}~\mathit{sp}$ is strict
and disjunctive, and $\mathsf{dslpOf}~\mathit{sp}$ is terminating and conjunctive.
\end{lemma}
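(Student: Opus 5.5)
The proof is a direct verification, one property at a time, from the definitions of $\mathsf{aspOf}$ and $\mathsf{dslpOf}$. We use only that $\mathit{sp}$ is strict, conjunctive over non-empty families, and hence monotone, together with excluded middle. Classical reasoning enters exactly where the remark after \autoref{lem:dsp-split} says it will: in pushing negations through quantifiers.

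For $\mathsf{aspOf}~\mathit{sp}$, we prove preservation of an arbitrary family $(p_i)_{i\in I}$, empty or not, at a fixed output $a$. Unfolding, $\mathsf{aspOf}~\mathit{sp}~(\exists i.\,p_i)~a$ is
\[
\mathit{sp}~\top~a \land \neg\,\mathit{sp}~(\forall i.\,\neg p_i)~a .
\]

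\emph{Empty $I$.} Here $\forall i.\,\neg p_i$ is $\top$, so the two conjuncts contradict each other. This gives strictness.

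\emph{Non-empty $I$.} By conjunctivity, $\mathit{sp}~(\forall i.\,\neg p_i)~a \iff \forall i.\,\mathit{sp}~(\neg p_i)~a$. Its negation is classically $\exists i.\,\neg\,\mathit{sp}~(\neg p_i)~a$. Since the conjunct $\mathit{sp}~\top~a$ does not depend on $i$ and $I$ is inhabited, it distributes into the existential. The result is $\exists i.\,\mathsf{aspOf}~\mathit{sp}~p_i~a$, as required.

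For $\mathsf{dslpOf}~\mathit{sp}$, we argue dually.

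\emph{Termination.} $\mathsf{dslpOf}~\mathit{sp}~\top~a$ is $\mathit{sp}~\top~a \lor \neg\,\mathit{sp}~\top~a$, which holds by excluded middle.

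\emph{Conjunctivity.} For non-empty $I$, conjunctivity of $\mathit{sp}$ rewrites the first disjunct of $\mathsf{dslpOf}~\mathit{sp}~(\forall i.\,p_i)~a$ as $\forall i.\,\mathit{sp}~p_i~a$. It then remains to show
\[
(\forall i.\,\mathit{sp}~p_i~a) \lor Q \iff \forall i.\,(\mathit{sp}~p_i~a \lor Q), \qquad Q := \neg\,\mathit{sp}~\top~a .
\]
The forward direction is immediate. For the converse, case on $Q$ by excluded middle. If $Q$ holds we are done. Otherwise every disjunction yields $\mathit{sp}~p_i~a$, so the universal holds.

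The only point needing care is the empty-family case of $\mathsf{aspOf}$: conjunctivity is unavailable there, and we must use the literal $\top$ in the definition instead. I also expect some bookkeeping in Rocq to handle the non-empty-family hypotheses, and to treat $\neg\mathit{pre}$ up to propositional extensionality so that the rewrites under $\mathit{sp}$ are valid. Monotonicity of the components follows from these properties, or directly from monotonicity of $\mathit{sp}$ and antitonicity of negation.
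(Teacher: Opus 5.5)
Your proof is correct and follows the paper's argument step for step. Strictness comes from the contradictory conjuncts, disjunctivity from conjunctivity of $\mathit{sp}$ plus a classical push of negation through the universal, termination from excluded middle, and conjunctivity of $\mathsf{dslpOf}~\mathit{sp}$ from a case split on whether $\mathit{sp}~\top~a$ holds. One small aside: inhabitedness of $I$ is needed only to invoke conjunctivity of $\mathit{sp}$, not to distribute the constant conjunct $\mathit{sp}~\top~a$ into the existential, which works for any $I$.
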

\begin{proof}
Both components are monotone because $\mathit{sp}$ is and negation reverses
implication. $\mathsf{aspOf}~\mathit{sp}~\bot~a$ is
$\mathit{sp}~\top~a \land \neg(\mathit{sp}~\top~a)$, which is false, so
$\mathsf{aspOf}~\mathit{sp}$ is strict. For a non-empty family $(p_i)_{i \in I}$,
$\neg \exists i.\,p_i$ is $\forall i.\,\neg p_i$, and conjunctivity of
$\mathit{sp}$ turns $\neg(\mathit{sp}~(\neg \exists i.\,p_i)~a)$ into
$\neg \forall i.\,\mathit{sp}~(\neg p_i)~a$. By excluded middle, this is
$\exists i.\,\neg(\mathit{sp}~(\neg p_i)~a)$, which makes
$\mathsf{aspOf}~\mathit{sp}$ disjunctive. $\mathsf{dslpOf}~\mathit{sp}~\top~a$ is
$\mathit{sp}~\top~a \lor \neg(\mathit{sp}~\top~a)$, which holds by excluded
middle, so $\mathsf{dslpOf}~\mathit{sp}$ is terminating. Its conjunctivity follows
from that of $\mathit{sp}$ by a case split, again by excluded middle, on
whether $\mathit{sp}~\top~a$ holds.
\end{proof}

\subsection{Components of $\MonadType{\ASLP}{A}$}\label{app:aslp-components}

The carrier of $\MonadType{\ASLP}{A}$ consists of the terminating,
disjunctive, and hence monotone, $\mathit{sps}$. For an element
$\mathit{sp}$, the output $a$ is admitted under the $\bot$ precondition,
$\mathit{sp}~\bot~a$, exactly when $a$ is not produced, and we calculate
the two components as follows:
\[
\begin{aligned}
&\mathsf{aspOf} : \MonadType{\ASLP}{A} \to \MonadType{\ASP}{A}\\
&\mathsf{aspOf}~\mathit{sp}
  := \lambda~\mathit{pre}~a.\, \mathit{sp}~\mathit{pre}~a \land
  \neg(\mathit{sp}~\bot~a)\\[1ex]
&\mathsf{dslpOf} : \MonadType{\ASLP}{A} \to \MonadType{\DSLP}{A}\\
&\mathsf{dslpOf}~\mathit{sp}
  := \lambda~\mathit{pre}~a.\, \neg(\mathit{sp}~(\neg \mathit{pre})~a) \lor
  \mathit{sp}~\bot~a
\end{aligned}
\]
In $\mathsf{aspOf}~\mathit{sp}$, the second conjunct, $\neg(\mathit{sp}~\bot~a)$,
states that $a$ is produced, and the first, $\mathit{sp}~\mathit{pre}~a$,
states that $a$ being produced implies $\mathit{pre}$ holding; together they
admit $a$ exactly when $a$ is produced and $\mathit{pre}$ holds, which is
the behavior of an $\mathit{asp}$ predicate transformer.
In $\mathsf{dslpOf}~\mathit{sp}$, the first disjunct,
$\neg(\mathit{sp}~(\neg \mathit{pre})~a)$, negates the statement that if
$a$ is produced, then $\neg\mathit{pre}$ holds. Classically, this means
that $a$ is produced and $\mathit{pre}$ holds. The second disjunct,
$\mathit{sp}~\bot~a$, states that $a$ is not produced. Together, the two
disjuncts admit $a$ exactly when the implication ``if $a$ is produced,
then $\mathit{pre}$ holds'' is true, as required by a $\mathit{dslp}$
predicate transformer. As for $\MonadType{\DSP}{A}$, these readings are classical.
The \bind of $\MonadType{\ASLP}{A}$ is the disjunction of the
\bind{}s of the components:
\[
\begin{aligned}
\mathsf{bind_{\vASLP}}\,m\,f
  &:= \lambda~\mathit{pre}~b.\,
     \mathsf{bind_{\vASP}~(aspOf~m)}~(\lambda a. \mathsf{~aspOf~(f~}a))~\mathit{pre}~b~\lor \\
  &\phantom{{}:= \lambda~\mathit{pre}~b.\,}\mathsf{bind_{\vDSLP}~(dslpOf~m)}~(\lambda a. \mathsf{~dslpOf~(f~}a))~\mathit{pre}~b.
\end{aligned}
\]

\begin{lemma}\label{lem:aslp-split}
For every $\mathit{sp} : \MonadType{\ASLP}{A}$, $\mathit{pre} : \mathbb{P}$,
and $a : \mathsf{A}$,
\[
  \mathit{sp}~\mathit{pre}~a \iff
  \mathsf{aspOf}~\mathit{sp}~\mathit{pre}~a \lor \mathsf{dslpOf}~\mathit{sp}~\mathit{pre}~a.
\]
\end{lemma}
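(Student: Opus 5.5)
We prove the two directions separately, mirroring the proof of \autoref{lem:dsp-split} with the roles of $\top$/$\bot$, conjunction/disjunction, and strictness/termination exchanged. Recall that $\mathit{sp}$ is terminating and disjunctive, hence monotone.

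For the direction from left to right, assume $\mathit{sp}~\mathit{pre}~a$ and split, by excluded middle, on whether $\mathit{sp}~\bot~a$ holds. If it holds, then $\mathsf{dslpOf}~\mathit{sp}~\mathit{pre}~a$ holds by its second disjunct. If it does not hold, then $\mathsf{aspOf}~\mathit{sp}~\mathit{pre}~a$ holds, since its two conjuncts are exactly $\mathit{sp}~\mathit{pre}~a$ and $\neg(\mathit{sp}~\bot~a)$. This direction uses no healthiness property at all, only excluded middle.

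For the converse, suppose first that $\mathsf{aspOf}~\mathit{sp}~\mathit{pre}~a$ holds. Its first conjunct is $\mathit{sp}~\mathit{pre}~a$, so we are done. Suppose instead that $\mathsf{dslpOf}~\mathit{sp}~\mathit{pre}~a$ holds. If this is because $\mathit{sp}~\bot~a$ holds, then monotonicity (with $\bot \sqsubseteq \mathit{pre}$) gives $\mathit{sp}~\mathit{pre}~a$.

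The remaining case, which is the only one needing the carrier restriction, is $\neg(\mathit{sp}~(\neg\mathit{pre})~a)$. Here we use disjunctivity on the two-element family $\{\mathit{pre}, \neg\mathit{pre}\}$. This family is non-empty, and by excluded middle its disjunction is $\top$. Termination gives $\mathit{sp}~\top~a$, and disjunctivity then yields $\mathit{sp}~\mathit{pre}~a \lor \mathit{sp}~(\neg\mathit{pre})~a$. The second disjunct is excluded by assumption, so $\mathit{sp}~\mathit{pre}~a$ follows. This step is the dual of the use of conjunctivity and strictness in \autoref{lem:dsp-split}.

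The main point of care is this last case. It needs termination to obtain $\mathit{sp}~\top~a$ and excluded middle to identify $\mathit{pre} \lor \neg\mathit{pre}$ with $\top$, where monotonicity lets us rewrite along that logical equivalence. Everything else is a direct unfolding of the definitions of $\mathsf{aspOf}$ and $\mathsf{dslpOf}$ for $\MonadType{\ASLP}{A}$.
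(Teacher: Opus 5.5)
Your proof is correct and follows the paper's argument essentially step for step. Like the paper, you use excluded middle on $\mathit{sp}~\bot~a$ for the forward direction. For the converse you use the first conjunct of $\mathsf{aspOf}$, monotonicity in the $\mathit{sp}~\bot~a$ case, and, in the remaining case, termination together with disjunctivity on $\mathit{pre} \lor \neg\mathit{pre}$ (classically $\top$) to rule out $\mathit{sp}~(\neg\mathit{pre})~a$.
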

\begin{proof}
Suppose $\mathit{sp}~\mathit{pre}~a$. By excluded middle, either
$\mathit{sp}~\bot~a$ holds, and then so does the second disjunct of
$\mathsf{dslpOf}~\mathit{sp}~\mathit{pre}~a$, or it does not, and then
$\mathsf{aspOf}~\mathit{sp}~\mathit{pre}~a$ holds. Conversely,
$\mathsf{aspOf}~\mathit{sp}~\mathit{pre}~a$ contains $\mathit{sp}~\mathit{pre}~a$ as
its first conjunct, and $\mathit{sp}~\bot~a$ gives
$\mathit{sp}~\mathit{pre}~a$ by monotonicity. It remains to derive
$\mathit{sp}~\mathit{pre}~a$ from $\neg(\mathit{sp}~(\neg \mathit{pre})~a)$.
Termination gives $\mathit{sp}~\top~a$, and by excluded middle $\top$ is
$\mathit{pre} \lor \neg\mathit{pre}$, so disjunctivity gives
$\mathit{sp}~\mathit{pre}~a \lor \mathit{sp}~(\neg\mathit{pre})~a$, whose
second disjunct is ruled out.
\end{proof}

Unlike the decomposition of $\MonadType{\DSP}{A}$, this one is classical in
both directions.

\begin{lemma}\label{lem:aslp-components}
For every $\mathit{sp} : \MonadType{\ASLP}{A}$, $\mathsf{aspOf}~\mathit{sp}$ is strict
and disjunctive, and $\mathsf{dslpOf}~\mathit{sp}$ is terminating and conjunctive.
\end{lemma}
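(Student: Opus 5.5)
We mirror the proof of \autoref{lem:dsp-components}, using that an element $\mathit{sp}$ of $\MonadType{\ASLP}{A}$ is monotone, terminating ($\mathit{sp}~\top~a$ always holds) and disjunctive for non-empty families. We first establish monotonicity of both components. For $\mathsf{aspOf}~\mathit{sp}$, the first conjunct $\mathit{sp}~\mathit{pre}~a$ is monotone in $\mathit{pre}$, and the second conjunct does not depend on $\mathit{pre}$. For $\mathsf{dslpOf}~\mathit{sp}$, $\mathit{pre} \mapsto \neg\mathit{pre}$ is antitone, $\mathit{sp}$ is monotone, and the outer negation flips the order back; the second disjunct is constant in $\mathit{pre}$.

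Strictness of $\mathsf{aspOf}~\mathit{sp}$ is immediate: $\mathsf{aspOf}~\mathit{sp}~\bot~a$ is $\mathit{sp}~\bot~a \land \neg(\mathit{sp}~\bot~a)$, which is false. For disjunctivity over a non-empty family $(p_i)_{i\in I}$, we rewrite $\mathit{sp}~(\exists i.\,p_i)~a$ as $\exists i.\,\mathit{sp}~p_i~a$ by disjunctivity of $\mathit{sp}$. Since the conjunct $\neg(\mathit{sp}~\bot~a)$ is independent of $i$, it commutes with $\exists i$. This step is constructive.

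Termination of $\mathsf{dslpOf}~\mathit{sp}$ holds because $\mathsf{dslpOf}~\mathit{sp}~\top~a$ contains $\neg(\mathit{sp}~(\neg\top)~a) \lor \mathit{sp}~\bot~a$, i.e.\ $\neg(\mathit{sp}~\bot~a)\lor \mathit{sp}~\bot~a$. This follows by excluded middle, exactly as in \autoref{lem:dsp-components}.

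For conjunctivity over non-empty $(p_i)$, we have $\neg\forall i.\,p_i \iff \exists i.\,\neg p_i$ classically. Then disjunctivity of $\mathit{sp}$ (the family $(\neg p_i)$ is non-empty) gives $\mathit{sp}~(\neg\forall i.\,p_i)~a \iff \exists i.\,\mathit{sp}~(\neg p_i)~a$. Negating yields $\forall i.\,\neg\mathit{sp}~(\neg p_i)~a$. It remains to distribute the constant disjunct $\mathit{sp}~\bot~a$ over $\forall i$, showing $(\forall i.\,X_i)\lor Q \iff \forall i.\,(X_i \lor Q)$. For the right-to-left direction we case split by excluded middle on $Q = \mathit{sp}~\bot~a$; this is the same case split used for $\mathsf{dslpOf}$ of $\mathit{dsp}$.

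The main obstacle is just this classical bookkeeping in the conjunctivity step: the pointwise negation of preconditions and the $\forall$/$\lor$ distribution. Both need excluded middle, and both need care that the transformed family $(\neg p_i)$ stays non-empty so that disjunctivity of $\mathit{sp}$ applies. That non-emptiness is automatic, since the two families are indexed by the same $I$.
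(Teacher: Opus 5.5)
Your proposal is correct and follows the paper's proof essentially step for step. The steps are the same: monotonicity of both components; strictness and disjunctivity of $\mathsf{aspOf}~\mathit{sp}$, proved constructively because the conjunct $\neg(\mathit{sp}~\bot~a)$ does not depend on the precondition; termination of $\mathsf{dslpOf}~\mathit{sp}$ by excluded middle; and conjunctivity via De Morgan, disjunctivity of $\mathit{sp}$ on the family $(\neg p_i)$, and a classical case split on $\mathit{sp}~\bot~a$.
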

\begin{proof}
Both components are monotone because $\mathit{sp}$ is and negation reverses
implication. $\mathsf{aspOf}~\mathit{sp}~\bot~a$ is
$\mathit{sp}~\bot~a \land \neg(\mathit{sp}~\bot~a)$, which is false, so
$\mathsf{aspOf}~\mathit{sp}$ is strict, and it is disjunctive because $\mathit{sp}$
is and $\neg(\mathit{sp}~\bot~a)$ does not depend on the precondition; both
arguments are constructive. $\mathsf{dslpOf}~\mathit{sp}~\top~a$ is
$\neg(\mathit{sp}~\bot~a) \lor \mathit{sp}~\bot~a$, which holds by excluded
middle, so $\mathsf{dslpOf}~\mathit{sp}$ is terminating. For a non-empty family
$(p_i)_{i \in I}$, excluded middle turns $\neg \forall i.\,p_i$ into
$\exists i.\,\neg p_i$, so disjunctivity of $\mathit{sp}$ turns
$\neg(\mathit{sp}~(\neg \forall i.\,p_i)~a)$ into
$\forall i.\,\neg(\mathit{sp}~(\neg p_i)~a)$, and a case split, again by
excluded middle, on whether $\mathit{sp}~\bot~a$ holds makes
$\mathsf{dslpOf}~\mathit{sp}$ conjunctive.
\end{proof}

The same definitions and proofs apply to the arrow components of
\cref{app:dsp-aslp-arrows}, with $\top$, $\bot$, and $\neg$ taken
pointwise.

\section{Monadic Relations Between the Base Monads}\label{app:base-relations}

This appendix gives the monadic relations of the left diagram of
\autoref{fig:monad-taxonomy}. They are the relations of
\autoref{sec:implication-relation} with propositions in place of
predicates on states: the base $\mathit{wp}$ monads are submonads of
$\MonadType{\WPure}{A} = (\mathsf{A} \to \mathbb{P}) \to \mathbb{P}$, and the
elements of the base $\mathit{sp}$ monads have type
$\mathbb{P} \to (\mathsf{A} \to \mathbb{P})$.

\paragraph{Implication.} For $\mathit{wp}$, the pointwise order
\[
  wp_1 \sqsubseteq wp_2 \iff \forall \mathit{post}.~wp_1(\mathit{post}) \Rightarrow wp_2(\mathit{post})
\]
is a monadic relation between any two base $\mathit{wp}$ monads, because
they share \ret and their \bind is monotone. For $\mathit{sp}$, the
pointwise order
\[
  sp_1 \sqsubseteq sp_2 \iff \forall \mathit{pre}~a.~sp_1(\mathit{pre})(a) \Rightarrow sp_2(\mathit{pre})(a)
\]
is a monadic relation without the components that it needs with state.
Each implication of the left diagram goes from $\ASP$ or $\DSP$ to $\ASLP$
or $\DSLP$, and without state every element of the former is below every
element of the latter: by \autoref{lem:asp-singleton},
$sp_1(\mathit{pre})(a)$ implies $\mathit{pre}$, and by
\autoref{lem:dslp-singleton}, $\mathit{pre}$ implies $sp_2(\mathit{pre})(a)$.
Both lemmas apply to $\MonadType{\DSP}{A}$ and $\MonadType{\ASLP}{A}$ as
well, since these coincide with $\MonadType{\ASP}{A}$ and
$\MonadType{\DSLP}{A}$, as described next. The relation is therefore total
between these monads, and so it is monadic.

\paragraph{Isomorphism.} Without state, $\MonadType{\DSP}{A}$ and
$\MonadType{\ASP}{A}$ have the same elements, \ret, and \bind, and so do
$\MonadType{\ASLP}{A}$ and $\MonadType{\DSLP}{A}$
(\autoref{sec:collapse}). The equality of $\mathit{sps}$,
\[
  sp_1 \equiv sp_2 \iff \forall \mathit{pre}~a.~sp_1(\mathit{pre})(a) \iff sp_2(\mathit{pre})(a),
\]
is therefore a monadic relation between each pair, and it relates their
elements one to one. It gives the implication in both directions: the
double-headed arrows marked $\cong$ in \autoref{fig:monad-taxonomy}.

\paragraph{Contrapositive.} For $\mathit{wp}$, the monadic relation
\[
  wp_1 \bowtie wp_2 \iff \forall \mathit{post}.~wp_1(\mathit{post}) \iff \neg\,wp_2(\lambda a.\,\neg\,\mathit{post}(a))
\]
relates $\MonadType{\AWP}{A}$ to $\MonadType{\DWLP}{A}$, and also
$\MonadType{\DWP}{A}$ to $\MonadType{\AWLP}{A}$. For $\mathit{sp}$, the
monadic relation
\[
  sp_1 \bowtie sp_2 \iff \forall \mathit{pre}~a.~sp_1(\mathit{pre})(a) \iff \neg\,sp_2(\neg\,\mathit{pre})(a)
\]
relates $\MonadType{\ASP}{A}$ to $\MonadType{\DSLP}{A}$, and also
$\MonadType{\DSP}{A}$ to $\MonadType{\ASLP}{A}$. As with state, both pair
elements one to one, and no components are needed.

\paragraph{Adjunction.} The relations
\[
\begin{aligned}
  \Ladj_{\mathsf{A}}~sp~wp &:= \forall \mathit{pre}~\mathit{post}.~(sp(\mathit{pre}) \sqsubseteq \mathit{post}) \iff (\mathit{pre} \Rightarrow wp(\mathit{post})),\\
  \Radj_{\mathsf{A}}~sp~wp &:= \forall \mathit{pre}~\mathit{post}.~(wp(\mathit{post}) \Rightarrow \mathit{pre}) \iff (\mathit{post} \sqsubseteq sp(\mathit{pre}))
\end{aligned}
\]
relate an $\mathit{sp}$ and a $\mathit{wp}$ that form a Galois connection,
with the $\mathit{sp}$ as the left adjoint in $\Ladj$ and as the right
adjoint in $\Radj$. They are monadic relations between
$\MonadType{\ASP}{A}$ and $\MonadType{\DWLP}{A}$, and between
$\MonadType{\AWP}{A}$ and $\MonadType{\DSLP}{A}$, respectively. Without
state, an element of $\MonadType{\ASP}{A}$ or $\MonadType{\DSLP}{A}$ is
determined by the outputs it produces, an element of
$\MonadType{\Pred}{A}$, so we prove both relations between
$\MonadType{\WPure}{A}$ and $\MonadType{\Pred}{A}$
(\autoref{sec:collapse}).

\section{Monad Morphisms with State and Exceptions}\label{app:exc-morphisms}

The exception monad transformer adds an exceptional outcome to a monad,
$\mathsf{ExcT}(\mathsf{M})(\mathsf{A}) := \MonadType{M}{option~A}$, where
$\mathsf{None}$ is the exception. We apply it below the state monad
transformer, on both sides:
\[
\begin{aligned}
  \StTType{ExcT(NDiv)}{S}{A} &= \mathsf{S} \to \MonadType{NDiv}{option~(A \times S)},\\
  \MonadType{\WStExc}{S, A} = \StTType{ExcT(\WPure)}{S}{A}
    &\cong (\mathsf{option}~(\mathsf{A} \times \mathsf{S}) \to \mathbb{P}) \to (\mathsf{S} \to \mathbb{P}).
\end{aligned}
\]
A computation maps a start state to a non-empty set of outcomes, each a
result $\mathsf{Some}~(a, s')$, an exception $\mathsf{None}$, which discards
the state, or divergence, and the postconditions of the specifications also
constrain the exception. Lifting the base
$\mathit{wp}$ morphisms of \autoref{sec:lifting-morphisms} through both
transformers applies them at each start state, at the type
$\mathsf{option}~(\mathsf{A} \times \mathsf{S})$:
\[
\begin{aligned}
\theta^{\mathsf{StExc}}_{\vDWP}(m)
  &:= \lambda~\mathit{post}~s.\,\forall o \in m~s.\;
      \mathsf{match}~o~\mathsf{with}~v \mapsto \mathit{post}~v \mid \mathsf{div} \mapsto \bot,\\
\theta^{\mathsf{StExc}}_{\vDWLP}(m)
  &:= \lambda~\mathit{post}~s.\,\forall o \in m~s.\;
      \mathsf{match}~o~\mathsf{with}~v \mapsto \mathit{post}~v \mid \mathsf{div} \mapsto \top,\\
\theta^{\mathsf{StExc}}_{\vAWP}(m)
  &:= \lambda~\mathit{post}~s.\,\exists o \in m~s.\;
      \mathsf{match}~o~\mathsf{with}~v \mapsto \mathit{post}~v \mid \mathsf{div} \mapsto \bot,\\
\theta^{\mathsf{StExc}}_{\vAWLP}(m)
  &:= \lambda~\mathit{post}~s.\,\exists o \in m~s.\;
      \mathsf{match}~o~\mathsf{with}~v \mapsto \mathit{post}~v \mid \mathsf{div} \mapsto \top,
\end{aligned}
\]
where $v$ ranges over $\mathsf{option}~(\mathsf{A} \times \mathsf{S})$, a
result or the exception. The exception and the state monad transformers
both lift monad morphisms, so these are monad morphisms from
$\StTType{ExcT(NDiv)}{S}{A}$ to $\MonadType{\WStExc}{S, A}$.

On the $\mathit{sp}$ side, the exception arrow transformer changes the output
of an arrow, $\mathsf{ExcArrT}(\mathsf{Arr})(\mathsf{B}, \mathsf{A}) :=
\ArrowType{Arr}{B}{option~A}$, and we apply it below the state arrow
transformer, so that an arrow of $\StTArrType{ExcArrT(Arr)}{S}{B}{A}$ is an
arrow of $\mathsf{Arr}$ from $\mathsf{B} \times \mathsf{S}$ to
$\mathsf{option}~(\mathsf{A} \times \mathsf{S})$. Lifting the base arrow
morphisms of \autoref{sec:lifting-morphisms} through both transformers and
fixing the input to $1$ gives, for a computation $m$ of
$\StTType{ExcT(NDiv)}{S}{A}$,
\[
\begin{aligned}
\theta^{\mathsf{StExc}}_{\vASP}(m)
  &:= \lambda~\mathit{pre}~v.\,\exists s.\;\exists o \in m~s.\;\mathit{pre}~s \land o = v,\\
\theta^{\mathsf{StExc}}_{\vDSLP}(m)
  &:= \lambda~\mathit{pre}~v.\,\forall s.\;\forall o \in m~s.\;o = v \Rightarrow \mathit{pre}~s,\\
\theta^{\mathsf{StExc}}_{\vDSP}(m)
  &:= \lambda~\mathit{pre}~v.\,\theta^{\mathsf{StExc}}_{\vASP}(m)~\mathit{pre}~v
      \land \theta^{\mathsf{StExc}}_{\vDSLP}(m)~\mathit{pre}~v,\\
\theta^{\mathsf{StExc}}_{\vASLP}(m)
  &:= \lambda~\mathit{pre}~v.\,\theta^{\mathsf{StExc}}_{\vASP}(m)~\mathit{pre}~v
      \lor \theta^{\mathsf{StExc}}_{\vDSLP}(m)~\mathit{pre}~v,
\end{aligned}
\]
where $v$ ranges over $\mathsf{option}~(\mathsf{A} \times \mathsf{S})$, so
postconditions also constrain the exception, and divergence is never an
output. These are the definitions of
\cref{eqn:monad-morphisms-stateful-sp} with a result or the exception in
place of a result, and they are monad morphisms because both arrow
transformers lift arrow morphisms.

\section{No Embedding for the Over-Approximate $\mathit{dslp}$ Triple}\label{app:dslp-no-embedding}

We prove the second part of \autoref{lem:no-embedding}. The over-approximate
$\DSLPSt$ triple,
$\theta^{\mathsf{St}}_{\vDSLP}(m)(\mathit{pre}) \sqsubseteq \mathit{post}$,
has no embedding either: there is no element $w$ of $\DSLPSt$ such that, for
every $m$, $\theta^{\mathsf{St}}_{\vDSLP}(m) \sqsubseteq w$ holds exactly when
the triple does. The argument is dual to the one for the incorrectness
triple in \autoref{sec:dms}, with conjunctivity in place of disjunctivity.
An element $w$ of $\DSLPSt$ satisfies
\[
  w(p) = \bigwedge_{s \notin p} w(\mathsf{S} \setminus \{s\}),
\]
where the empty meet, at $p = \mathsf{S}$, is $\top$ by termination. So $w$
commits to the start state from which each result it excludes is reached,
whereas the triple only asks that each result outside $\mathit{post}$ be
reachable from {\em some} state outside $\mathit{pre}$.

For example, let $\mathit{pre} = \mathsf{S} \setminus \{s_0, s_1\}$ and let
$\mathit{post}$ contain every result except $r_0$. A result whose start
states all lie in $\mathit{pre}$ must then differ from $r_0$, so the triple
holds exactly when $r_0$ is reachable from $s_0$ or $s_1$. A program $m_1$
whose only result is $r_0$, reached only from $s_0$, and a program $m_2$
whose only result is $r_0$, reached only from $s_1$, both satisfy the
triple, so such a $w$ would satisfy
$\theta^{\mathsf{St}}_{\vDSLP}(m_1) \sqsubseteq w$ and
$\theta^{\mathsf{St}}_{\vDSLP}(m_2) \sqsubseteq w$. At a precondition
$\mathsf{S} \setminus \{s\}$, one of the two lower bounds is always $\top$:
at $r_0$ they are $s_0 \neq s$ and $s_1 \neq s$, and at every other result
both hold vacuously. So $w(\mathsf{S} \setminus \{s\}) = \top$, and hence
$w = \top$ by conjunctivity. But
$\theta^{\mathsf{St}}_{\vDSLP}(m) \sqsubseteq \top$ holds for every $m$, so
the triple would have to hold for every program. It does not: a program that
always diverges reaches no result, so $r_0$ is not reachable from $s_0$ or
$s_1$.

This is also an instance of the contrapositive relation of
\autoref{sec:contrapositive-relation}. The dual
$\overline{w}(p) := \neg\,w(\neg p)$ of an element of $\DSLPSt$ is an
element of $\ASPSt$, duality reverses the order and maps
$\theta^{\mathsf{St}}_{\vDSLP}(m)$ to $\theta^{\mathsf{St}}_{\vASP}(m)$, and
the over-approximate $\DSLPSt$ triple for $(\mathit{pre}, \mathit{post})$
holds exactly when the incorrectness triple for
$(\neg\mathit{pre}, \neg\mathit{post})$ does. So the dual of an embedding for
the first would be an embedding for the second; with
$\neg\mathit{pre} = \{s_0, s_1\}$ and $\neg\mathit{post} = \{r_0\}$, this is
the example of \autoref{sec:dms}.

\bibliographystyle{abbrvnaturl} %
\bibliography{fstar}

\begin{thebibliography}{40}
\providecommand{\natexlab}[1]{#1}

\bibitem[Aguirre et~al.(2022)Aguirre, Katsumata, and Kura]{AguirreKK22}
A.~Aguirre, S.~Katsumata, and S.~Kura.
\newblock \href {http://dx.doi.org/10.1017/S0960129522000330} {Weakest
  preconditions in fibrations}.
\newblock \emph{Math. Struct. Comput. Sci.}, 32\penalty0 (4):\penalty0
  472--510, 2022.

\bibitem[Ahman et~al.(2017)Ahman, Hri\c{t}cu, Maillard, Mart\'inez, Plotkin,
  Protzenko, Rastogi, and Swamy]{dm4free}
D.~Ahman, C.~Hri\c{t}cu, K.~Maillard, G.~Mart\'inez, G.~Plotkin, J.~Protzenko,
  A.~Rastogi, and N.~Swamy.
\newblock \href {http://dx.doi.org/10.1145/3009837.3009878} {Dijkstra monads
  for free}.
\newblock \iflongrefs{In \emph{44th ACM SIGPLAN Symposium on Principles of
  Programming Languages (POPL)}}\else{\emph{POPL}}\fi{}. \iflongrefs{Jan.
  }\fi{}2017.

\bibitem[Apt(1984)]{Apt84}
K.~R. Apt.
\newblock \href {http://dx.doi.org/10.1016/0304-3975(83)90066-X} {Ten years of
  {Hoare}'s logic: A survey -- part {II}: Nondeterminism}.
\newblock \emph{Theor. Comput. Sci.}, 28\penalty0 (1--2):\penalty0 83--109,
  1984.

\bibitem[Asada(2010)]{arrows-strong-monad}
K.~Asada.
\newblock \href {http://dx.doi.org/10.1145/1863597.1863607} {Arrows are strong
  monads}.
\newblock In V.~Capretta and J.~Chapman, editors, \emph{Proceedings of the 3rd
  {ACM} {SIGPLAN} Workshop on Mathematically Structured Functional Programming,
  MSFP@ICFP 2010, Baltimore, MD, USA, September 25, 2010}. 2010.

\bibitem[Ascari et~al.(2025)Ascari, Bruni, Gori, and Logozzo]{sil}
F.~Ascari, R.~Bruni, R.~Gori, and F.~Logozzo.
\newblock \href {http://dx.doi.org/10.1145/3720486} {Revealing sources of
  (memory) errors via backward analysis}.
\newblock \emph{Proc. {ACM} Program. Lang.}, 9\penalty0 ({OOPSLA1}):\penalty0
  1321--1348, 2025.

\bibitem[Ascari et~al.(2026)Ascari, Bruni, Gori, and Raad]{u-turn}
F.~Ascari, R.~Bruni, R.~Gori, and A.~Raad.
\newblock \href {http://dx.doi.org/10.1145/3776688} {U-turn: Enhancing
  incorrectness analysis by reversing direction}.
\newblock \emph{Proc. {ACM} Program. Lang.}, 10\penalty0 ({POPL}):\penalty0
  1326--1352, 2026.

\bibitem[Atkey(2011)]{Atkey2011}
R.~Atkey.
\newblock \href {http://dx.doi.org/10.1016/j.entcs.2011.02.014} {What is a
  categorical model of arrows?}
\newblock \emph{Electronic Notes in Theoretical Computer Science}, 229\penalty0
  (5):\penalty0 19--37, 2011.
\newblock Proceedings of the Second Workshop on Mathematically Structured
  Functional Programming (MSFP 2008).

\bibitem[Back and von Wright(1998)]{BackW98}
R.-J. Back and J.~von Wright.
\newblock \href {http://dx.doi.org/10.1007/978-1-4612-1674-2} {\emph{Refinement
  Calculus: A Systematic Introduction}}.
\newblock Graduate Texts in Computer Science. Springer, 1998.

\bibitem[de~Vries and Koutavas(2011)]{reverse-hoare}
E.~de~Vries and V.~Koutavas.
\newblock \href {http://dx.doi.org/10.1007/978-3-642-24690-6\_12} {Reverse
  {Hoare} logic}.
\newblock In G.~Barthe, A.~Pardo, and G.~Schneider, editors, \emph{Software
  Engineering and Formal Methods - 9th International Conference, {SEFM} 2011,
  Montevideo, Uruguay, November 14-18, 2011. Proceedings}. 2011.

\bibitem[Dijkstra(1975)]{Dijkstra75}
E.~W. Dijkstra.
\newblock \href {http://dx.doi.org/10.1145/360933.360975} {Guarded commands,
  nondeterminacy and formal derivation of programs}.
\newblock \iflongrefs{\emph{Commun. ACM}}\else{\emph{CACM}}\fi{}, 18\penalty0
  (8):\penalty0 453--457, \iflongrefs{Aug. }\fi{}1975.

\bibitem[Dijkstra(1976)]{Dijkstra97}
E.~W. Dijkstra.
\newblock \href {https://dl.acm.org/citation.cfm?id=550359} {\emph{A Discipline
  of Programming}}.
\newblock Prentice Hall PTR, Upper Saddle River, NJ, USA, 1st edition, 1976.

\bibitem[Dijkstra and Scholten(1990)]{DijkstraS90}
E.~W. Dijkstra and C.~S. Scholten.
\newblock \href {http://dx.doi.org/10.1007/978-1-4612-3228-5} {\emph{Predicate
  Calculus and Program Semantics}}.
\newblock Texts and Monographs in Computer Science. Springer-Verlag, 1990.

\bibitem[Ebner et~al.(2025)Ebner, Mart{\'{\i}}nez, Rastogi, Dardinier,
  Frisella, Ramananandro, and Swamy]{pulsecore}
G.~Ebner, G.~Mart{\'{\i}}nez, A.~Rastogi, T.~Dardinier, M.~Frisella,
  T.~Ramananandro, and N.~Swamy.
\newblock \href {http://dx.doi.org/10.1145/3729311} {{PulseCore}: An
  impredicative concurrent separation logic for dependently typed programs}.
\newblock \emph{Proc. {ACM} Program. Lang.}, 9\penalty0 ({PLDI}), 2025.

\bibitem[Gladshtein et~al.(2026)Gladshtein, P{\^\i}rlea, Zhao, Kurin, and
  Sergey]{loom}
V.~Gladshtein, G.~P{\^\i}rlea, Q.~Zhao, V.~Kurin, and I.~Sergey.
\newblock \href {http://dx.doi.org/10.1145/3776719} {Foundational multi-modal
  program verifiers}.
\newblock \emph{Proc. {ACM} Program. Lang.}, 10\penalty0 ({POPL}):\penalty0
  2233--2264, 2026.

\bibitem[Hasuo(2015)]{HasuoGWP15}
I.~Hasuo.
\newblock \href {http://dx.doi.org/10.1016/j.tcs.2015.03.047} {Generic weakest
  precondition semantics from monads enriched with order}.
\newblock \emph{Theor. Comput. Sci.}, 604:\penalty0 2--29, 2015.

\bibitem[Hoare(1969)]{Hoare69}
C.~A.~R. Hoare.
\newblock \href {http://dx.doi.org/10.1145/363235.363259} {An axiomatic basis
  for computer programming}.
\newblock \emph{Commun. {ACM}}, 12\penalty0 (10):\penalty0 576--580, 1969.

\bibitem[Hughes(2000)]{Hughes00}
J.~Hughes.
\newblock \href {http://dx.doi.org/10.1016/S0167-6423(99)00023-4} {Generalising
  monads to arrows}.
\newblock \emph{Sci. Comput. Program.}, 37\penalty0 (1--3):\penalty0 67--111,
  2000.

\bibitem[Jacobs(2015)]{Jacobs15}
B.~Jacobs.
\newblock \href {http://dx.doi.org/10.1016/j.tcs.2015.03.020} {{Dijkstra} and
  {Hoare} monads in monadic computation}.
\newblock \emph{Theor. Comput. Sci.}, 604:\penalty0 30--45, 2015.

\bibitem[Jensen(1978)]{Jensen78}
K.~Jensen.
\newblock \href
  {http://ojs.statsbiblioteket.dk/index.php/daimipb/article/download/6502/5621}
  {Connection between {Dijkstra's} predicate-transformers and denotational
  continuation-semantics}.
\newblock DAIMI Report Series 7.86, 1978.

\bibitem[Katsumata and Sato(2013)]{KatsumataS13}
S.~Katsumata and T.~Sato.
\newblock \href {http://dx.doi.org/10.1007/978-3-642-37075-5\_10} {Preorders on
  monads and coalgebraic simulations}.
\newblock \iflongrefs{In \emph{Foundations of Software Science and Computation
  Structures - 16th International Conference, {FOSSACS} 2013, Held as Part of
  the European Joint Conferences on Theory and Practice of Software, {ETAPS}
  2013, Rome, Italy, March 16-24, 2013.
  Proceedings}}\else{\emph{FOSSACS}}\fi{}, 2013.

\bibitem[Lack and Street(2012)]{LackStreet12}
S.~Lack and R.~Street.
\newblock \href {http://www.tac.mta.ca/tac/volumes/26/15/26-15abs.html} {Skew
  monoidales, skew warpings and quantum categories}.
\newblock \emph{Theory and Applications of Categories}, 26\penalty0
  (15):\penalty0 385--402, 2012.

\bibitem[Leino(2005)]{Leino05}
K.~R.~M. Leino.
\newblock \href {http://dx.doi.org/10.1016/j.ipl.2004.10.015} {Efficient
  weakest preconditions}.
\newblock \emph{Inf. Process. Lett.}, 93\penalty0 (6):\penalty0 281--288, 2005.

\bibitem[Li et~al.(2024)Li, Zilberstein, and Silva]{LiZS24}
J.~Li, N.~Zilberstein, and A.~Silva.
\newblock \href {http://dx.doi.org/10.48550/ARXIV.2411.00197} {Total outcome
  logic: Unified reasoning for a taxonomy of program logics}.
\newblock \emph{CoRR}, abs/2411.00197, 2024.

\bibitem[Lindley et~al.(2011)Lindley, Wadler, and Yallop]{LindleyWY11}
S.~Lindley, P.~Wadler, and J.~Yallop.
\newblock \href {http://dx.doi.org/10.1016/j.entcs.2011.02.018} {Idioms are
  oblivious, arrows are meticulous, monads are promiscuous}.
\newblock \emph{Electron. Notes Theor. Comput. Sci.}, 229\penalty0
  (5):\penalty0 97--117, 2011.

\bibitem[Maillard(2019)]{Maillard19}
K.~Maillard.
\newblock \href {https://theses.hal.science/tel-03441023} {\emph{Principles of
  Program Verification for Arbitrary Monadic Effects}}.
\newblock PhD thesis, Universit\'e Paris Sciences et Lettres, 2019.

\bibitem[Maillard et~al.(2019)Maillard, Ahman, Atkey, Mart\'{\i}nez,
  Hri\c{t}cu, Rivas, and Tanter]{dm4all}
K.~Maillard, D.~Ahman, R.~Atkey, G.~Mart\'{\i}nez, C.~Hri\c{t}cu, E.~Rivas, and
  E.~Tanter.
\newblock \href {http://dx.doi.org/10.1145/3341708} {Dijkstra monads for all}.
\newblock \iflongrefs{\emph{Proc. ACM Program.
  Lang.}}\else{\emph{PACMPL}}\fi{}, 3\penalty0 (ICFP), \iflongrefs{July
  }\fi{}2019.

\bibitem[M{\"{o}}ller et~al.(2021)M{\"{o}}ller, O'Hearn, and Hoare]{Moller21}
B.~M{\"{o}}ller, P.~O'Hearn, and T.~Hoare.
\newblock \href {http://dx.doi.org/10.1007/978-3-030-88701-8_20} {On algebra of
  program correctness and incorrectness}.
\newblock In \emph{Relational and Algebraic Methods in Computer Science - 19th
  International Conference, RAMiCS 2021, Marseille, France, November 2-5, 2021,
  Proceedings}. 2021.

\bibitem[Nanevski et~al.(2008{\natexlab{a}})Nanevski, Morrisett, Shinnar,
  Govereau, and Birkedal]{ynot-icfp08}
A.~Nanevski, G.~Morrisett, A.~Shinnar, P.~Govereau, and L.~Birkedal.
\newblock \href {http://dx.doi.org/10.1145/1411204.1411237} {{Ynot}: dependent
  types for imperative programs}.
\newblock \iflongrefs{In J.~Hook and P.~Thiemann, editors, \emph{Proceeding of
  the 13th {ACM} {SIGPLAN} international conference on Functional programming,
  {ICFP} 2008, Victoria, BC, Canada, September 20-28,
  2008}}\else{\emph{ICFP}}\fi{}. 2008{\natexlab{a}}.

\bibitem[Nanevski et~al.(2008{\natexlab{b}})Nanevski, Morrisett, and
  Birkedal]{nmb08htt}
A.~Nanevski, J.~G. Morrisett, and L.~Birkedal.
\newblock \href {http://ynot.cs.harvard.edu/papers/jfpsep07.pdf} {Hoare type
  theory, polymorphism and separation}.
\newblock \iflongrefs{\emph{J. Funct. Program.}}\else{\emph{JFP}}\fi{},
  18\penalty0 (5-6):\penalty0 865--911, 2008{\natexlab{b}}.

\bibitem[O'Hearn(2020)]{OHearn20}
P.~W. O'Hearn.
\newblock \href {http://dx.doi.org/10.1145/3371078} {Incorrectness logic}.
\newblock \emph{Proc. {ACM} Program. Lang.}, 4\penalty0 ({POPL}):\penalty0
  10:1--10:32, 2020.

\bibitem[Paterson(2003)]{paterson2003arrows}
R.~Paterson.
\newblock \href {https://www.staff.city.ac.uk/~ross/papers/fop.html} {Arrows
  and computation}.
\newblock In J.~Gibbons and O.~de~Moor, editors, \emph{The Fun of Programming},
  pages 201--222. Palgrave, 2003.

\bibitem[Silver and Zdancewic(2021)]{SilverZ21}
L.~Silver and S.~Zdancewic.
\newblock \href {http://dx.doi.org/10.1145/3434307} {Dijkstra monads forever:
  termination-sensitive specifications for interaction trees}.
\newblock \emph{Proc. {ACM} Program. Lang.}, 5\penalty0 ({POPL}):\penalty0
  1--28, 2021.

\bibitem[Swamy et~al.(2013)Swamy, Weinberger, Schlesinger, Chen, and
  Livshits]{fstar-pldi13}
N.~Swamy, J.~Weinberger, C.~Schlesinger, J.~Chen, and B.~Livshits.
\newblock \href
  {https://www.microsoft.com/en-us/research/publication/verifying-higher-order-programs-with-the-dijkstra-monad/}
  {Verifying higher-order programs with the {Dijkstra} monad}.
\newblock \iflongrefs{In \emph{Proceedings of the 34th annual ACM SIGPLAN
  conference on Programming Language Design and
  Implementation}}\else{\emph{PLDI}}\fi{}, 2013.

\bibitem[Swamy et~al.(2016)Swamy, Hri\c{t}cu, Keller, Rastogi, Delignat-Lavaud,
  Forest, Bhargavan, Fournet, Strub, Kohlweiss, Zinzindohou\'e, and
  {Zanella-B\'eguelin}]{mumon}
N.~Swamy, C.~Hri\c{t}cu, C.~Keller, A.~Rastogi, A.~Delignat-Lavaud, S.~Forest,
  K.~Bhargavan, C.~Fournet, P.-Y. Strub, M.~Kohlweiss, J.-K. Zinzindohou\'e,
  and S.~{Zanella-B\'eguelin}.
\newblock \href {https://www.fstar-lang.org/papers/mumon/} {Dependent types and
  multi-monadic effects in {F*}}.
\newblock \iflongrefs{In \emph{43rd ACM SIGPLAN-SIGACT Symposium on Principles
  of Programming Languages (POPL)}}\else{\emph{POPL}}\fi{}. \iflongrefs{Jan.
  }\fi{}2016.

\bibitem[Swierstra and Baanen(2019)]{SwierstraB19}
W.~Swierstra and T.~Baanen.
\newblock \href {http://dx.doi.org/10.1145/3341707} {A predicate transformer
  semantics for effects (functional pearl)}.
\newblock \iflongrefs{\emph{Proc. {ACM} Program.
  Lang.}}\else{\emph{PACMPL}}\fi{}, 3\penalty0 ({ICFP}):\penalty0
  103:1--103:26, 2019.

\bibitem[Verbeek et~al.(2025)Verbeek, Sefat, Fu, and Ravindran]{uil}
F.~Verbeek, M.~S. Sefat, Z.~Fu, and B.~Ravindran.
\newblock \href {http://dx.doi.org/10.1145/3704850} {On extending incorrectness
  logic with backwards reasoning}.
\newblock \emph{Proc. {ACM} Program. Lang.}, 9\penalty0 ({POPL}):\penalty0
  391--415, 2025.

\bibitem[Verscht and Kaminski(2025)]{VerschtK25}
L.~Verscht and B.~L. Kaminski.
\newblock \href {http://dx.doi.org/10.1145/3704896} {A taxonomy of {Hoare}-like
  logics: Towards a holistic view using predicate transformers and {Kleene}
  algebras with top and tests}.
\newblock \emph{Proc. {ACM} Program. Lang.}, 9\penalty0 ({POPL}):\penalty0
  1782--1811, 2025.

\bibitem[Xia et~al.(2020)Xia, Zakowski, He, Hur, Malecha, Pierce, and
  Zdancewic]{XiaZHHMPZ20}
L.~Xia, Y.~Zakowski, P.~He, C.~Hur, G.~Malecha, B.~C. Pierce, and S.~Zdancewic.
\newblock \href {http://dx.doi.org/10.1145/3371119} {Interaction trees:
  representing recursive and impure programs in {Coq}}.
\newblock \emph{Proc. {ACM} Program. Lang.}, 4\penalty0 ({POPL}):\penalty0
  51:1--51:32, 2020.

\bibitem[Zhang and Kaminski(2022)]{ZhangK22}
L.~Zhang and B.~L. Kaminski.
\newblock \href {http://dx.doi.org/10.1145/3527331} {Quantitative strongest
  post: a calculus for reasoning about the flow of quantitative information}.
\newblock \emph{Proc. {ACM} Program. Lang.}, 6\penalty0 ({OOPSLA1}):\penalty0
  1--29, 2022.

\bibitem[Zilberstein et~al.(2023)Zilberstein, Dreyer, and
  Silva]{ZilbersteinDS23}
N.~Zilberstein, D.~Dreyer, and A.~Silva.
\newblock \href {http://dx.doi.org/10.1145/3586045} {Outcome logic: {A}
  unifying foundation for correctness and incorrectness reasoning}.
\newblock \emph{Proc. {ACM} Program. Lang.}, 7\penalty0 ({OOPSLA1}):\penalty0
  522--550, 2023.

\end{thebibliography}

\end{document}